\theoremstyle{remark}
\newtheorem{lemma}{Lemma}
\newtheorem{mydef}{Definition}
\newtheorem{theorem}{Theorem}
\newtheorem{assumption}{Assumption}
\newtheorem{requirement}{Requirement}
\theoremstyle{remark}
\newtheorem*{justremark}{Remark}
\newcommand{\remarksymbol}{$\triangle$}
\newcommand{\chan}[1]{\ensuremath{\mathrel{\xrightarrow{#1}}} }
\newcommand{\insecChan}[1]{\ensuremath{\mathrel{\circ\mkern-5mu\xrightarrow{#1}\mkern-5mu\circ}}}
\newcommand{\authChan}[1]{\ensuremath{\mathrel{\bullet\mkern-7mu\xrightarrow{#1}\mkern-5mu\circ}}}
\newcommand{\confChan}[1]{\ensuremath{\mathrel{\circ\mkern-7mu\xrightarrow{#1}\mkern-5mu\bullet}}}
\newcommand{\secChan}[1]{\ensuremath{\mathrel{\bullet\mkern-7mu\xrightarrow{#1}\mkern-5mu\bullet}}}
\newcommand{\insecChanDefault}{\ensuremath{\mathrel{\circ\mkern-5mu\xrightarrow{\default}\mkern-5mu\circ}}}
\newcommand{\insecChanReliable}{\ensuremath{\mathrel{\circ\mkern-5mu\xrightarrow{\reliable}\mkern-5mu\circ}}}
\newcommand{\authChanReliable} {\ensuremath{\mathrel{\bullet\mkern-7mu\xrightarrow{\reliable}\mkern-5mu\circ}}}
\newcommand{\confChanReliable} {\ensuremath{\mathrel{\circ\mkern-7mu\xrightarrow{\reliable}\mkern-5mu\bullet}}}
\newcommand{\secChanReliable} {\ensuremath{\mathrel{\bullet\mkern-7mu\xrightarrow{\reliable}\mkern-5mu\bullet}}}
\newcommand{\insecChanObservable}{\ensuremath{\mathrel{\circ\mkern-5mu\xrightarrow{\observable}\mkern-5mu\circ}}}
\newcommand{\authChanObservable} {\ensuremath{\mathrel{\bullet\mkern-7mu\xrightarrow{\observable}\mkern-5mu\circ}}}
\newcommand{\confChanObservable} {\ensuremath{\mathrel{\circ\mkern-7mu\xrightarrow{\observable}\mkern-5mu\bullet}}}
\newcommand{\secChanObservable} {\ensuremath{\mathrel{\bullet\mkern-7mu\xrightarrow{\observable}\mkern-5mu\bullet}}}
\newcommand{\edgeInsec}{{Circle[open]}-boebelileer}
\newcommand{\edgeAuth}{Circle-boebelileer}
\newcommand{\edgeSec}{Circle-boebelivoll}
\newcommand{\accountabilityCite}{\emph{K{\"{u}}sters et al.}~\cite{accountability}\xspace}
\newcommand{\platform}{\textit{P}\xspace} 
\newcommand{\server}{\textit{Auth}\xspace} 
\newcommand{\BB}{\textit{BB}\xspace}
\newcommand{\human}{\textit{H}\xspace}
\newcommand{\device}{\textit{D}\xspace} 
\newcommand{\auditor}{\textit{A}\xspace}
\newcommand{\provoterCast}{\textit{VoterC}\xspace}
\newcommand{\provoterAbstain}{\textit{VoterA}\xspace}
\newcommand{\proauth}{\textit{AuthP}\xspace}
\newcommand{\timeliness}{\textit{TimelyP}\xspace}
\newcommand{\protime}{\textit{TimelyP}\xspace}
\newcommand{\uniqueness}{\textit{Uniqueness}\xspace}
\newcommand{\indivVerif}{$\textit{IndivVerif}$\xspace}
\newcommand{\functional}{$\textit{Func}$\xspace}
\newcommand{\TalliedAsRecorded}{\textit{Tallied-as-Recorded}\xspace}
\newcommand{\verifEnd}{\textit{End2EndVerif}\xspace}
\newcommand{\receiptfree}{\textit{Receipt-freeness}\xspace}
\newcommand{\DRprop}[5]{\textit{DR}(#1,#2,#3,#4,#5)}
\newcommand{\eligibilityVerif}{\textit{EligVerif}\xspace}
\newcommand{\EligibilityVerif}{\textit{EligVerif}\xspace}
\newcommand{\adv}{\textit{adv}\xspace}
\newcommand{\advModel}[2]{ #1^{#2}}
\newcommand{\Ballot}{\textit{Ballot}}
\newcommand{\ballot}{\textit{b}}
\newcommand{\ballotPrime}{\tilde{b}}
\newcommand{\bs}{[b]}
\newcommand{\channel}{c}
\newcommand{\channeltype}{\textit{chanType}\xspace}
\newcommand{\CastBy}{\textit{castBy}}
\newcommand{\confirmation}{\textit{c}}
\newcommand{\corresponds}{\textit{Corr}}
\newcommand{\drequal}{\textit{DRequal}}
\newcommand{\drsimilar}{\textit{DRsimilar}}
\newcommand{\End}{\textit{End}}
\newcommand{\encp}[3]{\{#1\}_{#2}^{#3}}
\newcommand{\Evidence}{\textit{Ev}\xspace} 
\newcommand{\evidence}{\textit{ev}} 
\newcommand{\false}{\textit{false}}
\newcommand{\fresh}{\textit{fresh}}
\newcommand{\fullTime}{\textit{TimelyDR}\xspace}
\newcommand{\honest}{\textit{hon}}
\newcommand{\honestNetwork}{\textit{honestNetw}}
\newcommand{\Hs}{[H]}
\newcommand{\knows}{\textit{knows}}
\newcommand{\pair}[1]{\langle #1\rangle}
\newcommand{\prot}{\textit{Pr}}
\newcommand{\pk}{\textit{pk}}
\newcommand{\pkD}{\textit{pk}_{\device}}
\newcommand{\pkDused}{\textit{pk}_{\device}^u} 
\newcommand{\pkP}{\textit{pk}_P}
\newcommand{\pkS}{\textit{pk}_{\server}}
\newcommand{\pubTrace}{\textit{pubtr}}
\newcommand{\pubChan}{\textit{Pub}}
\newcommand{\receive}{\textit{rec}}
\newcommand{\reliable}{\textit{r}} 
\newcommand{\observable}{\textit{u}} 
\newcommand{\default}{\textit{d}}
\newcommand{\deccp}{\textit{decp}}
\newcommand{\sd}{.\;} 
\newcommand{\sone}{\textit{D1}\xspace}
\newcommand{\stwo}{\textit{D2}\xspace}
\newcommand{\send}{\textit{send}}
\newcommand{\sk}{\textit{sk}}
\newcommand{\skD}{\textit{sk}_\textit{D}}
\newcommand{\skS}{\textit{sk}_{\server}}
\newcommand{\smallertopo}{\sqsubseteq}
\newcommand{\smallertopoNotEqual}{\sqsubset}
\newcommand{\subterm}{\vdash}
\newcommand{\Tally}{\textit{Tally}}
\newcommand{\topo}{T} 
\newcommand{\topos}{T_S} 
\newcommand{\topoi}{T_I} 
\newcommand{\topom}{T_M} 
\newcommand{\tr}{\mathit{tr}}
\newcommand{\trP}{\mathit{tr}} 
\newcommand{\true}{\textit{true}}
\newcommand{\trust}{t}
\newcommand{\trustassumption}{\textit{trustType}\xspace}
\newcommand{\trustedforward}{\textit{trustFwd}\xspace}
\newcommand{\trustedreply}{\textit{trustRpl}\xspace}
\newcommand{\TR}{\textit{TR}}
\newcommand{\verdict}{\textit{Faulty}}
\newcommand{\verdictNoVote}{\textit{Faulty}}
\newcommand{\ver}{\textit{ver}} 
\newcommand{\verlist}{\textit{ver}_\textit{L}} 
\newcommand{\verifyIV}{\textit{verifyH}} 
\newcommand{\verifyCast}{\textit{verifyC}} 
\newcommand{\verifyNoVote}{\textit{verifyA}} 
\newcommand{\vote}{\textit{v}} 
\newcommand{\Vote}{\textit{Vote}} 
\newcommand{\vs}{[v]} 
\newcommand{\zkp}{\textit{zkp}} 
\newcommand{\zkproof}{\textit{p}}
\newcommand{\BBpkS}{\textit{BB}_{pk}} 
\newcommand{\BBpkP}{\textit{BB}_{pkP}} 
\newcommand{\BBpkD}{\textit{BB}_{pkD}} 
\newcommand{\BBballotWoSignature}{\textit{BB}_{\textit{woS}}}
\newcommand{\BBrecorded}{\textit{BB}_{\textit{rec}}}
\newcommand{\BBtallied}{\textit{BB}_{\textit{tal}}}
\newcommand{\BBH}{\textit{BB}_{\textit{H}}}
\newcommand{\verifyProof}{\textit{verify}_\textit{A2}} 
\newcommand{\verifyValidSig}{\textit{verify}_\textit{A1}}
\newcommand{\verifyzkp}{\textit{ver}_\textit{zk}}
\newcommand{\hShH}{\server^+\human^+}
\newcommand{\mShH}{\human^+}
\newcommand{\hSmH}{\server^+}
\newcommand{\protocolnameInst}{\textit{MixVote}\xspace} 
\newcommand{\protInst}{\prot_\textit{MV}\xspace}
\newcommand{\topoInst}{\topo_\textit{MV}\xspace}
\begin{document}

\title{Dispute Resolution in Voting}
\author{\IEEEauthorblockN{David Basin}
\IEEEauthorblockA{\textit{Department of Computer Science} \\
\textit{ETH Zurich}\\
basin@inf.ethz.ch}
\and
\IEEEauthorblockN{Sa\v{s}a Radomirovi\'{c}}
\IEEEauthorblockA{\textit{Department of Computer Science} \\
\textit{Heriot-Watt University}\\
sasa.radomirovic@hw.ac.uk}
\and
\IEEEauthorblockN{Lara Schmid}
\IEEEauthorblockA{\textit{Department of Computer Science} \\
\textit{ETH Zurich}\\
schmidla@inf.ethz.ch}
}

\maketitle

\begin{abstract}

In voting, disputes arise when a voter claims that the voting authority is dishonest and did not correctly process his ballot while the authority claims to have followed the protocol.
A dispute can be resolved if any third party can unambiguously determine who is right. 
We systematically characterize all relevant disputes for a generic, practically relevant, class of voting protocols.
Based on our characterization, we propose a new definition of dispute resolution for voting that accounts for the possibility that both voters and the voting authority can make false claims and that voters may abstain from voting.

A central aspect of our work is timeliness: a voter should possess the evidence required to resolve disputes no later than the election's end.
We characterize what assumptions are necessary and sufficient for timeliness in terms of a communication topology for our voting protocol class. 
We formalize the dispute resolution properties and communication topologies symbolically.
This provides the basis for verification of dispute resolution for a broad class of protocols.
To demonstrate the utility of our model, we analyze a mixnet-based voting protocol and prove that it satisfies dispute resolution as well as verifiability and receipt-freeness.
To prove our claims, we combine machine-checked proofs with traditional pen-and-paper proofs.

\end{abstract}
 \smallskip

\section{Introduction}
For a society to accept a voting procedure, the public must believe that the system implementing it works as intended, that is, the system must be \emph{trustworthy}.
This is essential as elections involve participants from opposing political parties that may neither trust each other nor the election authority. 
Nevertheless, there must be a consensus on the final outcome, including whether the election is valid.
This requires that voters and auditors can \emph{verify} that the protocol proceeds as specified and detect any manipulations, even if they do not trust the authority running the election.
To achieve this, the information relevant for checking verifiability may be published in a publicly accessible database, known as the \emph{bulletin board}.

\paragraph{The need for dispute resolution}
Ballots are cast privately in elections.
Thus only the voters themselves know if and how they voted.
If a voter claims that his ballot is incorrectly recorded or that he was hindered in recording his ballot, no other party can know, a priori, whether the voter is lying or if there was a problem for which the voting authority is responsible.
We call such unresolved situations \emph{disputes}.

When a dispute occurs, the honest parties must be protected.
That is, an honest voter who detects some manipulation must be able to convince third parties that the authority was dishonest.\footnote{Here dishonesty includes all deviations of the authority from the protocol specification, both due to corruption or to errors.}
In particular, when a voter checks whether his cast ballot is correctly recorded, then either this is the case (respectively, no ballot is recorded when he abstained from voting) or he can convince others that the authority was dishonest. 
Another problem is when a voter cannot even proceed in the protocol to perform such checks, for instance when he is not provided with a necessary confirmation. 
Hence, a timeliness guarantee must ensure by the election's end that an honest voter's ballot is correctly recorded or there is evidence that proves to any third party that the authority is dishonest.
Finally, in addition to protecting the honest voters from a dishonest authority, the honest authority must be protected from voters making false accusations.
That is, when the authority is honest, no one should be able to convince others of the contrary.

\paragraph{State of the art}
The vast majority of formal analyses of remote e-voting protocols do not consider dispute resolution at all, e.g.,~\cite{helios,BeleniosRF,civitas,belenios}.
Works that recognize the importance of dispute resolution~\cite{pubEvidence,BMDsStark} or that take aspects of it into account when proposing poll-site~\cite{Forensics,vvote2,scantegrity2b,scantegrity2a,vvote,audiotegrity,scantegrity3} or remote~\cite{remotegrity} voting protocols, reason about it only informally.
The most closely related prior works define different notions of \emph{accountability}~\cite{accountabilityBruni,accountabilityCSF19,accountability} that formalize which agents should be held accountable when a protocol fails to satisfy some properties.
These definitions are very general, but have 
been instantiated for selected voting protocols~\cite{accountabilityBruni,accountability,clash,sElect}. 
The accountability properties satisfied by these protocols do not guarantee the resolution of all disputes considered by our dispute resolution properties.
We provide a detailed comparison of accountability and our properties in Section~\ref{sec:relatedWork}.

\paragraph{Contributions}
Our work provides a new foundation for characterizing, reasoning about, and establishing dispute resolution in voting.
First, we systematically reason about what disputes can arise in voting for a generic, practically relevant, class of voting protocols. 
Our class comprises both remote and poll-site voting protocols that can be electronic or paper-based.
We then focus on disputes regarding whether the published recorded ballots correctly represent the ballots cast by the voters.
Based on our classification, we formally define dispute resolution properties in a symbolic formalism amenable to automated verification using the Tamarin tool~\cite{tamarin,tamarin2}.
This enables the analysis of a broad class of protocols with respect to dispute resolution. 
Moreover, we identify an important new property, which we call \emph{timeliness}, requiring that when a voter's ballot is recorded incorrectly he has convincing evidence of this by the election's end.
This property ensures the resolution of disputes that could not be resolved unambiguously in prior work.

Second, we demonstrate that timeliness can only be guaranteed under strong assumptions (for example, some messages must not be lost on the network) by systematically analyzing what communication channels and trust assumptions are necessary and sufficient to satisfy this property.
The result is a complete characterization of all topologies in our voting protocol class for which timeliness holds for some protocol.
Such a characterization can guide the design of new voting protocols where timeliness should hold, e.g., by identifying and thereby eliminating settings where timeliness is impossible.
We formally verify the claimed properties using proofs constructed by Tamarin and pen-and-paper proofs.

Finally, 
to simplify establishing dispute resolution in practice, we introduce a property, called \uniqueness, that can be checked by everyone and guarantees that each recorded ballot was cast by a unique voter. 
We prove for protocols where voters can cast at most one ballot that \uniqueness implies guarantees for voters who abstain from voting.
This has the practical consequence that in many protocols, the corresponding guarantees can be proven more easily.
We then present as a case study a mixnet-based voting protocol with dispute resolution and prove that our introduced properties hold, as well as standard voting properties such as verifiability and receipt-freeness.

Overall, our results can be used as follows.
In addition to specifying what messages are exchanged between the different agents, a voting protocol in our class specifies (i) how the election's result is computed, (ii) which verification steps are performed, and (iii) when the authority conducting the election is considered to have behaved dishonestly.
For (i), it is required that each protocol specifies a function $\Tally$.
For (ii), as voters must be able to check that no ballots were wrongly recorded for them, a function $\CastBy$ must map each ballot to the voter that has (presumably) cast it. 
Only if this is defined can a voter notice when a ballot was recorded for him that he has not cast.
Finally, dispute resolution requires that a protocol defines a dispute resolution procedure such that everyone can agree on (iii). 
For this purpose, a protocol may specify a set of executions $\verdict$ where the authority is considered to have behaved dishonestly and which 
only depends on public information and can therefore be evaluated by everyone.

Given a protocol with a dispute resolution procedure and a
communication topology, our topology characterization can be used to quickly conclude if the given topology is insufficient to achieve the timeliness aspect of dispute resolution. 
When this is the case, one can immediately conclude that not all dispute resolution properties can be satisfied.
When this is not the case, our formal definitions can be used to analyze whether all dispute resolution properties indeed hold in the protocol. 
Thereby, in protocols where voters can cast at most one ballot, the guarantees for voters who abstain can be established directly or by showing \uniqueness and inferring them by our results.

\paragraph{Organization}
We describe our protocol model in Section~\ref{sec:protocolModel} and the class of voting protocols for which we define our properties in Section~\ref{sec:class}.
In Section~\ref{sec:disputeResolution}, we classify disputes and define our dispute resolution properties.
We then analyze in Section~\ref{sec:Timeliness} the communication topologies where timeliness can be achieved.
In Section~\ref{sec:uniqueness}, we show how dispute resolution can be established in practice, introduce \uniqueness, and present our case study.
Finally, we discuss related work in Section~\ref{sec:relatedWork} and conclude in Section~\ref{sec:conclusion}.

\section{Protocol model and system setup}\label{sec:protocolModel}

As is standard in model-checking, we model the protocol and adversary as a (global) transition system. 
Concretely, we use a formalism that also serves as the input language to the Tamarin tool~\cite{tamarin}. 
Our model uses abstractions that ease the specification of communication channels with security properties and trust assumptions.
These kinds of abstractions are now fairly standard in protocol specifications.
We complement existing abstractions \cite{hisp} (e.g., authentic and secure channels and parties that satisfy different kinds of trust assumptions) with new abstractions that are relevant for dispute resolution (e.g., reliable channels described in Section~\ref{subsubsec:channelTypes}).
Our protocol model is inspired by the model in \cite{alethea} used for e-voting.
We first introduce some terminology relevant for voting protocols and then our protocol model.

\paragraph{Terminology}
We distinguish between 
\emph{votes} and \emph{ballots}.
Whereas a vote is a voter's choice in plain text, a ballot contains the vote and possibly additional information. 
The ballots' exact design depends on the voting protocol, but it usually consists of the vote cryptographically transformed to ensure the vote's authenticity or confidentiality.
When a ballot is sent by the voter, we say it is \emph{cast}.
We denote by the \emph{(voting) authority} the entity responsible for collecting and tallying all voters' ballots. 
Usually, both the list of collected ballots, called the \emph{recorded ballots}, and the \emph{votes in the final tally}
are published on a \emph{public bulletin board} that can be accessed by voters and auditors to verify the election's result.

\subsection{Notation and term algebra}
\label{subsec:notationAndTermAlgebra}
We write $[x_i]_{i\in\{1,\dots,n\}}$ to denote a list of $n$ messages of the same kind. 
Similarly, we write $[f(x_i,y_i)]_{i\in\{1,\dots,n\}}$ for a list whose elements have the same form, but may have different values.
When the index set is clear from context, we omit the indices, e.g., we write $[x]$ and $[f(x,y)]$ for the above lists, and we write $[x]_i$ and $[f(x,y)]_i$ for the $i$th element in the lists.
Also, we write $x:=y$ for the assignment of $y$ to~$x$.

Our model is based on a term algebra $\mathcal{T}$ that is generated from the application of functions in a signature $\Sigma$ to a set of names $\mathcal{N}$ and variables $\mathcal{V}$. 
We use the standard notation and equational theory, given in~\cite[Appendix A]{hisp}. 
The symbols we use here are $\langle p_1,p_2 \rangle$ for pairing two terms $p_1$ and $p_2$, $\textit{fst}(p)$ and $\textit{snd}(p)$ for the first and second projection of the pair $p$, 
$\pk(x)$ for the public key (or the verification key) associated with a private key (or signing key)~$x$, and $\{m\}_{\sk}$ for a message $m$ signed with the signing key $\sk$.
The equational theory contains standard equations, for example 
pairing and projection obey $\textit{fst}(\langle p_1,p_2 \rangle)=p_1$ and
$\textit{snd}(\langle p_1,p_2 \rangle)=p_2$.
We sometimes omit the brackets $\langle \rangle$ when tupling is clear from the context.

We extend the term algebra from~\cite{hisp} with the following function symbols and equations.
We use $\ver(s,k)$ for signature verification, where $s$ is a signed message and $k$ the verification key.
When the signature in $s$ is verified with the (matching) verification key $k$, the function returns the underlying signed message~$m$ and otherwise it returns a default value $\perp$.
This is modeled by the equations $\ver(s,k)=m$, if $s=\{m\}_{\sk}$ and $k=\pk(\sk)$, and $\ver(s,k)=\perp$ otherwise. 

Moreover, we use the function $\Tally$ to model the tallying process in voting. 
Given a list of ballots $\bs$, $\Tally(\bs)$ denotes the computation of the votes $\vs$ in the final tally, possibly including pre-processing steps such as filtering out invalid ballots. The exact definition of $\Tally$ depends on the protocol.

Finally, $\CastBy(\ballot)$ denotes the voter who is considered to be the sender of a ballot $\ballot$.
As with $\Tally$, $\CastBy(\ballot)$ depends on the protocol, in particular on the ballots' design.
For example, in a voting protocol where a ballot $\ballot$ contains a voter's identifier (e.g., a code or pseudonym), $\CastBy(\ballot)$ maps the ballot $\ballot$ to the voter with the identifier included in $\ballot$.
In contrast, in a voting protocol where ballots contain a signature associated with a voter, $\CastBy(\ballot)$ maps each ballot to the voter associated with the signature contained in $\ballot$.

As $\Tally$ and $\CastBy$ are protocol dependent, each concrete protocol specification must define the equations that they satisfy, i.e., extend the term algebra's equational theory with equations characterizing their properties.
Note that the functions may not be publicly computable. 
For example, if only a voter $H$ knows which identifier or signature
belongs to him, then other parties are not able to conclude that $\CastBy(\ballot)=H$.

\subsection{Protocol specification}
A protocol consists of multiple \emph{(role) specifications} that define the behavior of the different communicating roles.
We model protocols as transition systems that give rise to a trace semantics.
Each role specification defines the role's sent and received messages and \emph{signals} that are recorded, ordered sequentially.
A signal is a term with a distinguished top-most function symbol.
Signals have no effect on a protocol's execution. They merely label events in executions to facilitate specifying the protocol's security properties.
We distinguish \emph{explicit signals} that are defined in the specification and \emph{implicit signals} that are recorded during the protocol execution but are not explicitly included in the specification.
We explain how we depict protocols as message sequence charts in Section~\ref{subsubsec:mixnet:protocol}.

Roles may possess terms in their \emph{initial knowledge}, which is denoted by the explicit signal $\knows$.
We require that in a specified role $R$, any message sent by $R$ must either occur in $R$'s initial knowledge or be deducible from the messages that $R$ initially knows or received in a previous protocol step. 
Deducibility is defined by the equational theory introduced above.
As is standard, whenever $R$ is specified to receive a term that it already has in its knowledge, an agent who instantiates this role will compare the two terms and proceed with the protocol only when they are equal.

\subsection{Adversary model and communication topology}
We depict the system setup as a \emph{topology graph} $G=(V,E)$, where the set of vertices $V$ denotes the roles and the set of (directed) edges $E \subseteq (V\times V)$ describes the available communication channels between roles (see e.g. Figure~\ref{fig:GenericSystemSetup}).
For two graphs $G_1=(V_1,E_1)$ and $G_2=(V_2,E_2)$, we define the standard subgraph relation $G_1\subseteq_G G_2$ as $V_1 \subseteq V_2 \wedge E_1\subseteq E_2$.

By default, we consider a Dolev-Yao adversary~\cite{DolevYao} who has full control over the network, learns all messages sent over the network, and can construct and send messages herself.
Additionally, the adversary can \emph{compromise} participating agents to learn their secrets
and control their behavior.
In a concrete system model, we limit the adversary by trust and channel assumptions. 
A \emph{(communication) topology}~\cite{hisp}
$\topo=(V,E,\trust,\channel)$ specifies the system setup by a graph $G(\topo)=(V,E)$,
trust assumptions by a function $\trust:V\mapsto \trustassumption$ mapping vertices to trust types,
and channel assumptions by a function $\channel:E\mapsto \channeltype$ mapping edges to channel types,
which denote
a channel with certain properties. 
The types \trustassumption and \channeltype are specified in Section~\ref{subsec:TrustAssumumptionAndChannelTypes}, after the execution model.
When $\channel$ is applied to an edge $(A,B)$, we omit duplicate brackets and write $\channel(A,B)$ instead of $\channel((A,B))$.

\subsection{Execution model, signals, properties, and assumptions}
\label{subsec:executionModel}
During protocol execution, roles are instantiated by agents (i.e., the parties involved in the protocol) and we consider all possible interleavings of agents' runs in parallel with the adversary.
A \emph{trace} $\tr$ is a finite sequence of multisets of the signals associated with an interleaved execution.
We denote by $\TR(\prot,\topo)$ the set of all traces of a protocol~$\prot$ that is \emph{run in the topology $\topo$}, i.e., run in parallel with the adversary defined by the topology $\topo$.\footnote{$\topo$ may specify channels that are never used by $\prot$. Also, $\prot$ may specify channels that are not available in $\topo$. In the latter case, the corresponding protocol steps cannot be executed and will not occur in the execution.}
We write $\tr_1\cdot \tr_2$ for the concatenation of two traces $\tr_1$ and $\tr_2$.

As previously explained, a trace may contain implicit signals, which are recorded during execution but omitted from the protocol specification for readability, and explicit signals (containing auxiliary information) that we explicitly add to the protocol specification.
Implicitly, when an agent $A$ sends a message $m$ (presumably) to $B$, the signal $\send(A,B,m)$ is recorded in the trace.
Similarly, when an agent $B$ receives $m$ (presumably) from $A$, $\receive(A,B,m)$ is recorded.
Furthermore, the signal $K(m)$ denotes the adversary's knowledge and is recorded whenever the adversary learns a term $m$ and $\honest(A)$ is recorded when an honest agent $A$ instantiates a role.

Furthermore, we use the explicit signal 
$\verifyCast( \human,\ballot)$ to indicate that an honest agent checks whether the ballot $\ballot$, which was cast by the honest voter $H$, is recorded correctly ($C$ stands for \emph{cast} and indicates that $H$ cast a ballot). 
In protocols where voters can cast multiple ballots, this signal can occur multiple times for the same voter.
Moreover, the signal $\verifyNoVote(\human,b_H)$ is recorded when an honest agent checks for an honest voter $\human$ who has cast the set of ballots $b_H$, that no ballots other than those in $b_H$ are recorded for $H$.
When this check is done for $H$ who abstained, then $b_H=\emptyset$ ($A$ stands for the fact that $H$ \emph{abstained} from voting).
$\verifyCast$ and $\verifyNoVote$ may be defined such that they can be computed by a machine but not by a human voter, e.g., when they require cryptographic computations.
We thus leave it open whether they are performed by the voter $H$ or by another agent such as a helper device.

The explicit signal $\knows(A,x)$ is recorded when an agent $A$ has a term $x$ in its initial knowledge.
The explicit signals $\Vote(H,\vote)$ and $\Ballot(H,\ballot)$ respectively record an honest voter $H$'s vote $\vote$ and cast ballot $\ballot$. 
The former is recorded when $H$ decides what to vote for and the latter is recorded when $H$ casts his ballot.
Finally, the explicit signal $\BB(m)$ denotes that a message $m$ is published on the bulletin board.
We use subscripts to distinguish the signals recorded when different messages are published on the bulletin board.
For example, the signals $\BBrecorded(\bs)$ and $\BBtallied(\vs)$ denote that the recorded ballots $\bs$ and the votes in the final tally $\vs$ are published.
We will introduce further signals as we need them.

We next define two kinds of trace properties.
The first are classical \emph{security properties}, which are specified as sets of traces.
A protocol $\prot$ run in the topology $\topo$, satisfies a security property $\mathcal{S}_S$, if $\TR(\prot,\topo) \subseteq \mathcal{S}_S$.
To reason about functional requirements, we additionally define \emph{functional properties}.
For example, the empty protocol satisfies many security properties but it is useless for voting because, even in the absence of the adversary, a voter's ballot is never recorded.
We will thus require a functional property stating that a protocol must at least have one execution where a voter's ballot is correctly recorded.
We describe a functional property by a set of traces $\mathcal{S}_F$, for example containing all traces where a voter's ballot is recorded. 
We then define that a protocol $\prot$ run in the topology $\topo$ satisfies the property $\mathcal{S}_F$ if $\TR(\prot,\topo) \cap \mathcal{S}_F \neq \emptyset$.
Finally, we define protocol \emph{assumptions} as sets of traces. 
That is, we define so-called \emph{(trace) restrictions} by giving a set of traces and then only consider the traces in the intersection of this set and $\TR(\prot,\topo)$ (see e.g. Section~\ref{subsubsec:channelTypes}).

\subsection{Trust and channel types}
\label{subsec:TrustAssumumptionAndChannelTypes}

\subsubsection{Trust types}
In the topologies, we consider four types of trust on roles that reflect the honesty of the agents that execute the role. 
A \emph{trusted} role means we assume that the agents who instantiate this role are always \emph{honest} and thus strictly follow their role specification.
In contrast, an \emph{untrusted} role can be instantiated by \emph{dishonest} agents (i.e., compromised by the adversary) who behave arbitrarily.
Dishonest agents model both corrupt entities and entities that unintentionally deviate from their specification, for example due to software errors.
We model dishonest agents by sending all their secrets to the adversary and by modeling all their incoming and outgoing channels as insecure (see the channel types below).

In addition, we consider the types \trustedforward and \trustedreply, which assume \emph{partial trust}.
The agents who instantiate a role of type \trustedforward or \trustedreply do not strictly follow their role specification but, respectively, always correctly forward messages or reply upon receiving correct messages.
Such assumptions turn out to often be necessary for the timeliness property that we introduce shortly, as otherwise dishonest agents that are expected to forward or answer certain messages can fail to do so and thereby block other protocol participants.
Thus, these trust types enable fine grained distinctions to be made about which assumptions are necessary for certain properties to hold.

In summary, we consider the set of trust types $\trustassumption:=\{\textit{trusted},\trustedforward, \trustedreply,\textit{untrusted}\}$.
In the topologies, we denote trusted roles by nodes that are circled twice (see e.g., $\BB$ in Figure~\ref{fig:topoT1}, p.~\pageref{fig:topoT1}) and the partial trust types \trustedforward and \trustedreply by two dashed circles 
(see e.g., $\platform$ in Figure~\ref{fig:topoT1}).
In our protocol class, there is no role that can be mapped both to type \trustedforward and to type \trustedreply; thus the interpretation will always be unambiguous.
All remaining roles are untrusted.

\subsubsection{Channel types}
\label{subsubsec:channelTypes}
In addition to the trust assumptions, a communication topology states channel assumptions.
Channels, which are the edges in the topology graph, denote which parties can communicate with each other. 
Also, channels define assumptions, for example that limit the adversary by stating who can change or learn the messages sent over a given channel.
Following Maurer and Schmid~\cite{maurerChannels}, we use the notation $A\insecChan{}B$, $A\authChan{}B$, $A\confChan{}B$, and $A\secChan{}B$ to denote a channel from (instances of) role $A$ to role $B$ that is respectively insecure, authentic, confidential, and secure.
For a formal semantics for these channels, see~\cite{hisp}.

We introduce two additional channel assumptions that are useful for dispute resolution. These assumptions concern whether a channel reliably delivers messages and whether external observers can see the communication on a channel.
Usually, it is assumed that the above channels are \emph{unreliable} in that the adversary can drop messages sent. 
We make such assumptions explicit and also allow for \emph{reliable} variants of channels.
On a reliable channel, the adversary cannot drop messages and thus all messages sent are received by the intended recipient.
We will see in Section~\ref{sec:Timeliness} that such channels are needed to achieve timeliness properties.

For dispute resolution, it is sometimes required that external observers can witness the communication an agent is involved in to later judge whether this agent is telling the truth. 
For example, it may be required that witnesses can observe when a voter casts his physical ballot by placing it into a voting box.
Such communication cannot later be denied, e.g. when others witness that the voter has cast his ballot then the voter cannot later deny this.
Whereas it is in reality sufficient if several witnesses, e.g., a subset of all voters, can observe such communication, we model this by channels that specify that \emph{any} honest agent can observe such communication.
Similarly, we will also model the fact that sufficiently many parties can decide who is right in a dispute by specifying that \emph{any} party can resolve disputes (see Section~\ref{sec:disputeResolution}).
Concretely, we model communication that can be observed by others by \emph{undeniable} channels where any honest agent $C\notin\{A,B\}$ learns the communication between $A$ and $B$.
This is in contrast to the default \emph{deniable} channels, where an honest agent $C\notin\{A,B\}$ cannot determine that $A$ and $B$ are communicating with each other.
 
A \emph{channel type} can be built from any combination of the three channel assumptions introduced above.
For example, on an insecure reliable channel, the adversary can learn all messages and write messages herself, but she cannot drop messages sent from $A$ to $B$.
However, for dispute resolution not all combinations
are useful.
In particular, an undeniable channel provides evidence that a message was sent, but this is only useful together with the guarantee that the message is also received.
Hence, we only consider undeniable channels that are also reliable.
We thus distinguish the following channel types, named after their most significant property:
The \emph{default channels} $\insecChan{\default}$, $\authChan{\default}$, $\confChan{\default}$, $\secChan{\default}$, which are neither reliable nor undeniable,
the \emph{reliable channels} 
$\insecChan{\reliable}$, $\authChan{\reliable}$, $\confChan{\reliable}$, $\secChan{\reliable}$, 
which are reliable but not undeniable, 
and the \emph{undeniable channels} 
$\insecChan{\observable}$, $\authChan{\observable}$, $\confChan{\observable}$, $\secChan{\observable}$, which are both reliable and undeniable.

We model the guarantees for senders and receivers that use a reliable or undeniable channel by stating that each message sent on such a channel is also received.
We only require this property when both the sender and the receiver of a message are trusted or partially trusted and formally express it by the following restriction.
\begin{alignat*}{1}
 &\{\tr | \forall A, B, m \sd
 \trust(A), \trust(B) \in \{\textit{trusted},\trustedforward, \trustedreply\}
 \\\multicolumn{2}{r}{$
 \wedge \,\channel(A,B)
 \in \{\insecChanReliable, \authChanReliable,\confChanReliable, \secChanReliable,
 \insecChanObservable, \authChanObservable, \confChanObservable, \secChanObservable\}$}
 \\\multicolumn{2}{r}{$\wedge\; \send(A,B,m) \in \tr \implies \receive(A,B,m) \in \tr \}. $} 
\end{alignat*}
To model the additional guarantee that undeniable channels provide, additional signals are recorded in the trace when agents communicate over such channels.
That is, whenever an agent $A$ sends a message $m$ to $B$ over an undeniable channel, in addition to the signals $\send(A,B,m)$ and $\receive(A,B,m)$, the signal $\pubChan(A,B,m)$ is recorded.
We formalize this by the following restriction.
\begin{alignat*}{1}
 &\{\tr | \forall A, B, m \sd
 \channel(A,B)
 \in \{\insecChanObservable, \authChanObservable, \confChanObservable, \secChanObservable\} \wedge\;
 \\\multicolumn{2}{r}{$
 (\send(A,B,m) \in \tr \vee \receive(A,B,m) \in \tr)
 $} 
 \\\multicolumn{2}{r}{$
 \implies \pubChan(A,B,m) \in \tr \}. 
 $}
\end{alignat*}
In the rest of this paper, these two restrictions are always stipulated.
That is, whenever we use $\TR(\prot,\topo)$ to refer to all traces of the protocol $\prot$ run in the topology $\topo$, we actually mean all traces in the intersection of $\TR(\prot,\topo)$ and the above two sets of trace restrictions.

\section{Class of voting protocols}\label{sec:class}
Formal reasoning about dispute resolution in voting requires a language for specifying voting protocols and their properties.
We provide such a language by presenting a class of voting protocols for which we subsequently define dispute resolution properties.
Our class comprises both remote and poll-site voting protocols that can be electronic or paper-based. However, we require a public bulletin board, which is, most of the time, realized by digital means.
We define the class by stating natural restrictions that communication topologies and protocols must satisfy to be in our class.
Afterwards, we show that many well-known voting protocols belong to this class.

\subsection{Communication topologies considered}
\label{subsec:consideredCommunicationTopologies}
\subsubsection{Topology graph}
\begin{figure}
 \begin{center}
 \scalebox{0.75}{
\begin{tikzpicture}[->,>=stealth',shorten >=1pt,auto,node distance=3cm and 1.5cm,semithick]

 \node[state] (H) {$\human$};
 \node[state] (D)[left=0.5cm of H] {$\device$}; 
 \node[state] (P)[right=0.7cm of H] {$\platform$};
 \node[state] (S) [right=0.7cm of P] {$\server$};
 
 \node[state] (BB)[below=0.2cm of P] {$\BB$};
 \node[state] (A) [below=0.2cm of H] {$\auditor$};

 \path (H) edge[->,bend left] node[left] {} (D) 
 (H) edge[->,bend left]node[above left] { } (P) 
 (D) edge[->,bend left] node[left] {} (H) 
 (P) edge[->,bend left]node[above right] { } (S) 
 (P) edge[-> ]node[above right] { } (H)
 
 (S) edge[-> ] node[below] { } (BB)
 (S) edge[->] node[right] {} (P)
 (BB)edge[-> ] node[below] { } (H)
 edge[-> ] node[below left] { } (A)
 ;
\end{tikzpicture}
,
\begin{tikzpicture}[->,>=stealth',shorten >=1pt,auto,node distance=3cm and 1.5cm,semithick]

 \node[state] (H) {$\human$};
 \node[state] (D)[left=0.5cm of H] {$\device$}; 
 \node[state] (S) [right of=H] {$\server$};

 \node[state] (A) [below=0.2cm of H] {$\auditor$};
 \node[state] (BB)[right=0.6 of A] {$\BB$};

 \path (H) edge[->,bend left] node[left] {} (D) 
 (D) edge[->,bend left] node[left] {} (H) 
 (H) edge[->,bend left]node[above right] { } (S) 
 (S) edge[-> ] node[below] { } (BB)
 (S) edge[->] node[right] {} (H)
 (BB)edge[-> ] node[below] { } (H)
 edge[-> ] node[below left] { } (A)
 ;
\end{tikzpicture}}
 \end{center}
 \caption{The topology graphs $G_S$ (left) and $G_U$ (right).
 We allow for any topology where $G(\topo)\subseteq_G G_S$ or $G(\topo)\subseteq_G G_U$.
 }\label{fig:GenericSystemSetup}
 \end{figure}
The topology graphs in Figure~\ref{fig:GenericSystemSetup} depict all possible roles and communication channels that we consider.
That is, we allow for any topology $\topo$ whose $G(\topo)$ is a subgraph of $G_S$ or $G_U$ in Figure~\ref{fig:GenericSystemSetup}.
The node $\human$ describes two roles for the human voters, one for voters who vote and one for those voters who abstain.
Also, there are roles for the devices $D$ and $\platform$, the voting authority $\server$, a public bulletin board $\BB$, and the auditor $\auditor$.
In a concrete protocol, each role, except for $\server$ and $\BB$, can be instantiated by multiple agents.

We consider two kinds of setups, $G_S$ and $G_U$ in Figure~\ref{fig:GenericSystemSetup}, for two kinds of protocols that differ in how ballots are cast.
$G_S$ provides the necessary channels for protocols where each voter
$H$ knows his ballot and sends it to $\server$ using a platform $\platform$.
It models remote and poll-site voting. 
In the former case $\platform$ could be the voter's personal
computer, and in the latter case $\platform$ could be a ballot box, or an optical scanner
that forwards $H$'s ballot $b$ to the authority for tallying. 
$G_U$ models setups for protocols where a trusted platform $P$
computes (e.g., encrypts) and casts the ballot for $H$. 
Often, such protocols do not distinguish between $H$ and $P$ and $P$
operates ``in the name of $H$''. Therefore, we model the setup of such
protocols by unifying the roles $H$ and $P$ into a single role
\emph{voter $H$}.

In some protocols, voters also have a personal off-line device $\device$. In contrast to $\platform$, $\device$ has limited capabilities and is not connected to the authority.
This models, for example, off-line trusted digital devices 
or letters containing codes that may be used to compute ballots.

$\server$ denotes the authority that is responsible for setting up elections and collecting and tallying the ballots.
Even though some voting protocols describe the authority in terms of several distinct roles, we collectively describe all these relevant functionalities in a single role, except for the publication of information, which is described by the bulletin board role $\BB$.
We then also consider just one agent in the role $\server$.
We will argue in Section~\ref{sec:disputeResolution} that this is sufficient for our dispute resolution properties.
As depicted in Figure~\ref{fig:GenericSystemSetup}, $\server$ can publish information on $\BB$, which can be read by the auditors and voters.
An auditor performs checks on the published information to ensure that the election proceeded correctly.
By modeling the auditor as a separate role, this role can be instantiated by anyone, including the voters.

\subsubsection{Topology assumptions}
\label{paragraph:toplogyAssumptions}
We further restrict the considered communication topologies by making some minimal channel and trust assumptions.
As is common for many voting protocols~\cite{alethea,civitas,belenios,jcj}, we model a secure bulletin board and consider its realization as a separate problem. Such a bulletin board can be used to send messages authentically and consistently from the authority to all voters and auditors.
We thus assume that the roles $\BB$ and $A$ exist and are trusted and that the channels from $\server$ to $\BB$ as well as from $\BB$ to $\human$ and $\auditor$ exist and are default authentic channels.
Furthermore, we only use the following partial trust assumptions.
\server can be trusted to always reply with a confirmation upon receiving a correct message (type \trustedreply) and \platform can be trusted to always forward messages correctly (type \trustedforward), e.g., a voting machine can be trusted to forward the entered ballots to a remote server.
The remaining channel and trust types can respectively be assigned to any channels and roles.
Note that we support protocols using anonymous channels (e.g. Civitas~\cite{civitas}) since, for the properties we consider, anonymous channels can be modeled as our default channels.\footnote{Distinguishing between anonymous and default channels is relevant when analyzing observational equivalence properties such as coercion resistance. 
However, for our possibility results, we only consider reachability properties.
}

For dispute resolution, certain guarantees should hold for an honest voter $H$, even with an untrusted or partially trusted authority and even if all other voters are untrusted.
Similarly, the guarantees for the honest authority should hold even when all voters are untrusted.
We therefore only consider topologies $\topo$ where the roles $\human$ and $\server$ are untrusted or partially trusted and analyze dispute resolution with respect to three variations of $\topo$.
We introduce the following notation.
We single out a distinguished voter $H$ for whom the security properties are analyzed.
Given a topology $T$, $\advModel{T}{\hShH}$ denotes the same topology but where the trust assumptions about $\server$ and the distinguished $H$ are defined by $\trust(\server)=\textit{trusted}$ and $\trust(H)=\textit{trusted}$, 
$\advModel{T}{\hSmH}$ is as $T$ but with $\trust(\server)=\textit{trusted}$, and 
$\advModel{T}{\mShH}$ is as $T$ but with $\trust(H)=\textit{trusted}$. 
Note that in all variations, the trust assumptions about the voters other than $H$ are as in~$T$.

\subsection{Voting protocols considered}\label{subsec:GenericVotingProtocol}
We next define the voting protocols considered in terms of the protocols' structure and which equations must be specified in the term algebra.
Our definition allows for protocols with re-voting, that is, where voters can send several ballots (e.g.,~\cite{civitas}).
As explained in Section~\ref{subsec:consideredCommunicationTopologies}, we allow protocols where the voter $H$ knows and casts his ballot or where a trusted platform $P$ casts the ballot, in which case we unify the roles $H$ and $P$.

\subsubsection{Required functions and equations}\label{subsubsec:necessaryFunctionsAndEquations}
A protocol specification must define the equation satisfied by $\Tally(\bs)$, defining how the election's result is computed from the list of recorded ballots $\bs$.
Similarly, a protocol must define the equation satisfied by $\CastBy(\ballot)$, which must map each ballot $\ballot$ to a voter, thereby specifying that this voter is considered to have cast the ballot.

\subsubsection{Protocol's start and end}
We assume that the protocol's setup can specify any number of voters and devices and any relation between them, for example that each voter is associated with a unique trusted device.
Also, at the protocol's start some public information may be posted on the bulletin board.
For example, this might be some election parameters or the list of all eligible voters, denoted by $\BBH(\Hs)$.
Furthermore, some agents may know some terms such that these terms (or associated terms) are initially published on the bulletin board or known to other agents.
For example, $\server$'s public key $\pkS:=\pk(\skS)$ can be posted on the bulletin board at the protocol's start whereby $\server$ has the corresponding private key $\skS$ in its initial knowledge.
We require, however, that at the protocol's start no honest agent knows a voter $H$'s ballot other than the voter himself.

We also assume that an election has two publicly known deadlines that determine the \emph{voting phase's end}, i.e., when ballots can no longer be cast, and the \emph{moment when all relevant information is published}.
We denote the latter by the explicit signal $\End$ in the $\BB$ role, which is recorded right after the last message relevant for the election is published.

\subsubsection{Tallying and publication of results}
We assume that ballots are collected and tallied by the authority $\server$ and that the protocol allows voters to abstain from voting.
Thus, $\server$ starts the tallying process after the voting phase, even if not all voters have cast a ballot. 
By the election's end, all valid ballots that were received by the authority have been published on the bulletin board together with the votes in the final tally.
In protocols with \emph{re-voting}, all ballots are published in the list of recorded ballots and the tallying process is responsible for removing duplicates.
Finally, we assume that all messages sent to the bulletin board are immediately published. That is, whenever $\BB$ receives a message $m$, the signal $\BB(m)$ is recorded in the trace.

\subsection{Examples of protocols in our voting class}
Our class comprises well known voting protocols such as Helios~\cite{helios}, Belenios~\cite{belenios}, and Civitas~\cite{civitas}.
In these protocols, voters can abstain from voting, the bulletin board is assumed to be secure, and the recorded ballots are published on the bulletin board as they were received by the authority.
Moreover, even though these protocols all specify different roles and setups, they can each be understood as instantiations of the setups
in Figure~\ref{fig:GenericSystemSetup}.
Belenios and Civitas both have many authority roles, such as registrars and different trustees, which can be understood as our role \emph{$\server$}.
In Belenios, the \emph{Bulletin Board} also performs some checks and computations. 
Thus, to cast it in our protocol class, we must additionally interpret those parts of Belenios's \emph{Bulletin Board} as part of our role \emph{$\server$} and just the published part of Belenios's \emph{Bulletin Board} as our role \emph{bulletin board $\BB$}.

Note that there are voting protocols, such as BeleniosRF~\cite{BeleniosRF}, where the recorded ballots are re-encrypted before being published on the bulletin board to achieve stronger privacy properties. 
Such protocols are not in our class.

\section{Dispute Resolution} 
\label{sec:disputeResolution}
In voting, the authority conducting the election should behave as expected.
That is, if the authority is dishonest, it must be held accountable for this.
For elections that are conducted by multiple parties, we require that it is unambiguously detectable when \emph{any} of these parties misbehave, but we do not require that it is detectable \emph{which} of these parties misbehave. 
This is sufficient to determine when ``the system'' running the election does not proceed as expected and to take recovery measures when this is the case, such as declaring the election to be invalid.
Thus, except for the bulletin board, we model all of the parties involved in conducting an election as one role (and agent) \emph{authority} $\server$ and require that this agent is held accountable if it does not behave as expected, i.e. does not follow its role specification.

In contrast to the authority, we should not and cannot require that all other parties, notably the voters, behave as expected. 
In fact, a well-designed voting protocol should still satisfy its expected properties for the honest voters, even when other voters misbehave. 
We therefore only consider disputes with respect to claims that the voting authority is dishonest.

We first explain why dispute resolution is needed in elections and characterize all relevant disputes.
Afterwards, to formalize our dispute resolution properties, we extend our protocol model with additional signals and functions.
We then motivate the required properties using our classification and formalize them using the model extensions.

\subsection{Relevant disputes}
After an election, all honest participants should agree on the election's outcome.
A protocol where any manipulation by the authority can be detected by suitable checks is called \emph{verifiable}.
For voting, the gold standard is \emph{end-to-end verifiability} where the final tally consists of the honest voters' votes, tallied correctly and this can all be checked.
Often, this property is divided into \emph{universal} and \emph{individual} verifiability.
\emph{Universal verifiability} properties denote that some guarantees hold (e.g., the tally is computed correctly) if an auditor performs appropriate checks on bulletin-board data. 
Any voter or independent third party can serve as an auditor and do such checks.
Therefore, if the universal verifiability checks fail, all honest protocol participants will agree on this fact and such checks never give rise to disputes.

\begin{figure*}
 \begin{center} 
 \scalebox{0.9}{ \begin{tabular}{|l |l | l| l |l |l| l l l l}
 \hline
 & \multicolumn{1}{c|}{\textbf{Voter} $H$ claims that $H$ }
 & \multicolumn{1}{c|}{\textbf{Authority} $\server$ claims that $H$ }
 & \multicolumn{1}{c|}{Properties protecting $H$}
 & \multicolumn{1}{c|}{Properties protecting $\server$}
 \\\hline\hline
 \sone& \textit{\textbf{cast} ballot} $\ballot$ & \textit{\textbf{did not cast} ballot} $\ballot$
 & $\provoterCast(\server)$, $\timeliness(\server) $
 & $\proauth(\server)$ \\\hline
 
 \stwo& \textit{\textbf{did not cast} ballot} $\ballot$ &\textit{\textbf{cast} ballot} $\ballot$
 & $\provoterAbstain(\server)$ 
 & $\proauth(\server)$ \\\hline
 \end{tabular}}
 \end{center}
 \caption{Possible disputes in voting.
 The authority's claim is captured by the information on the bulletin board.
 The respective disputes can be resolved when all properties in the third and fourth columns hold.
 }\label{fig:disputeSituations}
\end{figure*}

\emph{Individual verifiability} denotes that each voter can verify that his own ballot has been correctly considered in the list of recorded ballots. 
As only the voter knows which ballot he has cast, this property relies on checks that must (and can only) be done by \emph{each voter himself}.
Hence, individual verifiability checks give rise to the following three problems, where a voter claims that the authority is dishonest while other protocol participants cannot determine whether the voter is lying.

(1) A voter is hindered from taking the protocol step where he casts his ballot, in particular he cannot complete one of the preceding protocols steps.
There may be technical as well as social reasons for this.
For example, the voter may fail to be provided with the necessary credentials in a setup phase or he cannot access a polling station.
For generality, we therefore consider disputes regarding the inability to cast a ballot as out of scope of this paper and focus in the following on disputes concerning whether the recorded ballots correctly reflect the ballots cast by the voters.

(2) A voter who successfully cast his ballot 
is hindered from reaching the verifiability step. For instance, this can happen when recording a ballot requires receiving a confirmation from the authority, which is sent to the voter too late or not at all.

(3) A voter's check whether his ballot is recorded correctly fails. 
This can happen when a voter detects that one of his cast ballots was not recorded correctly or when he detects that there is a ballot recorded for him that he never sent.

As a result of the above reasoning, based on (3) we distinguish two possible disputes that must be considered in voting protocols, which are depicted in Figure~\ref{fig:disputeSituations}.
In both disputes, a voter $\human$'s and the authority's claim about $\human$'s cast ballot differ, where the authority's claim is denoted by the information on the bulletin board.
We take the standpoint that the authority is responsible for setting up a working channel to the bulletin board. That is, if messages are not on the bulletin board that should be there, we consider this to be the authority's fault.
In the dispute~\sone, a voter $\human$ claims that he cast a ballot $\ballot$, while the authority $\server$ claims that $\human$ did not cast $\ballot$ and in the dispute~\stwo their claims are reversed.
Note that when $\human$ claims to have cast $\ballot$ and $\server$ claims that $\human$ cast $\ballot'$, this constitutes both a dispute~\sone with respect to the ballot $\ballot$ and a dispute~\stwo with respect to $\ballot'$.

We require that when a voter learns that the authority $\server$ did not record a ballot that he cast, he can convince the other honest participants that $\server$ is dishonest.
This is a prerequisite needed to take recovery measures when such manipulations occur.
The same must hold when a voter learns that $\server$ recorded a ballot for him that he did not cast.
We respectively denote these properties by $\provoterCast(\server)$ (in dispute~\emph{D1}) and $\provoterAbstain(\server)$ (in dispute~\emph{D2}).

As explained in (2), it is also a problem when a voter who casts a ballot is hindered from reaching his verifiability check in due time.
We thus require that a voter who casts a ballot has some timeliness guarantees, namely that by the election's end either his ballot is correctly recorded or he has evidence to convince others that the authority $\server$ is dishonest. We denote this property by $\timeliness(\server)$. 

Finally, it is possible that voters lie.
Therefore, we require that an honest authority $\server$ is protected from false convictions in any dispute. We denote this by $\proauth(\server)$.

Some protocols support re-voting, where
voters can send a set of ballots, all of which are recorded on the bulletin board. 
In this case, the dispute~\sone denotes that $H$ claims that at least one of his cast ballots is not listed by $\server$. We thus require that $\provoterCast(\server)$ and $\protime(\server)$ hold for each cast ballot.
Dispute~\stwo means that $H$ claims not to have cast some of the ballots that are recorded for him.
In such a dispute, the voter must be able to convince everyone that too many ballots are recorded for him and that the authority $\server$ is dishonest.
This guarantee generalizes the property $\provoterAbstain(\server)$, which we will define so that it covers both situations.
As before, the disputes~\sone and~\stwo can occur simultaneously, for example when $H$ claims he cast the ballots $\ballot_1$ and $\ballot_2$ and $\server$ claims that $H$ cast $\ballot_2$ and $\ballot_3$.

\subsection{Protocol model for dispute resolution}
To formalize dispute resolution for our class of voting protocols, we extend our protocol model from Section~\ref{sec:protocolModel}.

It may be required that agents collect evidence to be used in disputes.
We use the signal $\Evidence(\ballot,\evidence)$ to model that the evidence $\evidence$ concerning the ballot $\ballot$ is collected. 
We model the \emph{forgery of such evidence} by allowing any dishonest agent to claim that any term in its knowledge is evidence. That is, we allow the adversary to perform an action that records $\Evidence(\ballot,\evidence)$ for any terms $\ballot$ and $\evidence$ such that $K(\langle \ballot, \evidence\rangle)$.

As we have argued that all honest agents should be able to agree on the outcome of disputes, we do not specify which agents resolve disputes and how the collected evidence must be communicated to them to file disputes.
We merely define that a voting protocol can generate evidence such that \emph{any} third party who obtains this evidence can, together with public information, judge whether the authority $\server$ is dishonest.
Recall that in poll-site settings, undeniable channels are used to model that 
\emph{sufficiently many} witnesses can observe the relevant communication in practice. 
In this context, requiring that any third party can judge whether the authority is dishonest models that sufficiently many parties can decide this in practice.
Abstracting away from these details allows us to focus on which evidence and observations are required to resolve disputes, independently of how undeniable channels are realized in practice.

We thus model the verdict of whether $\server$ should be considered dishonest by a publicly verifiable property $\verdict$, which can be specified as part of each voting protocol, independently of any role.
$\verdict(\server,\ballot)$ defines a set of traces where the agent $\server$ is considered to have behaved dishonestly with respect to the ballot $\ballot$, i.e., $\ballot$ has presumably not been processed according to the protocol in these traces.
For example, $\verdict(\server,\ballot) := \{\tr | \exists B, \bs \sd \pubChan(\server,B,\ballot) \wedge \BBrecorded(\bs) \in \tr \wedge \ballot \notin \bs\}$ specifies that $\server$ is considered dishonest in all traces where the ballot $\ballot$ was sent from $\server$ to an agent $B$ on an undeniable channel but not included in the recorded ballots~$\bs$ published on the bulletin board.

To ensure that the verdict whether a trace is in the set $\verdict(\server,\ballot)$ is publicly verifiable, the specification of $\verdict(\server,\ballot)$ must depend just on evidence and public information.
We thus state the following requirement.
\begin{requirement}\label{requirement:verdict}
 $\verdict(\server,\ballot)$ may only be defined based on the signals $\BB$, $\Evidence$, and $\pubChan$.
\end{requirement}
\noindent
Whereas the above example satisfies this requirement, the set $\{\tr | \exists A, B,m,\bs \sd \send(A,B,m) \wedge \BBrecorded(\bs) \in \tr \wedge m \notin \bs\}$ is not a valid definition of $\verdict(\server,\ballot)$, as $\send$ is not one of the admissible signals.

As a consequence of the above requirement, not all signals in a trace $\tr$ are relevant for evaluating whether $\tr$ satisfies a given $\verdict$ definition.
In particular, let $\pubTrace(\tr)$ be a projection that maps a trace $\tr$ to the signals in $\tr$ whose top-most function symbol is one of $\BB$, $\Evidence$, or $\pubChan$, while maintaining the order of these signals.
Then, it follows from Requirement~\ref{requirement:verdict} that for all traces $\tr_1$ and $\tr_2$ such that $\pubTrace(\tr_1)=\pubTrace(\tr_2)$, it holds that $\tr_1 \in \verdict(\server,\ballot)$ iff $\tr_2\in\verdict(\server,\ballot)$.

\subsection{Formal dispute resolution properties}\label{subsec:ProVoterAndProAuth}
We now use our extended model to define the dispute resolution properties for our class of voting protocols. We formalize each property from Figure~\ref{fig:disputeSituations} as a set of traces.

First, we consider the property $\provoterCast(\server)$ that protects an honest voter who detects that one of his cast ballots is not recorded correctly by the authority $\server$. 
Intuitively, we require that if this happens,
the voter can then convince others that the authority is dishonest.
Specifically, the property states that whenever an honest voter $H$ (or one of his devices) 
reaches the step where he believes that one of his ballots $\ballot$ should be recorded on $\BB$, then either this ballot is correctly included in the list of recorded ballots on $\BB$ or everyone can conclude that the authority $\server$ is dishonest.
We define $\provoterCast(\server)$ as follows ($C$~denotes that a ballot has been cast).
\begin{mydef}\label{def:proVoter}
\begin{alignat*}{1}
\provoterCast(\server) := \; & \{\tr \mid
 \verifyCast( H, \ballot) \in \tr
 \\\multicolumn{2}{r}{$ 
 \implies (\exists \bs \sd 
 \BBrecorded(\bs)\in \tr 
 \wedge b \in \bs ) 
 \vee \tr \in \verdict(\server,\ballot)\}.
 $}
 \end{alignat*}
\end{mydef}
\noindent
Note that, for notational simplicity, here and in the rest of the paper, when using set comprehension notation like $\{x| F(x,\bar{y})\}$, all free variables $\bar{y}$ different from $x$ are universally quantified, i.e., $\{x|\forall \bar{y}\sd F(x,\bar{y})\}$.

The next property, $\timeliness(\server)$, states that an honest voter $H$ who casts a ballot $\ballot$ cannot be prevented from proceeding in the protocol such that his ballot is recorded or, if he is prevented, then he can convince others that the authority $\server$ is dishonest.
In particular, a voter's ballot must be recorded or there must exist evidence that the authority is dishonest \emph{within a useful time period}.
Note that we do not require that the resolution of disputes must take place before the election's end and there can be a complaint period afterwards.
However, we require that the necessary evidence exists by this fixed deadline as otherwise it could be received after the complaint period ended.
We now define $\timeliness(\server)$.
 \begin{mydef}\label{def:proTime}
 \begin{alignat*}{3}
 &\timeliness(\server) := 
 \{\tr \mid \exists \tr',\tr''\sd \tr = \tr'\cdotp\tr'' 
 \\\multicolumn{2}{r}{$ 
 \wedge \Ballot(H,b) \in \tr' 
 \wedge\; \End\in \tr''
 $}
 \\\multicolumn{2}{r}{$ 
 \implies (\exists \bs. \BBrecorded(\bs)\in \tr' \wedge b \in \bs )
 \vee \tr \in \verdict(\server,\ballot) \}.
 $}
 \end{alignat*}
 \end{mydef}
 \noindent
 The difference to $\provoterCast(\server)$ (Definition~\ref{def:proVoter}) is that we not only require the property when a verifiability check is reached, but whenever all the relevant information is published on the bulletin board (indicated by $\End$) and a voter's ballot was cast before this deadline.
 We illustrate this difference on an example in Section~\ref{subsubsec:mixvote:DR}.

For abstaining voters we define $\provoterAbstain(\server)$. It states
that when an honest voter $H$ who abstains from voting, or one of $H$'s devices, checks that no ballot is recorded for $H$, then either this is the case or everyone can be convinced that the authority $\server$ is dishonest.
We define this property such that it can also be used in protocols with re-voting, where a voter who cast a set of ballots $b_H$ checks that no additional ballots are wrongly recorded for him.
We define $\provoterAbstain(\server)$ as follows. 
\begin{mydef}\label{def:provoterAbstain}
\begin{alignat*}{1}
\provoterAbstain(\server) := \; &
\{\tr \mid \verifyNoVote(H,b_H) \in \tr
\\\multicolumn{2}{r}{$ \implies \neg ( \exists \bs, b\sd \BBrecorded(\bs) \in \tr \wedge b \in \bs \wedge\; \CastBy(\ballot) = H 
 $}
\\\multicolumn{2}{r}{$ \wedge\; b \notin b_H)
\vee\; \exists \ballot \sd \tr \in \verdictNoVote(\server,\ballot) \}. $}
 \end{alignat*}
\end{mydef}
\noindent
Note that $\CastBy$ is just a claim that $H$ has cast a ballot and does not imply that $H$ has actually cast it.
For example, in a protocol where ballots contain a voter's identity in plain text and $\CastBy(\ballot)$ is defined to map each ballot to the voter whose identity it contains, everyone can construct a ballot $\ballot$ such that $H=\CastBy(\ballot)$, even when $H$ has not cast it.

It is possible, of course, that a voter who claims that the authority is dishonest is lying. Thus, for dispute resolution to be fair, it must not only protect the honest voters but also an honest authority. 
We formalize by $\proauth(\server)$ that traces where the authority $\server$ is honest should not be in $\verdict(\server,\ballot)$ for any ballot $\ballot$.
\begin{mydef}\label{def:proauth}
\begin{alignat*}{3}
\proauth(\server) := \; & \{\tr \mid
 \honest(\server) \in \tr
 \\& \multicolumn{2}{r}{$
\implies \forall \ballot \sd \tr \notin \verdict(\server,\ballot) \}.
 $}
 \end{alignat*}
\end{mydef}

Even though the above properties are stated independently of any adversary model, $\provoterCast(\server)$, $\timeliness(\server)$, and $\provoterAbstain(\server)$ are guarantees for an honest voter $\human$ and must hold even when the authority $\server$ and other voters are dishonest. 
Similarly, $\proauth(\server)$ constitutes a guarantee for $\server$ and must hold even if all voters are dishonest.
Therefore, we define a \emph{dispute resolution property} by stating what property must be satisfied by a protocol 
(1) for the honest voter $H$, i.e., when the protocol is run in a topology where $H$ is honest,
and (2) for the honest authority $\server$.
Additionally, it is usually required that a protocol satisfies some functional requirement when the agents are honest.
Thus a dispute resolution property also specifies (3) which functional requirement must hold when both $\server$ and the voter $H$ are honest.
\begin{mydef}\label{def:useful}
Let $p_H$ and $p_\server$ be two security properties that must hold respectively for an honest voter $H$ and the honest authority $\server$ and let $p_f$ be a functional property that must hold when both agents are honest.
A protocol $\prot$, executed in a topology $\topo$, 
satisfies the dispute resolution property $\DRprop{\prot}{\topo}{p_H}{p_\server}{p_f}$ iff
\begin{alignat*}{3}
&
\TR(\prot, \advModel{\topo}{\hShH}) \subseteq p_H \cap p_\server
\wedge \TR(\prot, \advModel{\topo}{\mShH}) \subseteq p_H 
\\\multicolumn{2}{r}{$
\wedge 
\TR(\prot, \advModel{\topo}{\hSmH}) \subseteq p_\server 
\wedge 
 \TR(\prot, \advModel{\topo}{\hShH}) \cap p_f \neq \emptyset
 . $} 
\end{alignat*}
\end{mydef}
\noindent
For example, $\DRprop{\prot}{\topo}{\provoterCast(\server) \cap\timeliness(\server) \cap \provoterAbstain(\server)}{\proauth(\server)}{f}$
denotes that the protocol $\prot$ run in the topology $\topo$ satisfies all previously introduced properties in the required adversary models.
That is, it satisfies the properties $\provoterCast(\server)$, $\timeliness(\server)$, and $\provoterAbstain(\server)$ for an honest voter $H$, the property $\proauth(\server)$ for an honest authority $\server$, and the functional property $f$ for an honest voter and the honest authority (see the next section for an example of a functional property).
Another dispute resolution property that we consider in the next Section is
$\DRprop{\prot}{\topo}{\timeliness(\server)}{\proauth(\server)}{f}$, which states that the timeliness property $\timeliness(\server)$ should hold for an honest voter $H$ while $\proauth(\server)$ is preserved for the honest authority $\server$.

\section{Communication topologies and timeliness}\label{sec:Timeliness}

For $\timeliness(\server)$, it is a problem when messages are lost as some protocol participants may be waiting for these messages and thus cannot proceed in the protocol.
Intuitively, timeliness only holds under strong assumptions.
We investigate this next by systematically characterizing the assumptions needed for $\timeliness(\server)$ to hold in our protocol class.

\subsection{Problem scope}
\subsubsection{Dispute resolution property}
We aim at achieving timeliness guarantees for the voters while also maintaining the $\proauth(\server)$ property for the authority $\server$.
Furthermore, we are only interested in protocols where a voter's ballot can actually be recorded.
To express the third requirement, we formalize a functional property
stating that for a given protocol and topology, there must be an execution where an honest voter $\human$ casts a ballot $\ballot$ and where this ballot is published in the list of recorded ballots $\bs$ on the bulletin board before the last relevant information is published (indicated by $\End$).
Moreover, this property must hold when all agents and the network behave honestly.
We denote by $\honestNetwork$ the set of traces where all agents follow the protocol and messages are forwarded on all channels unchanged.
The required functional property is defined as the following set of traces.
\begin{mydef}\label{def:functional}
\begin{alignat*}{3}
 &\textit{\functional}\; := \; \{\tr \mid 
 \exists \tr', \tr'',\human, \ballot, \bs \sd
 \tr = \tr'\cdotp \tr''\wedge 
 \\&\multicolumn{2}{r}{$ 
 \Ballot(H,b) \in \tr'
 \wedge \BBrecorded(\bs)\in \tr'
 \wedge b \in \bs\wedge \End \in \tr'' 
 $}
 \\&\multicolumn{2}{r}{$ 
 \wedge \, \tr \in \honestNetwork \}.
 $}
\end{alignat*}
\end{mydef}

Given a protocol $\prot$ and a topology $\topo$, we would like the dispute resolution property $\DRprop{\prot}{\topo}{\,\timeliness(\server)}{\,\proauth(\server)}{\,\functional}$, which we write for conciseness as $\fullTime(\prot,\topo)$.

\subsubsection{Additional assumptions} 
In standard protocol models, honest agents can stop executing their role at any time.
For timeliness, this might be a problem as other agents may wait for their messages and cannot proceed in the protocol.
We thus state the additional assumption that honest agents do not abort the protocol execution prematurely.
Similarly, we assume that partially trusted agents execute the required action once they can.
Note that agents can still be blocked, e.g., when waiting for messages that are dropped on the network.
\begin{assumption}
 Honest agents always execute all protocol steps possible 
 and agents who instantiate a role that is trusted to forward or answer messages, i.e., partially trusted, perform this respective action once they can.
\end{assumption}
\noindent 
The assumption implies, for example, that when a role specifies a send event after a receive event, the agent instantiating the role will always perform the second step right after the first one.
However, an agent can also be blocked between two consecutive protocol steps, for example when multiple receive events are specified after each other and the agent must wait for all of them.

Under the above assumption, we characterize all topologies from our protocol class for which there exists a protocol that satisfies \fullTime, i.e., the set $\{\topo | \exists \prot \sd \fullTime(\prot,\topo) \}$.
As others~\cite{hisp},
we introduce a partial order on topologies such that a possibility result (i.e., the existence of a protocol such that $\fullTime(\prot,\topo)$ holds) for a weaker topology implies a possibility result for a stronger topology.
We then characterize the above set by providing the ``boundary'' topologies, i.e., the minimal topologies satisfying \fullTime.

\subsection{Communication topology hierarchy}
We define a partial order on topologies that, given two topologies, orders them with respect to their trust and system assumptions.
We first define a partial order on our trust and channel types.
For $t,t' \in \textit{\trustassumption}$, $t' \smallertopo t$ denotes that $t$ is a stronger assumption than $t'$. 
We thus have that $\textit{untrusted} \smallertopo \trustedforward \smallertopo \textit{trusted}$ and $\textit{untrusted} \smallertopo \trustedreply \smallertopo \textit{trusted}$.
We also define for two channel types $c$ and $c'$ that $c$ makes stronger assumptions than $c'$.
Formally, let $\smallertopo_0$ be the relation where $\insecChan{x} \smallertopo_0 \confChan{x} \smallertopo_0 \secChan{x}$ and $\insecChan{x} \smallertopo_0 \authChan{x} \smallertopo_0 \secChan{x}$ for all $x \in \{\default,\reliable,\observable \}$ and 
$\chan{\default} \smallertopo_0 \chan{\reliable} \smallertopo_0 \chan{\observable}$
for all $\chan{}\, \in \{ \insecChan{} , \authChan{}, \confChan{}, \secChan{}\}$.
We overload the symbol $\smallertopo$ and, for channel types, we write $\smallertopo=\smallertopo_0^*$, i.e., $\smallertopo$ is the reflexive transitive closure of $\smallertopo_0$.

Using the above, for two topologies $\topo_1=(V_1,E_1,\trust_1,\channel_1)$ and $\topo_2=(V_2,E_2,\trust_2,\channel_2)$ we say that $\topo_2$ makes at least as strong assumptions as $\topo_1$ if $\topo_2$ uses channel and trust assumptions that are at least as strong as those in $\topo_1$ and if $\topo_2$'s topology graph includes all the roles and communication channels that exist in $\topo_1$:
\begin{alignat*}{3}
\topo_1 \smallertopo \topo_2 &:= 
G(\topo_1) \subseteq_G G(\topo_2) 
\wedge \forall (v_a,v_b) \in E_1 \sd 
\\\multicolumn{2}{r}{$\channel_1(v_a,v_b) \smallertopo \channel_2(v_a,v_b) \wedge \forall v \in V_1 \sd \trust_1(v)\smallertopo \trust_2(v).$}
\end{alignat*}

We show next that defining the relation this way is useful for relating possibility results for different topologies.
In particular, if for a topology $\topo$ it is possible to satisfy $\fullTime(\prot,\topo)$ with some protocol, then for all topologies that make stronger assumptions, there is also a protocol that satisfies the property.
The lemma is proven in
Appendix~\ref{appendix:proofTopologyHierarchy}.
\begin{lemma}\label{lemma:possAndImposs}
 Let $\topoi\smallertopo\topos$ be topologies in our class. 
\begin{alignat*}{3}
&\exists \prot \sd \fullTime(\prot,\topoi)
 \implies \exists \prot'\sd
 \fullTime(\prot',\topos).
\end{alignat*}
\end{lemma}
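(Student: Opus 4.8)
The plan is to prove the lemma by a direct construction: given $\topoi \smallertopo \topos$ and a protocol $\prot$ with $\fullTime(\prot,\topoi)$, exhibit a protocol $\prot'$ with $\fullTime(\prot',\topos)$. The natural candidate is simply $\prot' := \prot$ itself — the protocol specification is independent of the topology, so $\prot$ can just as well be run in $\topos$. The entire burden then shifts to comparing the two trace sets $\TR(\prot,\topoi)$ and $\TR(\prot,\topos)$ and checking that each of the three universal-inclusion conjuncts and the one non-emptiness conjunct in $\fullTime$ is preserved. So the real content is a trace-set comparison lemma: \emph{every trace of $\prot$ run in $\topos$ is a trace of $\prot$ run in $\topoi$}, i.e. $\TR(\prot,\topos) \subseteq \TR(\prot,\topoi)$, because $\topos$ makes \emph{stronger} assumptions, hence restricts the adversary more, hence has fewer traces. (One must be careful: $\topos$ may also contain \emph{more} edges/roles than $\topoi$; I address this below.)

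First I would establish the trace-inclusion claim componentwise, following the definition of $\smallertopo$. For the trust component: if $\trust_\topoi(v) \smallertopo \trust_\topos(v)$ for every role $v$, then any behaviour allowed for an agent in role $v$ under $\topos$ is also allowed under $\topoi$ — a trusted agent follows its spec, which a partially-trusted or untrusted agent is also permitted (but not forced) to do; a $\trustedforward$/$\trustedreply$ agent's behaviour is a special case of untrusted behaviour. For the channel component: if $\channel_{\topoi}(v_a,v_b) \smallertopo \channel_{\topos}(v_a,v_b)$, then the adversary's capabilities on that edge in $\topos$ are a subset of those in $\topoi$ — e.g. moving from insecure to authentic removes the ability to inject, from default to reliable removes the ability to drop, from deniable to undeniable only adds $\pubChan$ signals (which are recorded by the stipulated restrictions in both cases once the channel is undeniable, and the added signals do not remove traces). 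I would cite the formal channel semantics of~\cite{hisp} and the two stipulated trace restrictions from Section~\ref{subsubsec:channelTypes} here rather than re-deriving them. The subtle point is the extra edges/roles in $G(\topos)$ that are absent from $G(\topoi)$: these give the adversary strictly \emph{more} power, so naively $\TR(\prot,\topos) \not\subseteq \TR(\prot,\topoi)$. The resolution is that $\prot$ only specifies channels present in $G(\topoi)$ (it is a protocol for the topology $\topoi$), so the extra channels in $\topos$ are never used by $\prot$; by the footnote in Section~\ref{subsec:executionModel}, unused channels do not affect the execution, and adding an isolated unused edge/role cannot add new traces over the signals $\prot$ actually produces. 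I expect to phrase $\prot$ as a protocol for our class whose specified channels lie within the smaller graph, and argue that the adversary's extra reach on unused edges is irrelevant.

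Once $\TR(\prot,\topos) \subseteq \TR(\prot,\topoi)$ is in hand, the four conjuncts of $\fullTime(\prot,\topos) = \DRprop{\prot}{\topos}{\timeliness(\server)}{\proauth(\server)}{\functional}$ follow from $\fullTime(\prot,\topoi)$. For the three inclusion conjuncts I note that $\advModel{\topos}{\hShH} \smallertopo$-dominates $\advModel{\topoi}{\hShH}$ in exactly the same way (fixing $\trust(\server)$ and $\trust(H)$ to $\textit{trusted}$ on both sides preserves the order on the remaining roles and all channels), and likewise for the $\hSmH$ and $\mShH$ variants; hence $\TR(\prot,\advModel{\topos}{\hShH}) \subseteq \TR(\prot,\advModel{\topoi}{\hShH}) \subseteq p_H \cap p_\server$, and analogously for the other two. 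For the functional conjunct $\TR(\prot,\advModel{\topos}{\hShH}) \cap \functional \neq \emptyset$: here the inclusion goes the wrong way, so instead I observe that $\functional$ restricts to $\honestNetwork$-traces, where all agents follow the protocol and all channels forward messages unchanged; such a trace exists for $\prot$ in $\advModel{\topoi}{\hShH}$ by hypothesis, and the \emph{same} trace is still a valid execution of $\prot$ in $\advModel{\topos}{\hShH}$ (a stronger topology cannot forbid an all-honest, no-drop execution — it only restricts the adversary, and an $\honestNetwork$ trace uses no adversarial power; the extra channels of $\topos$ are simply not used). So the witness trace carries over directly. Assembling these four facts gives $\fullTime(\prot,\topos)$, completing the proof.

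**Main obstacle.** The one place requiring genuine care is the handling of the strictly larger topology graph of $\topos$: the monotone reasoning "stronger topology $\Rightarrow$ fewer traces" is clean for trust-type and channel-type strengthening but runs backwards for the added edges and roles. Getting the argument right hinges on the fact that $\prot$, being a protocol tailored to $\topoi$, never exercises those extra channels, so they neither enlarge the set of $\prot$-traces (over $\prot$'s own signal vocabulary) nor interfere with the $\honestNetwork$ witness; I would make this precise by appeal to the semantics of $\TR(\prot,\topo)$ and the footnote noting that channels in $\topo$ unused by $\prot$ have no effect on the execution.
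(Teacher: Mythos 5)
Your overall strategy---reuse (essentially) the same protocol in the stronger topology and argue monotonicity over the trace sets---is the same as the paper's, but there are two concrete gaps, one of which is fatal to the proof as proposed.

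First, the witness $\prot' := \prot$ does not work in general, and this is not a technicality: the verdict $\verdict(\server,\ballot)$ is part of the protocol specification and, by Requirement~\ref{requirement:verdict}, may be defined in terms of $\pubChan$ signals. If $\topos$ strengthens a channel that $\prot$ actually uses from reliable ($\reliable$) to undeniable ($\observable$), every send on that channel now additionally records $\pubChan(v_a,v_b,m)$, so $\pubTrace(\tr)$ changes and the truth value of $\verdict(\server,\ballot)$ can flip on the corresponding traces. Your remark that the added signals ``do not remove traces'' misses the point: a clause such as $\pubChan(v_a,v_b,\ballot)\in\tr$ in $\prot$'s verdict is vacuous in $\topoi$ but can fire against an honest $\server$ in $\topos$, destroying $\proauth(\server)$. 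This is exactly why the paper's Case~(2ii)(f) constructs a genuinely different $\prot'$ whose verdict replaces each occurrence of $\pubChan(v_a,v_b,m)\in\tr$ by $\false$ (and each $\pubChan(v_a,v_b,m)\notin\tr$ by $\true$); the existential quantifier over $\prot'$ in the lemma statement is needed precisely because the same protocol cannot always be reused.

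Second, the literal inclusion $\TR(\prot,\topos)\subseteq\TR(\prot,\topoi)$ that carries your three universal conjuncts is false in general, for reasons beyond the extra edges you flag as the main obstacle. Strengthening a trust type (untrusted to partially trusted, or partially trusted to trusted) causes the agent to record signals that only honest agents record (e.g., $\verifyCast$), and the deniable-to-undeniable strengthening adds $\pubChan$ signals; such traces simply do not occur under the weaker topology, so they cannot be elements of $\TR(\prot,\topoi)$. The paper therefore replaces inclusion by a simulation modulo the dispute-resolution-relevant data: it defines $\drequal$ and $\drsimilar$, shows that for every trace of the stronger topology there is a $\drequal$ trace of the weaker one, and proves separately---using Requirement~\ref{requirement:verdict}, i.e.\ that $\verdict$ depends only on the $\BB$, $\Evidence$, and $\pubChan$ signals---that $\timeliness(\server)$ and $\proauth(\server)$ are invariant under this equivalence. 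Your handling of the unused extra edges and of the \honestNetwork witness for the functional property \functional is sound, but without the projection machinery and the modified verdict the inclusion step and the undeniable-channel case do not go through.
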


\subsection{Characterization of topologies enabling $\fullTime$}
We next present the minimal topologies satisfying $\fullTime$ in our voting protocol class.
In combination with the above hierarchy, this allows us to fully characterize all topologies $\topo$ that enable $\fullTime(\topo,\prot)$ for some protocol $\prot$.

\begin{figure*} 
 \centering \begin{subfigure}[b]{0.29\textwidth}
 \scalebox{0.7}{
\begin{tikzpicture}[->,>=stealth',shorten >=1pt,auto,node distance=3cm and 1.5cm,semithick]

 \node[state] (H) {$\human$};
 \node[state, accepting,style=dashed](P)[right of=H] {$\platform$};
 \node[state] (S) [right of=P] {$\server$};
 
 \node[state, accepting
 ] (BB)[below=0.2cm of P] {$\BB$};
 \node[state, accepting
 ] (A) [below=0.2cm of H] {$\auditor$};

 \path 
 (H) edge[\edgeInsec,bend left]node[above] {\reliable} (P) 
 
 (P) edge[\edgeInsec,bend left]node[above] {\observable} (S) 
 
 (S) edge[\edgeAuth
 ] node[below] {\default} (BB)
 (BB)edge[\edgeAuth
 ] node[below] {\default} (H)
 edge[\edgeAuth
 ] node[below] {\default} (A)
 ;
\end{tikzpicture}} 
\caption{Topology $\topo_1$.}
\label{fig:topoT1}
\end{subfigure} 
\begin{subfigure}[b]{0.29\textwidth}
 \scalebox{0.7}{
\begin{tikzpicture}[->,>=stealth',shorten >=1pt,auto,node distance=3cm and 1.5cm,semithick]

 \node[state] (H) {$\human$};
 \node[state, accepting,style=dashed](P)[right of=H] {$\platform$};
 \node[state] (S) [right of=P] {$\server$};
 
 \node[state, accepting
 ] (BB)[below=0.2cm of P] {$\BB$};
 \node[state, accepting
 ] (A) [below=0.2cm of H] {$\auditor$};

 \path 
 (H) edge[\edgeInsec,bend left]node[above] {\observable} (P) 
 (P) edge[\edgeInsec,bend left]node[above] {\reliable} (S) 
 
 (S) edge[\edgeAuth
 ] node[below] {\default} (BB)
 (BB)edge[\edgeAuth
 ] node[below] {\default} (H)
 edge[\edgeAuth
 ] node[below] {\default} (A)
 ;
\end{tikzpicture}} 
\caption{Topology $\topo_2$.}
\end{subfigure}
\begin{subfigure}[b]{0.29\textwidth}
 \scalebox{0.7}{
\begin{tikzpicture}[->,>=stealth',shorten >=1pt,auto,node distance=3cm and 1.5cm,semithick]

 \node[state] (H) {$\human$};
 \node[state, accepting](P)[right of=H] {$\platform$};
 \node[state] (S) [right of=P] {$\server$};
 
 \node[state, accepting
 ] (BB)[below=0.2cm of P] {$\BB$};
 \node[state, accepting
 ] (A) [below=0.2cm of H] {$\auditor$};

 \path 
 (H) edge[\edgeInsec,bend left]node[above] {\reliable} (P) 
 
 (P) edge[\edgeInsec,bend left]node[above] {\reliable} (S) 
 
 (S) edge[\edgeAuth
 ] node[below] {\default} (BB)
 (BB)edge[\edgeAuth
 ] node[below] {\default} (H)
 edge[\edgeAuth
 ] node[below] {\default} (A)
 ;
\end{tikzpicture}} 
\caption{Topology $\topo_3$.}
\end{subfigure}
 \begin{subfigure}[b]{0.29\textwidth}
 \scalebox{0.7}{
\begin{tikzpicture}[->,>=stealth',shorten >=1pt,auto,node distance=3cm and 1.5cm,semithick]

 \node[state] (H) {$\human$};
 \node[state, accepting,style=dashed](P)[right of=H] {$\platform$};
 \node[state, accepting,style=dashed] (S) [right of=P] {$\server$};
 
 \node[state, accepting
 ] (BB)[below=0.2cm of P] {$\BB$};
 \node[state, accepting
 ] (A) [below=0.2cm of H] {$\auditor$};

 \path 
 (H) edge[\edgeInsec,bend left]node[above] {\reliable} (P) 
 (P) edge[\edgeInsec]node[below] {\reliable} (H) 
 
 (P) edge[\edgeInsec,bend left]node[above] {\reliable} (S) 
 (S) edge[\edgeInsec]node[below] {\reliable} (P) 
 
 (S) edge[\edgeAuth
 ] node[below] {\default} (BB)
 (BB)edge[\edgeAuth
 ] node[below] {\default} (H)
 edge[\edgeAuth
 ] node[below] {\default} (A)
 ;
\end{tikzpicture}} 
\caption{Topology $\topo_4$.}
\label{fig:topo4}
\end{subfigure}
 \begin{subfigure}[b]{0.29\textwidth}
 \scalebox{0.7}{
\begin{tikzpicture}[->,>=stealth',shorten >=1pt,auto,node distance=3cm and 1.5cm,semithick]

 \node[state] (H) {$\human$};
 \node[state, accepting,style=dashed](P)[right of=H] {$\platform$};
 \node[state, accepting,style=dashed] (S) [right of=P] {$\server$};
 
 \node[state, accepting
 ] (BB)[below=0.2cm of P] {$\BB$};
 \node[state, accepting
 ] (A) [below=0.2cm of H] {$\auditor$};

 \path 
 (H) edge[\edgeInsec,bend left]node[above] {\reliable} (P) 
 
 (P) edge[\edgeInsec,bend left]node[above] {\reliable} (S) 
 (S) edge[\edgeInsec]node[below] {\observable} (P) 
 
 (S) edge[\edgeAuth
 ] node[below] {\default} (BB)
 (BB)edge[\edgeAuth
 ] node[below] {\default} (H)
 edge[\edgeAuth
 ] node[below] {\default} (A)
 ;
\end{tikzpicture}} 
\caption{Topology $\topo_5$.}
\end{subfigure}
\begin{subfigure}[b]{0.17\textwidth}
 \begin{center}
 \scalebox{0.7}{
\begin{tikzpicture}[->,>=stealth',shorten >=1pt,auto,node distance=3cm and 1.5cm,semithick]

 \node[state] (H) {$\human$};
 \node[state] (S) [right of=H] {$\server$};
 
 \node[state, accepting
 ] (BB)[right=0.6 of A] {$\BB$};
 \node[state, accepting
 ] (A) [below=0.2cm of H] {$\auditor$};

 \path 
 
 (H) edge[\edgeInsec,bend left]node[above] {\observable} (S) 
 
 (S) edge[\edgeAuth
 ] node[below] {\default} (BB)
 (BB)edge[\edgeAuth
 ] node[below] {\default} (H)
 edge[\edgeAuth
 ] node[below] {\default} (A)
 ;
\end{tikzpicture}}\end{center}\vspace{-8pt}
\caption{Topology $\topo_6$.}
\end{subfigure}
\begin{subfigure}[b]{0.17\textwidth}
 \begin{center}
\scalebox{0.7}{
\begin{tikzpicture}[->,>=stealth',shorten >=1pt,auto,node distance=3cm and 1.5cm,semithick]

 \node[state] (H) {$\human$};
 \node[state, accepting,style=dashed] (S) [right of=H] {$\server$};
 
 \node[state, accepting
 ] (BB)[right=0.6 of A] {$\BB$};
 \node[state, accepting
 ] (A) [below=0.2cm of H] {$\auditor$};

 \path 
 
 (H) edge[\edgeInsec,bend left]node[above] {\reliable} (S) 
 (S) edge[\edgeInsec]node[below] {\reliable} (H) 
 
 (S) edge[\edgeAuth
 ] node[below] {\default} (BB)
 (BB)edge[\edgeAuth
 ] node[below] {\default} (H)
 edge[\edgeAuth
 ] node[below] {\default} (A)
 ;
\end{tikzpicture}} \end{center}\vspace{-8pt}
\caption{Topology $\topo_7$.}
\end{subfigure}
\caption{The minimal topologies for which there exists a protocol such that $\fullTime$ can be achieved.
The channels' labels denote whether the channels are default ($d$), reliable ($r$), or undeniable ($u$). The nodes' lines denote whether the roles are untrusted (circled once), trusted (circled twice), or partially trusted (dashed circles), where a partially trusted $P$ is of type $\trustedforward$ and a partially trusted $\server$ is of type $\trustedreply$.
}\label{fig:possResults}
\end{figure*}
The minimal topologies are depicted in Figure~\ref{fig:possResults} and denoted by $\topo_1,\dots,\topo_7$.
Recall that the agents $\BB$ and $\auditor$ as well as their incoming and outgoing channels 
have fixed trust assumptions. 
In all topologies, there are roles for $\human$, $\platform$ and $\server$ (respectively for $\human$ and $\server$ in $\topo_6$ and $\topo_7$), as this is required to satisfy the functional property ($\human$ must cast a ballot, $\platform$ must forward it, and $\server$ must publish it on $\BB$). 
All topologies have a reliable path from $\human$ to $\server$ and additional trust assumptions, such as (partially) trusted roles or undeniable channels.
We present some possible real-world interpretations of these topologies in Section~\ref{subsec:topologiesprovidingfulltime}.

We now state the main theorem for our voting protocol class:
The set of topologies for which there exists a protocol that establishes $\fullTime$ consists of all topologies that make at least the assumptions that are made by one of the seven topologies in Figure~\ref{fig:possResults}.
\begin{theorem}\label{theorem:completeChar}
Let $\topo_1,\dots,\topo_7$ be the topologies depicted in Figure~\ref{fig:possResults} and $T$ be a topology in our voting protocol class.
 \begin{alignat*}{3}
 &(\exists \prot \sd \fullTime(\prot,\topo) )
 \Leftrightarrow
 (\exists i\in \{1,\dots,7\} \sd \topo_i \smallertopo \topo).
 \end{alignat*}
\end{theorem}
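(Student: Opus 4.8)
The plan is to prove the two directions of the biconditional separately, with the ``$\Leftarrow$'' direction following almost immediately from Lemma~\ref{lemma:possAndImposs} and an explicit protocol construction for each of the seven boundary topologies, and the ``$\Rightarrow$'' direction requiring an impossibility argument showing that any topology \emph{not} above one of $\topo_1,\dots,\topo_7$ admits no protocol with $\fullTime$.

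For ``$\Leftarrow$'': suppose $\topo_i \smallertopo \topo$ for some $i$. I would first exhibit, for each $i \in \{1,\dots,7\}$, a concrete protocol $\prot_i$ together with a suitable $\verdict$ definition, and verify by hand (or refer to a Tamarin model) that $\fullTime(\prot_i,\topo_i)$ holds, i.e.\ that $\timeliness(\server)$ holds for an honest $H$, $\proauth(\server)$ holds for an honest $\server$, and the functional property $\functional$ is witnessed. The protocols are small: $H$ (or $H$ via $\platform$) casts a ballot over the reliable path to $\server$, $\server$ publishes it on $\BB$, and where an undeniable channel is present the associated $\pubChan$ signal lets $\verdict$ detect a ballot that was transmitted but not recorded; where instead a partially trusted $\server$ with a reliable reply channel is present, a missing confirmation serves as the evidence (via an $\Evidence$ signal) witnessing dishonesty. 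Then Lemma~\ref{lemma:possAndImposs} lifts $\fullTime(\prot_i,\topo_i)$ to $\fullTime(\prot',\topo)$ for some $\prot'$, since $\topo_i\smallertopo\topo$.

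For ``$\Rightarrow$'': I would prove the contrapositive — if $\topo$ is \emph{not} $\smallertopo$-above any $\topo_i$, then no protocol satisfies $\fullTime(\prot,\topo)$. The argument is a case analysis on the structure of $\topo$'s graph and its channel/trust labels. First, the functional property $\functional$ forces $\topo$ to contain roles $H$, $\platform$, $\server$ (or $H$, $\server$ when the $G_U$-style unified role is used) with a directed path $H \to \platform \to \server \to \BB$, so any $\topo$ failing this is handled immediately. Given such a path, I would then show: (a) each edge on the $H$-to-$\server$ path must be reliable (otherwise, using Assumption~1, consider the run where the adversary drops the cast ballot on that edge: $H$ has cast, $\End$ occurs, the ballot is never recorded, and since no undeniable channel carried it and no confirmation is owed by a partially-trusted party on that segment, $\verdict$ cannot fire without violating $\proauth$ on a parallel honest-$\server$ run — this is the Requirement~\ref{requirement:verdict} ``public trace'' indistinguishability bite); and (b) on top of a fully reliable path, some \emph{extra} assumption is still needed — either an undeniable channel somewhere on the forwarding path, or a partially/fully trusted $\server$ paired with a reliable $\server\!\to\!(\platform\text{ or }H)$ reply channel, or partial trust on $\platform$ — because with all path channels merely reliable-and-deniable and all of $H,\platform,\server$ untrusted, a dishonest $\server$ can receive the ballot and simply not publish it, leaving a run whose public projection $\pubTrace$ is identical to an honest-but-network-delayed run, so again $\verdict$ cannot distinguish them. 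Enumerating the minimal ways to add such an extra assumption yields exactly $\topo_1,\dots,\topo_7$, so any $\topo$ lacking all of them is below none of them and admits no protocol.

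The main obstacle is direction ``$\Rightarrow$'': making the indistinguishability arguments airtight. The key technical lever is the consequence of Requirement~\ref{requirement:verdict} already noted in the excerpt — that $\tr_1\in\verdict(\server,\ballot)\iff\tr_2\in\verdict(\server,\ballot)$ whenever $\pubTrace(\tr_1)=\pubTrace(\tr_2)$ — which I would use to pair, for each candidate ``missing'' assumption, a bad run (honest $H$, dishonest or network-blocked $\server$, ballot never recorded, no $\verdict$-triggering evidence) against a good run (honest $\server$) with the same public projection; if $\verdict$ fired on the bad run it would fire on the good one, contradicting $\proauth(\server)$, yet if it does not fire the bad run violates $\timeliness(\server)$. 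The delicate part is checking that these two runs genuinely share a public projection in every sub-case — in particular arguing that no $\pubChan$ signal is forced (no undeniable channel on the relevant segment) and no $\Evidence$ signal that would legitimately convict $\server$ can be produced by an honest $H$ (which requires tracking what honest $H$ can deduce under the equational theory, given that $\CastBy$ and $\Tally$ need not be publicly computable and that at the protocol's start no honest agent but $H$ knows $H$'s ballot). Combinatorially, one must also check that the seven topologies are pairwise $\smallertopo$-incomparable and that the enumeration of ``minimal extra assumptions'' is exhaustive; I would organize this as a lemma classifying, for a fixed reliable path, which single additional trust/channel strengthening suffices, and then argue minimality and completeness against that classification.
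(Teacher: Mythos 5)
Your proposal matches the paper's proof in both structure and substance: the ``$\Leftarrow$'' direction via explicit small protocols for each $\topo_i$ verified in Tamarin and lifted by Lemma~\ref{lemma:possAndImposs}, and the ``$\Rightarrow$'' direction via necessary-condition lemmas (reliability of the $H$-to-$\server$ path, plus an evidence-generating assumption) proved by pairing a timeliness-violating run against an honest-$\server$ run with identical $\pubTrace$ projection, exploiting the consequence of Requirement~\ref{requirement:verdict} exactly as the paper's Lemmas~\ref{lemma:TopoReliableChannel}--\ref{lemma:TopoEvidenceUnified} do, followed by the exhaustive minimality check. The approach is correct and essentially identical to the paper's.
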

We only explain the high level idea of the proof here and refer to 
Appendix~\ref{appendix:proofoftheorem:completeChar} 
for the details.
\begin{proof}[Proof Sketch]
First, we establish necessary requirements for topologies to enable $\fullTime$, by showing by pen-and-paper proofs that any topologies that do not meet these requirements cannot satisfy $\fullTime$ with any protocol.
Next, we show that these requirements, which are met by the topologies $\topo_1,\dots,\topo_7$ in Figure~\ref{fig:possResults} are sufficient. 
In particular, we prove by automated proofs in Tamarin (see~\cite{tamarinfiles}) that for each topology $\topo_i \in \{\topo_1,\dots,\topo_7\}$ there exists a simple protocol $\prot_i$ for which $\fullTime(\topo_i,\prot_i)$.
Finally, we show (by hand) that the topologies $\topo_1,\dots,\topo_7$ in Figure~\ref{fig:possResults} are the only \emph{minimal} topologies satisfying the necessary and sufficient requirements.
It follows that all topologies in our class are either stronger than one of the topologies $\topo_1,\dots,\topo_7$ and, by Lemma~\ref{lemma:possAndImposs}, also also establish $\fullTime$ with some protocol
or they are weaker than one of the topologies $\topo_1,\dots,\topo_7$ and thus do not meet the necessary requirements for $\fullTime$.
\end{proof}

The theorem shows that strong assumptions are indeed necessary to achieve timeliness for dispute resolution.
In particular, unreliable channels are insufficient.
In most cases, undeniable channels or trusted platforms are required. 
This can only be avoided in those topologies where there are reliable paths both from the voters to the authority and back.
Moreover, \fullTime cannot hold when the platforms are untrusted. 
This generalizes~\cite{BMDsStark}, which states that dispute resolution (called \emph{contestability} in~\cite{BMDsStark}) cannot hold in poll-site voting protocols where ballot-marking devices can be corrupted.

Recall that, in our protocol class, we also allow for off-line devices $\device$. Our analysis shows that $\device$ is irrelevant for the question of whether or not $\fullTime$ can be achieved.
Also, it is irrelevant whether the channels between the voters and the authority are authentic, confidential, or secure. 
In particular, they can all be insecure.
Nevertheless such devices and channels are needed in voting to satisfy other properties, for example privacy.

\section{Dispute resolution in practice}
\label{sec:uniqueness}

We now give a practical interpretation of the above results and
illustrate how our formalism can be used.

\subsection{Topologies providing \fullTime}
\label{subsec:topologiesprovidingfulltime}
Consider the topologies $\topo_1,\dots,\topo_7$ in Figure~\ref{fig:possResults}.
In $\topo_1$, there is an undeniable channel from $\platform$ to $\server$. 
When the platforms are physical ballot boxes, this can be interpreted as the assumption that sufficiently many witnesses see all ballots in the boxes and observe that they are forwarded and considered in the tallying process.
The undeniable channel between $\human$ and $\platform$ in $\topo_2$ could, for example, model that 
witnesses 
at each polling station 
observe 
voters' 
attempts
to cast their ballot, e.g., by scanning their already encrypted ballot on a voting machine~\cite{pretAVoter}.
The trusted $\platform$ in $\topo_3$ models, for example, that everyone trusts the 
voting machines 
used to compute and cast ballots. In this case, the machines can store a trustworthy record of what ballots have been cast for dispute resolution.

In topology $\topo_4$, the paths from $\human$ to $\server$ as well as from $\server$ to $\human$ are reliable. $\platform$ and $\server$ are respectively trusted to forward and reply. 
In a remote setting, the reliable channel from $H$ to $P$ could model that voters can always successfully enter messages on their platforms, for example on a working keyboard.
The voters could try with several platforms~\cite{remotegrity} to cast their ballot remotely and receive a confirmation from $\server$ or, in the worst case, go to a physical polling station to do so.
The assumptions then model that the voters can find a working platform and website (e.g., public platforms in libraries, polling places, etc.) or they can find a polling station that issues them with a valid confirmation before the election closes.
In $\topo_5$ and $\topo_6$, the undeniable channels could model a distributed ledger on which everyone can respectively observe when confirmations are issued or ballots are cast.
Finally, $\topo_7$ could model a remote setting similar to $\topo_4$, but where ballots are cast by the trusted platforms.

\subsection{Resolving dispute \stwo in protocols without re-voting}

In practice, the properties $\provoterCast(\server)$ and $\timeliness(\server)$ can be established in a protocol that provides evidence that a ballot was received by $\server$.
For example, this can be achieved by an undeniable channel or by a confirmation that is sent back from $\server$ to the voter upon a ballot's receipt.
$\verdict(\server,\ballot)$ can then be defined as the set of traces where a ballot $\ballot$ was received by $\server$ but is not in the set of recorded ballots on the bulletin board (see Section~\ref{subsec:CaseStudy} for a concrete example).
In contrast, it is often unclear how $\provoterAbstain(\server)$ can be established as a voter who abstains cannot prove the \emph{absence} of a message.
To solve this issue, we show that $\provoterAbstain(\server)$ is, in many cases,
entailed by the \uniqueness property, defined next, that can be
achieved using standard techniques. 
We prove this for protocols without re-voting and assume such protocols in the rest of this section.

$\uniqueness(\server)$ states that whenever any recorded ballots are published and $\server$ is not considered dishonest according to $\verdict(\server,\ballot)$ for some $\ballot\neq \perp$, then 
each recorded ballot $b'$ has been sent by a unique eligible voter $H$ for which $\CastBy(b')=H$.
Thereby, the ballot can be sent as part of a larger composed message. To express that a message $m'$ is a subterm of another message $m$, we write $m \subterm m'$.
As everyone can evaluate $\verdict(\server,\ballot)$, the property's preconditions and thus $\uniqueness(\server)$ are verifiable by everyone. 
\begin{mydef}\label{def:uniqueness}
Let the length of the list $\bs$ be $n$.
\begin{alignat*}{1}
&\uniqueness(\server) := \; \{\tr \mid 
{\ballot \neq \perp \wedge}
 \tr \notin \verdict(\server,\ballot)
 \\\multicolumn{2}{r}{$ 
 \wedge \BBrecorded(\bs) \in \tr 
 \wedge j \in \{1,\dots,n\}
 \wedge i \in \{1,\dots,n\} \implies
 $}
 \\\multicolumn{2}{r}{$ 
 \exists \Hs,i',j', A_1,A_2,m_1,m_2 \sd
 \BBH(\Hs) \in \tr 
 $}
 \\\multicolumn{2}{r}{$ 
 \wedge \CastBy(\bs_i)=\Hs_{i'} 
 \wedge \CastBy(\bs_j)=\Hs_{j'}
 $}
 \\\multicolumn{2}{r}{$ 
 \wedge \send(\Hs_{i'},A_1, m_1)\in \tr 
 \wedge \send(\Hs_{j'},A_2,m_2 )\in \tr 
 $}
 \\\multicolumn{2}{r}{$ 
 \wedge m_1\subterm \bs_i
 \wedge m_2 \subterm \bs_j 
 \wedge (i \neq j \implies \Hs_{i'} \neq \Hs_{j'})
 \}. 
 $}
\end{alignat*} 
\end{mydef}
\noindent
The property's guarantees are similar to \emph{eligibility verifiability}~\cite{kremerVerif} in that both state that each element of a list on the bulletin board is associated with a unique eligible voter and we compare the two notions in more detail in 
Appendix~\ref{caseStudy:SecurityProperties}.
Note that the property can only hold for protocols where the list of eligible voters is publicly known.

Intuitively, if a protocol satisfies $\uniqueness(\server)$, then a ballot recorded for the voter $H$ implies that $H$ cast it.
Thus, for any ballot that was not cast by $H$, $\server$ cannot convincingly claim the contrary and an honest voter is thus protected in disputes \stwo.
In particular, 
the traces in the protocol also satisfy $\provoterAbstain(\server)$.
We prove the following theorem in 
Appendix~\ref{appendix:securityproperties1}.

\begin{theorem}\label{theorem:uniquenessImpliesDR}
 Let $\prot$ be a protocol in our class without re-voting and where a voter who abstains does not send any message and let $\topo$ be a topology in our class.
\begin{alignat*}{1}
& \forall \tr \in \TR(\prot, \topo) \sd 
\\\multicolumn{2}{r}{$
\tr \in \uniqueness(\server) 
\implies \tr \in \provoterAbstain(\server).
$}
\end{alignat*}
\end{theorem}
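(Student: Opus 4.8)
The plan is to prove the contrapositive of the implication inside the set definition: fix a trace $\tr \in \TR(\prot,\topo)$, assume $\tr \notin \provoterAbstain(\server)$, and derive $\tr \notin \uniqueness(\server)$. Unfolding Definition~\ref{def:provoterAbstain}, $\tr \notin \provoterAbstain(\server)$ means there exist $b_H$ with $\verifyNoVote(H,b_H) \in \tr$, together with $\bs$ and a ballot $b$ such that $\BBrecorded(\bs) \in \tr$, $b \in \bs$, $\CastBy(b) = H$, $b \notin b_H$, and moreover $\tr \notin \verdictNoVote(\server,\ballot)$ for \emph{every} ballot $\ballot$. Since the protocol is without re-voting and a voter who abstains sends no message, the signal $\verifyNoVote(H,b_H)$ together with the protocol's structure forces $b_H = \emptyset$: a voter can cast at most one ballot, and here there is already a recorded ballot $b$ with $\CastBy(b)=H$ that lies outside $b_H$; combined with $H$ only running the ``cast'' or ``abstain'' role, the honest check $\verifyNoVote$ is only recorded by (a device of) $H$ when $b_H$ records exactly the ballots $H$ actually cast, which is $\emptyset$ in the abstaining case and a singleton otherwise. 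I would argue that in either case we obtain a recorded ballot $b = \bs_i$ with $\CastBy(\bs_i) = H$ that was \emph{not} sent by $H$ — this is the crux and I return to it below.

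Next I would instantiate $\uniqueness(\server)$ and show $\tr$ fails it. We already have $\tr \notin \verdict(\server,\ballot)$ for all $\ballot \ne \perp$ (from the failure of $\provoterAbstain$), and $\BBrecorded(\bs)\in\tr$. Take $i$ to be the index with $\bs_i = b$, and take $j = i$. The definition of $\uniqueness(\server)$ then demands the existence of $\Hs$, indices $i', j'$, agents $A_1, A_2$, and messages $m_1, m_2$ with $\BBH(\Hs)\in\tr$, $\CastBy(\bs_i) = \Hs_{i'}$, $\send(\Hs_{i'},A_1,m_1)\in\tr$, and $m_1 \subterm \bs_i$ (the $j$-parts coincide since $i=j$, and the disjointness clause $i \ne j \implies \Hs_{i'}\ne\Hs_{j'}$ is vacuous). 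Since $\CastBy(\bs_i) = H$ is a function value, any witness $\Hs_{i'}$ with $\CastBy(\bs_i) = \Hs_{i'}$ must equal $H$. So $\uniqueness$ requires $\send(H, A_1, m_1)\in\tr$ with $m_1 \subterm \bs_i = b$. I would then show this send cannot exist.

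The main obstacle is precisely establishing that $H$ never sent any message $m_1$ that is a subterm of $b$. I would handle this by a case split on the topology variation: in $\advModel{\topo}{\hShH}$ and $\advModel{\topo}{\mShH}$, $H$ is honest and the protocol-start assumption (no honest agent except $H$ initially knows $H$'s ballot, and here $H$ abstained, so $H$ holds no ballot) together with the term-algebra deducibility restriction on role specifications means an honest abstaining $H$ emits no message at all; in $\advModel{\topo}{\hSmH}$, $H$ may be dishonest, but then the honest check $\verifyNoVote(H, b_H)$ would not be recorded for a dishonest $H$ — $\verifyNoVote$ is an \emph{honest} agent's signal — so this case does not arise. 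Thus in every relevant case either $H$ sends nothing relevant or the premise $\verifyNoVote(H,b_H)\in\tr$ is not available. Hence no valid witness for $\uniqueness$ exists and $\tr \notin \uniqueness(\server)$, completing the contrapositive. I would also remark that the argument uses the no-re-voting hypothesis only to pin down $b_H$ (so that $b \notin b_H$ genuinely signals an unsent ballot) and uses ``abstaining voters send nothing'' to kill the final $\send$ obligation; both hypotheses are therefore essential and appear exactly where expected.
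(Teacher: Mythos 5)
Your proof is correct and is essentially the paper's argument run in contrapositive form: the paper assumes $\tr\in\uniqueness(\server)$ and uses the absence of any $\send(H,\cdot,\cdot)$ signal for an honest abstaining voter to force $\tr\in\verdict(\server,\ballot')$ for every $\ballot'\neq\perp$ (thereby discharging the second disjunct of $\provoterAbstain(\server)$), whereas you assume $\provoterAbstain(\server)$ fails and use the same missing send to refute the conclusion of $\uniqueness(\server)$ at the index of the offending ballot. The only difference worth noting is that the paper simply observes that in protocols without re-voting $\verifyNoVote$ is recorded only for honest voters who abstain (so $b_H=\emptyset$ always), which lets it dispense entirely with the singleton-$b_H$ case you tentatively entertain but never fully close.
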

\noindent
The theorem has the practical application that, while it is often unclear how $\provoterAbstain(\server)$ can be directly realized, 
$\uniqueness(\server)$ can easily be achieved using standard techniques, such as voters signing their ballots.
We provide an example in Section~\ref{subsec:CaseStudy}.

\subsection{How to use our formalism}
\label{subsec:howToUseOurFormalism}
Given the above results, our formalism can be used to analyze whether a protocol $\prot$ and topology $\topo$ in our class of voting protocols satisfy all dispute resolution properties introduced in Section~\ref{sec:disputeResolution}.
If there is no topology $\topo_1, \dots,\topo_7$ in Figure~\ref{fig:possResults}, such that $\topo_i \smallertopo \topo$, then we can immediately conclude, by Theorem~\ref{theorem:completeChar}, that $\timeliness(\server)$ and $\proauth(\server)$ cannot hold while the protocol is also functional.
Otherwise, 
analysis is required whether the properties are indeed satisfied by $\prot$.

First, let $\prot$ be a protocol that defines a dispute resolution procedure, i.e., specifies the set $\verdict$.
Our formalism is mainly intended for this case and can directly be used to analyze whether $\provoterCast(\server)$, $\protime(\server)$, $\provoterAbstain(\server)$, and $\proauth(\server)$ hold in such a protocol.
In protocols without re-voting that satisfy $\uniqueness(\server)$ and the preconditions of Theorem~\ref{theorem:uniquenessImpliesDR}, $\provoterAbstain(\server)$ can also be proven by proving $\uniqueness(\server)$ and concluding $\provoterAbstain(\server)$ by Theorem~\ref{theorem:uniquenessImpliesDR}. 

If a protocol $\prot$ does not define a dispute resolution procedure $\verdict$ then our properties are also undefined.
Nevertheless, one can still try to define a verdict $\verdict$ using the protocol's specified signals $\BB$, $\Evidence$, and $\pubChan$ and the terms contained in these signals.
Our formalism can then be used to establish for each such $\verdict$ which properties are satisfied.
However, to prove that no definition of $\verdict$ achieves dispute resolution, all possible combinations and relations of the above signals and their terms must be considered. 
Thus, it is in general not straightforward to efficiently conclude
that no appropriate definition of $\verdict$ exists for a given protocol.

\subsection{A mixnet-based voting protocol with dispute resolution}
\label{subsec:CaseStudy}

\begin{figure*}
\begin{center}
 \scalebox{0.8}{\includegraphics{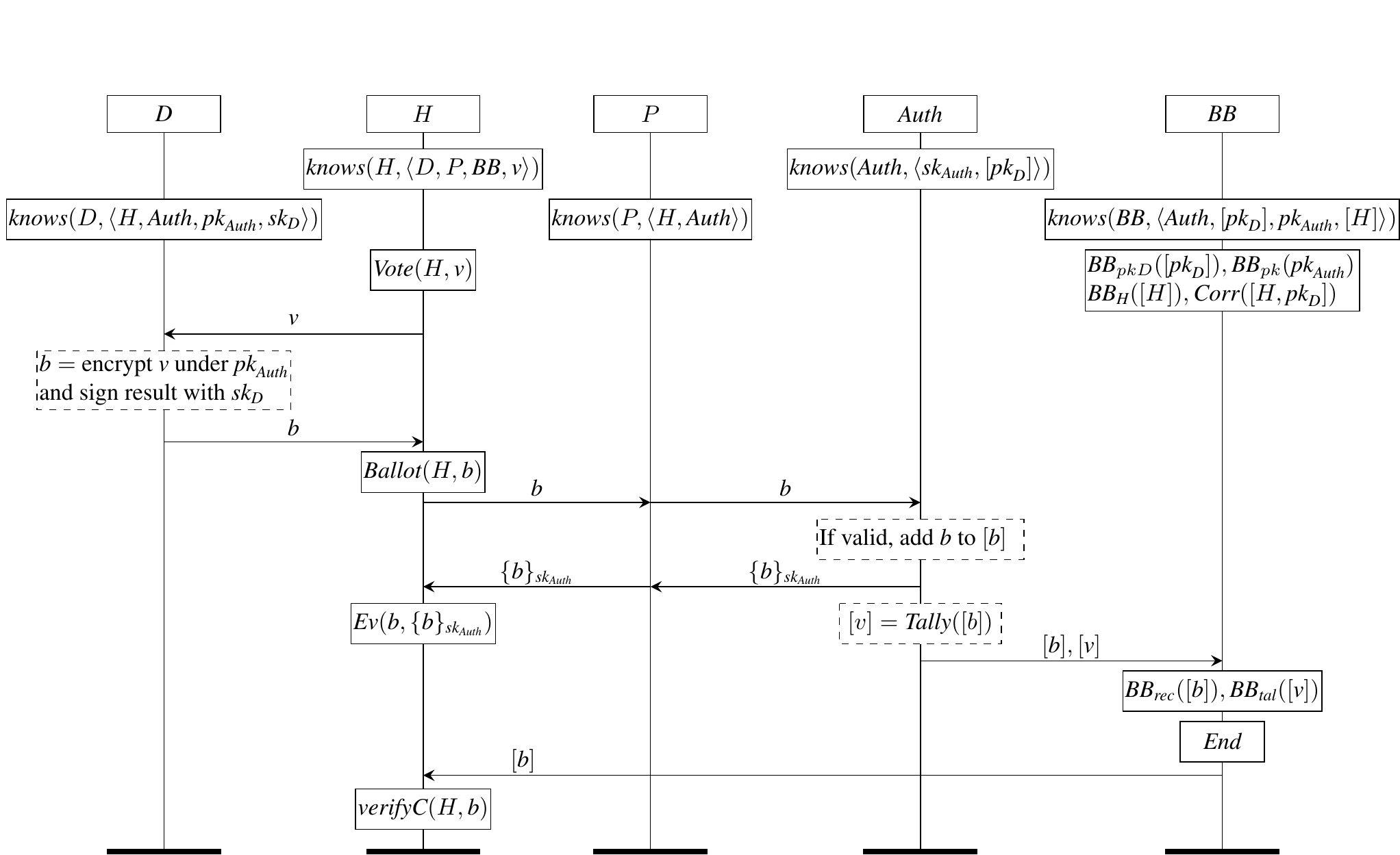}}
 \scalebox{0.9}{ \parbox[c]{\textwidth}{
 \begin{alignat*}{1}
 &\textit{For } \ballot = \perp:\; \verdict(\server, \ballot):=\{ \}.
 \\&
\textit{For }\ballot \neq \perp: \;
\verdict(\server, \ballot):=\{\tr |
(\exists [\ballot], \pkS, \confirmation \sd \BBpkS(\pkS) \in \tr \wedge \Evidence(\ballot,\confirmation) \in \tr 
\wedge \ver(\confirmation, \pkS) = \ballot
\wedge \BBrecorded([\ballot]) \in \tr
\wedge \ballot \notin [\ballot])
\\&\multicolumn{2}{r}{$
\vee 
(\exists [\ballot],[\pkD] \sd
\BBrecorded(\bs)\in \tr 
\wedge \BBpkD([\pkD])\in \tr
\wedge
\textit{ not all ballots in }\bs
\textit{ contain a signature associated with a unique key in }[\pkD])\}
$}
\end{alignat*}
}}
\end{center}
 \caption{
 Simplified protocol specification for \protocolnameInst, without the auditor role and the full function definitions.
 Here $\pkD =\pk(\skD)$, $\pkS=\pk(\skS)$, and $\CastBy(\ballot)= H$ holds iff $ \exists \pk \sd\ver(\ballot,\pk)\neq \perp \wedge \langle H,\pk\rangle \in [H,\pk] \wedge \corresponds([H,\pk])\in \tr$.
The protocol's setup specifies a single agent $\server$, that each voter $H$ is associated with a unique trusted off-line device $D$, and that there is no restriction on the relation between voters $H$ and platforms $P$.
The role for a voter $H$ who abstains consists of receiving the list of recorded ballots from the bulletin board followed by the signal $\verifyNoVote(H,\emptyset)$.
 }\label{fig:protAletheaShort}
\end{figure*}

To demonstrate the applicability of our formalism, we next analyze \protocolnameInst, a standard mixnet-based voting protocol inspired by~\cite{alethea} with a dispute resolution procedure similar to~\cite{sElect}.
In particular, we show how $\verdict$ is instantiated, how the properties $\provoterCast(\server)$ and $\timeliness(\server)$ differ in practice, and that our dispute resolution properties are compatible with standard voting properties, such as verifiability and receipt-freeness.
Due to space constraints, we only describe the protocol's main features here, omitting some details such as the auditor's role and the precise definition of some functions and equations in the term algebra.
For the detailed protocol specification, the properties' formal definitions, and the proofs we refer to
Appendix~\ref{appendix:AnalyzingAmixnetbasedVotingProtocolWithDisputeResolution}

\subsubsection{Topology}
We consider a topology $\topoInst$ that is as $\topo_4$ in Figure~\ref{fig:topo4}, except that there is also a trusted off-line device $D$, which is connected to the voter $H$ by (bidirectional) secure, default channels.
$\topoInst$ specifies reliable channels between $H$ and the platform $P$ and between $P$ and the authority $\server$. Also, $P$ and $\server$ are partially trusted to forward messages and reply to messages, respectively. 
Thus, by Theorem~\ref{theorem:completeChar}, it is possible to achieve \fullTime in the topology $\topoInst$.

\subsubsection{Protocol}\label{subsubsec:mixnet:protocol}
We present the protocol as a \emph{message sequence chart}, where each role is depicted by a vertical \emph{life line} and where the box on top names the role.
A role's life line denotes the role's events, ordered sequentially. 
A role's sent and received messages are depicted on top of arrows that start at the sender and end at the recipient.
Also, we denote explicit signals by solid squares and the roles' internal computations by dashed squares. 

\protocolnameInst's simplified specification is depicted in Figure~\ref{fig:protAletheaShort}.
The protocol's setup specifies that at each point in time, only one election takes place (i.e., there are no parallel sessions) and each voter possesses a unique trusted device $D$ to which he has exclusive access.
All devices are equipped with a unique signing key $\skD$ and the authority with a unique secret key $\skS$.
The corresponding verification keys from the devices $[\pkD]$ are known to $\server$ and $\server$'s public key $\pkS$ is known to all devices. Moreover, all these public keys are 
published on $\BB$ (denoted by the signals $\BBpkD$ and $\BBpkS$, respectively).
Additionally, at the protocol's start $\BB$ knows and publishes the list of eligible voters $\Hs$ (denoted by the signal $\BBH$) and which verification key corresponds to which voter. 
The latter is denoted by the signal $\corresponds([H,\pkD])$, where each pair $\langle H,\pkD \rangle$ in the list denotes that the signing key corresponding to $\pkD$ is installed on $H$'s device.

To vote, a voter $H$ uses his device $D$ to compute the ballot as follows: the vote is encrypted under $\server$'s public key and signed by the device.
Then, the voter casts his ballot by entering it on any platform $P$, which forwards it over the network to $\server$. 
For each received ballot $b$, $\server$ checks $b$'s validity, namely whether $b$ contains a signature corresponding to an eligible voter who has not previously voted. 
If this is the case, $\server$ adds $b$ to the list of recorded ballots $\bs$.
Moreover, as in other protocols~\cite{sElect}, to achieve dispute resolution, $\server$ sends back a confirmation to the voter $H$.
The confirmation consists of $H$'s ballot $b$ signed by $\server$ and serves as evidence that $b$ was indeed received by the authority.
The voter keeps this confirmation as evidence for later disputes (indicated by the signal $\Evidence$).

After the voting phase, $\server$ computes the tally from the recorded ballots $\bs$.
For this, a standard mixnet is used to decrypt the ballots. This procedure has the properties that no one can learn the correspondence between the encrypted ballots and the decrypted votes. Nevertheless, the mixnet produces evidence, which is published by $\server$ on the bulletin board, that allows everyone to verify that the tally was computed correctly.
We describe the detailed functions and equations modeling the $\Tally$ function in 
Appendix~\ref{subsubsec:newFunctionsAndEquations}.
Also, we describe there the detailed information that is produced by the mixnet and published on $\BB$ and how an auditor inspects this information to verify the tally.

Among other information, $\server$ publishes on $\BB$ the recorded ballots $\bs$ and the votes in the final tally $\vs$, as shown in Figure~\ref{fig:protAletheaShort}.
This allows a voter to read the recorded ballots on $\BB$ and verify that his ballot is included in this list.

A voter who abstains does not send any messages. After the results are published, he reads the list of recorded ballots $\bs$ on $\BB$ and believes at that step that no ballot should be recorded for him, which is denoted by the signal $\verifyNoVote(H,\emptyset)$.

We complete the protocol's specification with the definitions of the function $\CastBy$ and the dispute resolution procedure $\verdict$.
$\CastBy$ specifies that a ballot $b$ is considered to be cast by the voter $H$ if the ballot's signature can be verified with the verification key that is associated with $H$. 
$\verdict$ specifies that $\server$ is considered dishonest in all traces where (a) some agent possesses evidence consisting of a ballot $b$ signed by $\server$ that is not included in the recorded ballots $\bs$ on the bulletin board or (b) not all published recorded ballots $\bs$ contain a signature of a unique eligible voter. 
$\CastBy$ is defined in Figure~\ref{fig:protAletheaShort}'s caption and the description of $\verdict$ is given in Figure~\ref{fig:protAletheaShort}, although we omit here the details of how we model (b).

\subsubsection{Dispute resolution}\label{subsubsec:mixvote:DR}
Intuitively, by the channel and trust assumptions, each voter who casts a ballot $b$ receives, before the election's end, a confirmation.
As this confirmation serves as evidence that $b$ must be on $\BB$, $\provoterCast(\server)$ and $\timeliness(\server)$ hold.
Furthermore, since no one can forge $\server$'s signature, for a ballot $b$ that was not actually received by $\server$ no one can produce (false) evidence 
that $b$ should be on $\BB$.
Thus, $\server$ cannot be falsely convicted and $\proauth(\server)$ holds too.
Moreover, $\uniqueness(\server)$ holds because, when $\verdict$ does not hold in an execution, all recorded ballots are signed (and thus were sent) by a unique eligible voter.
In particular, $\uniqueness(\server)$ implies $\provoterAbstain(\server)$ in \protocolnameInst.

To understand the difference between $\provoterCast(\server)$ and $\timeliness(\server)$,
take a topology $\topoInst'$ equal to $\topoInst$ except that $\server$ is untrusted.
Assume for simplicity that a voter can interpret whether $\server$'s signature on the confirmation is valid. In reality, this would require an additional protocol step where the voter uses a device.
When the protocol is run in $\topoInst'$, it satisfies $\provoterCast(\server)$, as a voter only proceeds with his verifiability check when he has previously received a valid confirmation that convinces everyone that his ballot must be recorded.
However, $\timeliness(\server)$ is violated as $\server$ may never reply with a valid confirmation and thus block a voter.
Consequently, there is an unresolved dispute where an outside observer cannot tell whether a voter did not cast a ballot or the authority did not send a confirmation.
In contrast, when the protocol is run in topology $\topoInst$, $\server$ always sends a timely response and such disputes do not occur.

\subsubsection{Standard voting properties}
In addition to the dispute resolution properties, we prove in
Appendix~\ref{appendix:AnalyzingAmixnetbasedVotingProtocolWithDisputeResolution}
that \protocolnameInst satisfies
end-to-end verifiability, consisting of individual verifiability and \emph{tallied-as-recorded}, as well as \emph{eligibility verifiability}~\cite{kremerVerif}.
Tallied-as-recorded and eligibility verifiability are two universal verifiability properties that respectively denote that an auditor can verify that the recorded ballots are correctly counted in the final tally and that each vote in the final tally was cast by a unique eligible voter.
We also prove that \protocolnameInst satisfies \emph{receipt-freeness}~\cite{delaune}, which denotes that a voter cannot prove to the adversary how he voted, even when he provides the adversary with all secrets that he knows.
Intuitively, receipt-freeness holds 
because
the adversary cannot access the voter's device $D$. 
Moreover, the evidence used for disputes only contains the ballot and does not reveal the underlying (encrypted) vote. 

\subsubsection{Proofs}
\label{subsubsec:mixvote:proofs}
We prove in 
Appendix~\ref{appendix:mixvote:analysis}
and by the Tamarin files in~\cite{tamarinfiles}
that \protocolnameInst satisfies all above mentioned properties when run in the topology $\topoInst$.
In particular, we establish most of the properties by automatically proving them for one voter who casts a ballot in Tamarin and by proving them for an arbitrary number of voters by pen-and-paper proofs.
The only exceptions are: receipt-freeness, which we prove by Tamarin's built in support for observational equivalence~\cite{tamarinDiff};
$\provoterAbstain(\server)$, which we deduce (by hand) 
from $\uniqueness(\server)$ using Theorem~\ref{theorem:uniquenessImpliesDR};
and end-to-end verifiability which we deduce (by hand) from individual verifiability and tallied-as-recorded.
 
\section{Related work}\label{sec:relatedWork}

\subsection{Dispute resolution in poll-site voting protocols}

The idea of dispute resolution has been informally considered for \emph{poll-site} voting protocols.
In~\cite{vvote}, the property considered is called \emph{non-repudiation} and requires that failures \emph{``can not only be detected, but (in most cases) demonstrated''} and that no false convictions can be made.
\cite{BMDsStark} informally considers the properties \emph{contestability} and \emph{defensibility}, which are similar to our dispute resolution properties in that they also protect the honest voters and the honest authority.
Contestability requires that some guarantees hold for a voter when he starts the voting process at a polling station. In contrast, our properties $\timeliness(\server)$ and $\provoterCast(\server)$ are also suitable for remote settings and respectively make guarantees once a voter casts his ballot and believes that it should be recorded.
Moreover,~\cite{BMDsStark}'s definitions are informal and they do not consider timeliness.

In most poll-site voting protocols that consider dispute resolution,
voters receive a confirmation as evidence that their ballot was accepted by the authority~\cite{Forensics,vvote2,scantegrity2b,scantegrity2a,vvote,audiotegrity,scantegrity3}.
In some protocols~\cite{vvote2,vvote}, this confirmation contains the authority's digital signature. 
In the protocols based on Scantegrity~\cite{scantegrity2b,scantegrity2a,audiotegrity,scantegrity3} the confirmation consists of a code that is (physically) hidden on the ballot by invisible ink and revealed when a voter marks his choice.
A voter's knowledge of a valid code serves as evidence that he voted for a candidate. 
Thus, when a wrong ballot is recorded, a voter can prove the authority's dishonesty by revealing the code.

Compared to remote voting settings, poll-site protocols profit from the fact that on-site witnesses can observe certain actions.
For example, if voters are repeatedly prevented from casting their ballots, this is visible to other voters and auditors in the polling station.
Some protocols~\cite{audiotegrity} even explicitly 
state that voters should publicly declare some decisions before entering them on the voting machine to avoid disputes regarding whether the voting machine correctly followed their instructions.
Our notion of undeniable channels allows one to formally consider such assumptions during protocol analysis.

\subsection{Dispute resolution in remote voting protocols}
\emph{Remotegrity}~\cite{remotegrity} is a remote voting protocol based on Scantegrity, where paper sheets are sent to the voters by postal mail and ballots are cast over the Internet. 
As with Scantegrity~II and III~\cite{scantegrity2b,scantegrity2a,scantegrity3}, to achieve dispute resolution some codes on these sheets are obscured by a scratch-off surface. 
If a voter detects a (valid) ballot that is incorrectly recorded for him, he can show to anyone that he has not yet scratched off the relevant codes on his sheets and thus the authority must have falsely recorded this ballot.

\cite{remotegrity} discusses several dispute scenarios with respect to whether a ballot is recorded correctly.
However, it is stated that \emph{``The [authority] can always force a denial-of-service [..] What Remotegrity does not allow is the [authority] to fully accept (i.e., accept and lock) any ballot the voter did not cast without the voter being able to dispute it.''} 
Thus, the focus is on disputes \stwo in Figure~\ref{fig:disputeSituations}, while timeliness in disputes \sone is not further explored.
Moreover, the considered properties as well as the assumed setting are not specified precisely and thus the properties cannot be proven. 
In contrast, our model enables specifying detailed adversary and system assumptions and provides definitions of dispute resolution properties that can be formally analyzed.

\subsection{Accountability}
\label{subsec:thedisputersolutionproperty}

Our dispute resolution properties are closely related to different notions of \emph{accountability}~\cite{accountabilityBruni,accountabilityCSF19,accountability}.
Both accountability and our properties formalize how misbehaving protocol participants are identified.
While the accountability definitions are generic and allow one to blame different agents in different situations, we focus on understanding what disputes and properties are relevant for voting.

Two accountability definitions have been instantiated for voting protocols.
First, accountability due to \accountabilityCite was instantiated for \emph{Bingo Voting}~\cite{bingoVoting} in~\cite{accountability}, for \emph{Helios}~\cite{helios} in~\cite{clash}, and for \emph{sElect}~\cite{sElect}.
These instantiated notions of accountability state that when a defined goal is violated, then some (dishonest) agents can be blamed by a \emph{judge}.
A judge may blame multiple parties. 
As a result, in~\cite{clash} accountability does not guarantee an unambiguous verdict when a voter claims that his ballot is incorrectly recorded. 
That is, the property does not guarantee the resolution of such disputes even when the voter is honest.
The same holds in~\cite{accountability} and~\cite{sElect} for disputes where a voter claims that he did not receive a required confirmation.
To avoid ambiguous verdicts, \cite{accountability} proposes an alternative accountability property where voters' claims that they did not receive a required confirmation are just ignored.
However, this property does not guarantee that the authority is blamed in all situations where an (honest) voter's ballot is not recorded correctly and dispute resolution does not hold.

Second, accountability due to \emph{Bruni et al.}~\cite{accountabilityBruni} has been instantiated for Bingo Voting in~\cite{accountabilityBruni}.
In this work, \emph{accountability tests} decide whether a given agent
should be blamed. However, the accountability test takes as input a ballot and a confirmation that the voter received 
when casting his ballot. 
Thus disputes where a voter claims that he cannot receive a confirmation are not considered at all.

In contrast to these two accountability notions, we also consider and resolve disputes where a voter claims that he did not receive a required response from $\server$ after casting the ballot by the property $\timeliness(\server)$.
Moreover, our topology characterization allows us to quickly assess when given assumptions are insufficient to satisfy $\timeliness(\server)$.

\subsection{Other related properties}
\emph{Collection accountability}~\cite{pubEvidence} states that when a vote is incorrectly collected, the voter should be provided with evidence to convince an \emph{``independent party''} that this is the case, but it has neither been formally defined nor analyzed.
\emph{Dispute freeness}~\cite{framework} states that there is never a dispute.
This property is considered in voting protocols where voters are modeled as machines that conduct an election by engaging in a multi-party protocol~\cite{Schoenmakers99,selfTallying} and is thus inappropriate for large scale elections where voters must be assumed to have limited computational capabilities.
Finally, the FOO protocol~\cite{foo} allows voters to claim that something went wrong.
However, without additional assumptions, FOO does not satisfy our dispute resolution properties. 
In particular, the signed ballot a voter receives does not prove that the \emph{counter}, who is responsible for tallying, has received the ballot.

\section{Conclusion}\label{sec:conclusion}

Dispute resolution is an essential ingredient for trustworthy elections and worthy of a careful, formal treatment.
Based on a systematic analysis of disputes, we proposed new dispute resolution properties and introduced timeliness as an important aspect thereof. 
We fully characterized all topologies that achieve timeliness.
This provides a formal account for the intuition that timeliness requires strong assumptions.
For example, it is not achievable in standard remote voting settings where a network adversary can simply drop messages.

While we have focused on necessary assumptions for dispute resolution, in real elections there are other properties, notably privacy, which may require other assumptions.
As future work, we would like to investigate how our topology hierarchy must be adapted for these properties and to characterize the required assumptions for them.
The combination of such results with our characterization could lead to new insights about the possibility of achieving different properties simultaneously.
Furthermore, such combined results could be a starting point to identify the topologies enabling all properties required in voting; this would help in election design to quickly assess the minimal required setups.

 \bibliographystyle{plain} 
\bibliography{referencesDispRes}

\appendix

\subsection{Proofs from Section~\ref{sec:Timeliness}}
\label{appendix:additionalProofsTimeliness}
We present additional proofs and lemmas for proving the claims from Section~\ref{sec:Timeliness}.
\subsubsection{Topology hierarchy}
\label{appendix:proofTopologyHierarchy}

To prove Lemma~\ref{lemma:possAndImposs}, we will argue that if two traces or two sets of traces are ``similar enough'', then they either both satisfy our considered dispute resolution properties or both violate them. 
To help with this reasoning, we first define a notion of similarity and show two auxiliary lemmas.
\begin{mydef}\label{def:drSimilar}
Two traces $\tr_1$ and $\tr_2$ are \emph{dispute resolution equal}, denoted by $\drequal(\tr_1,\tr_2)$, iff for any voter $H$ and ballot $\ballot$ and for the authority $\server$ it holds that
\begin{alignat*}{1}
& (\exists \tr_1', \tr_1''\sd \tr_1 = \tr_1' \cdot \tr_1'' \wedge \Ballot(H,\ballot) \in\tr_1' \wedge \End\in \tr_1''
 \\\multicolumn{2}{r}{$ \Leftrightarrow$}
 \\\multicolumn{2}{r}{$ 
\exists \tr_2', \tr_2''\sd \tr_2= \tr_2' \cdot \tr_2'' \wedge \Ballot(H,\ballot) \in\tr_2' \wedge
 \End\in \tr_2'' )
 $}
 \\\multicolumn{2}{r}{$
 \wedge 
(\honest(\server) \in \tr_1 \Leftrightarrow \honest(\server) \in \tr_2)
 $}
 \\\multicolumn{2}{r}{$
 \wedge
(\pubTrace(\tr_1) = \pubTrace(\tr_2)).
 $}
\end{alignat*}
The set of traces $\TR(\prot_1,\topo_1)$ is \emph{dispute resolution similar} to the set of traces $\TR(\prot_2,\topo_2)$, denoted by $\drsimilar(\TR(\prot_1,\topo_1),\TR(\prot_2,\topo_2))$, iff
\begin{alignat*}{1}
 & \forall \adv \in \{\hShH,\hSmH,\mShH\},\tr_1 \in \TR(\prot_1,\advModel{\topo_1}{\adv}) \sd 
 \\\multicolumn{2}{r}{$
 \exists \tr_2 \in \TR(\prot_2,\advModel{\topo_2}{\adv}) \sd
 \drequal(\tr_1,\tr_2).
 $}
 \end{alignat*}
\end{mydef}
The following auxiliary lemma states that when two traces are dispute resolution equal, then either both satisfy $\timeliness(\server)$, respectively $\proauth(\server)$, or both do not satisfy it. 
\begin{lemma}\label{lemma:drequal}
 For two traces $\tr_1$ and $\tr_2$, where $\drequal(\tr_1,\tr_2)$, it holds that 
 \begin{alignat*}{1}
 & (\tr_1 \in \timeliness(\server)\Leftrightarrow \tr_2 \in \timeliness(\server) )
 \\\multicolumn{2}{r}{$
 \wedge (\tr_1 \in \proauth(\server) \Leftrightarrow \tr_2 \in \proauth(\server)).
 $}
 \end{alignat*}
\end{lemma}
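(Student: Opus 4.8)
The plan is to show that membership of a trace in $\timeliness(\server)$ and in $\proauth(\server)$ each depends only on facts that $\drequal$ is designed to preserve, so that the two traces cannot be separated by either property. Concretely, the relevant facts about a trace $\tr$ are: (i) whether $\Ballot(H,\ballot)$ occurs before $\End$, for each voter $H$ and ballot $\ballot$; (ii) whether $\honest(\server)$ occurs; (iii) whether a bulletin-board publication $\BBrecorded(\bs)$ with $\ballot \in \bs$ occurs; and (iv) whether $\tr \in \verdict(\server,\ballot)$, for each $\ballot$. Fact (i) is literally the first conjunct of $\drequal(\tr_1,\tr_2)$ (Definition~\ref{def:drSimilar}), and (ii) is its second conjunct. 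Fact (iii) is preserved because $\pubTrace$ retains every $\BB$-signal of a trace, in order, and $\pubTrace(\tr_1)=\pubTrace(\tr_2)$. Fact (iv) is preserved because, as already observed just after Requirement~\ref{requirement:verdict}, $\pubTrace(\tr_1)=\pubTrace(\tr_2)$ implies $\tr_1\in\verdict(\server,\ballot) \Leftrightarrow \tr_2\in\verdict(\server,\ballot)$ for every $\ballot$.

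The $\proauth(\server)$ half is then immediate: unfolding Definition~\ref{def:proauth}, $\tr_i \in \proauth(\server)$ iff $\honest(\server)\in\tr_i \implies \forall \ballot\sd \tr_i\notin\verdict(\server,\ballot)$, and by facts (ii) and (iv) the antecedent and the consequent each hold for $\tr_1$ exactly when they hold for $\tr_2$, so the implication does too. For $\timeliness(\server)$ I would argue the two directions symmetrically. Assuming $\tr_1\in\timeliness(\server)$, fix $H$, $\ballot$, and a decomposition $\tr_2=\tr_2'\cdot\tr_2''$ with $\Ballot(H,\ballot)\in\tr_2'$ and $\End\in\tr_2''$ — i.e.\ a witness that $\Ballot(H,\ballot)$ precedes $\End$ in $\tr_2$. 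By fact (i) the same holds in $\tr_1$, yielding a decomposition $\tr_1=\tr_1'\cdot\tr_1''$ with $\Ballot(H,\ballot)\in\tr_1'$ and $\End\in\tr_1''$ to which $\tr_1\in\timeliness(\server)$ applies; this gives either $\tr_1\in\verdict(\server,\ballot)$ — whence $\tr_2\in\verdict(\server,\ballot)$ by fact (iv), so $\tr_2$ satisfies the required disjunct — or a signal $\BBrecorded(\bs)$ with $\ballot\in\bs$ in $\tr_1'$, which by fact (iii) also occurs in $\tr_2$ (and, since in our protocol class the recorded ballots are published before the $\End$ signal, lies before $\End$ in $\tr_2$, hence in $\tr_2'$, as needed). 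Symmetry completes the equivalence.

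The routine part is the bookkeeping: unfolding the set-builder notation of Definition~\ref{def:proTime}, handling the universal quantifiers over $H$ and $\ballot$, and threading the decomposition $\tr=\tr'\cdot\tr''$ through. The one step that genuinely needs care is the last case above: transferring the conclusion clause about $\BBrecorded(\bs)$ lying in the prefix that precedes $\End$, because $\drequal$'s first conjunct pins down only the order of $\Ballot$-signals relative to $\End$ and $\pubTrace$ does not record $\End$ at all. This is resolved by the structural fact — guaranteed for every trace of a protocol in our class — that all bulletin-board publications, in particular $\BBrecorded$, are recorded before $\End$; combined with $\pubTrace(\tr_1)=\pubTrace(\tr_2)$ this fixes the ordering needed. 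I expect this to be the only real obstacle; the rest is mechanical.
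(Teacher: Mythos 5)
Your proposal is correct and takes essentially the same route as the paper's proof: both arguments reduce membership in $\timeliness(\server)$ and $\proauth(\server)$ to exactly the facts preserved by $\drequal$ — the $\Ballot$-before-$\End$ decomposition, the $\honest(\server)$ signal, and the equality of public traces, the last of which transfers both $\verdict(\server,\ballot)$ membership (via Requirement~\ref{requirement:verdict}) and the $\BBrecorded(\bs)$ signal. The one substantive difference is at the step you flag yourself: the paper does not rely only on the structural fact that bulletin-board publications precede $\End$, but additionally exploits the existential quantification over decompositions in Definition~\ref{def:proTime} to re-choose the split point of $\tr_2$ so that it lies after both $\Ballot(H,\ballot)$ and $\BBrecorded(\bs)$ — ``occurs before $\End$ in $\tr_2$'' does not by itself place the signal inside an arbitrary \emph{fixed} prefix $\tr_2'$, so you need that freedom to pick the decomposition to close the argument.
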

\begin{proof}
We prove each of the conjuncts separately.
Consider two traces $\tr_1$ and $\tr_2$ such that $\drequal(\tr_1,\tr_2)$.
By Definition~\ref{def:proTime}, $\timeliness(\server)$ holds in a trace $\tr$ iff
\begin{alignat*}{1}
 &\forall H,\ballot \sd \exists \tr',\tr'' \sd
 \tr = \tr' \cdot \tr'' \wedge
 \Ballot(H,\ballot) \in \tr'
 \\&\multicolumn{2}{r}{$ 
 \wedge \End \in \tr''
 \implies (\exists \bs \sd \BBrecorded(\bs) \in \tr' \wedge \ballot \in \bs )
 $}
 \\&\multicolumn{2}{r}{$ 
 \vee (\tr \in \verdict(\server,\ballot)).
 $} 
 \end{alignat*}
This formula is of the form
\begin{alignat*}{1}
 & \forall H, b \sd \exists \tr',\tr''\sd 
 \\\multicolumn{2}{r}{$ 
 A(H,b,\tr',\tr'')\implies B(H,b,\tr',\tr'') \vee C(H,b).
 $}
\end{alignat*}
In the following we refer to these predicates simply by $A$, $B$, and $C$.
As only the truth values of the predicates $A$ and $B$ depend on the traces $\tr'$ and $\tr''$, we consider the formula $(\exists \tr',\tr''\sd A \implies B) \vee C$, which we will call $F$, and show that it holds for exactly the same $H$ and $\ballot$ in $\tr_1$ and $\tr_2$.
 Concretely, we show that if $F$ holds for a given $H$ and $\ballot$ in $\tr_1$, then it also holds for $H$ and $\ballot$ in $\tr_2$. 
 As $\drequal(\tr_1,\tr_2)$ is symmetric, the same arguments can be applied to show that if $F$ holds for a given $H$ and $\ballot$ in $\tr_2$, then it also holds in $\tr_1$. 
 As this holds for all $H$ and $b$, it follows that $\tr_1 \in \timeliness(\server)$ iff $\tr_2 \in \timeliness(\server)$.
 
 Let $H$ and $b$ be such that $F$ holds in $\tr_1$.
 We make a case distinction for the different truth values of $A$, $B$, and $C$.
 
\textit{Case 1):} 
 Let $H$ and $b$ be such that $(\exists \tr_1',\tr_1''\sd A \implies B)$ does not hold in~$\tr_1$.
 As by assumption $F$ holds, it must be the case that $C$ holds, that is $\tr_1 \in \verdict(\server,\ballot)$. 
 By Definition~\ref{def:drSimilar}, $\pubTrace(\tr_1) = \pubTrace(\tr_2)$ and by Requirement~\ref{requirement:verdict} $\pubTrace(\tr_1) = \pubTrace(\tr_2)$ implies that $\tr_1 \in \verdict(\server,\ballot)$ iff $\tr_2 \in \verdict(\server,\ballot)$. Thus, $\tr_2 \in \verdict(\server,\ballot)$ and $C$ also holds in $\tr_2$ for $H$ and $\ballot$.
 It follows that $F$ holds in $\tr_2$ for $H$ and $b$.
 
 \textit{Case 2):}
 Let $H$ and $b$ be such that $(\exists \tr_1',\tr_1''\sd A \implies B)$ holds in $\tr_1$, and $A$ holds and $B$ holds.
 That is, $H$ and $b$ are such that there exist two traces $\tr_1'$ and $\tr_1'' $, where 
 $$\tr_1 = \tr_1' \cdot \tr_1''\wedge \Ballot(H,\ballot) \in \tr_1' \wedge\End \in \tr_1''$$
 and such that there exists a list of ballots $\bs$ for which 
 $$\BBrecorded(\bs) \in \tr_1' \wedge \ballot \in \bs.$$
 By Definition~\ref{def:drSimilar}, there also exist two
 traces $\tr_2^1$ and $\tr_2^2 $, such that 
 $$\tr_2 = \tr_2^1 \cdot \tr_2^2\wedge\Ballot(H,\ballot) \in \tr_2^1\wedge\End \in \tr_2^2.$$
 Moreover, by Definition~\ref{def:drSimilar}, $\pubTrace(\tr_1) = \pubTrace(\tr_2)$.
 As $\BBrecorded$ is a signal in the publicly observable trace and as $\BBrecorded(\bs)$ with $\ballot \in \bs$ is recorded before $\End$ in~$\tr_1$, it holds that for two traces $\tr_2^3$ and $\tr_2^4$
 $$\tr=\tr_2^3\cdot \tr_2^4\wedge \BBrecorded(\bs)\in \tr_2^3\wedge \ballot\in \bs \wedge \End \in \tr_2^4.$$
 Thus, as both the signals $\Ballot(H,\ballot)$ and $\BBrecorded(\bs)$ are recorded in $\tr_2$ before $\End$ there exist two traces $\tr_2'$ and $\tr_2''$ (where $\tr_2'$ can be chosen to be the larger trace from $\tr_2^1$ and $\tr_2^3$) such that 
 \begin{alignat*}{1}
 & \tr_2 = \tr_2' \cdot \tr_2''\wedge\Ballot(H,\ballot) \in \tr_2'\wedge\End \in \tr_2''
 \\\multicolumn{2}{r}{$
 \wedge \BBrecorded(\bs)\in \tr_2'\wedge \ballot\in \bs.
 $}
 \end{alignat*}
 Hence, $(\exists \tr_2',\tr_2''\sd A \implies B)$ also holds in $\tr_2$ and $F$ is satisfied for $H$ and $b$.
 
 \textit{Case 3):}
 Let $H$ and $b$ be such that $(\exists \tr_1',\tr_1''\sd A \implies B)$ holds in $\tr_1$ and $A$ does not hold.
 That is, $$\exists \tr_1',\tr_1''\sd \neg(\tr_1 = \tr_1' \cdot \tr_1'' \wedge
 \Ballot(H,\ballot) \in \tr_1' \wedge \End \in \tr_1'').$$
 We choose $\tr_2'=\tr_2''=\emptyset$ and it holds that 
 $$ \neg(\tr_2 = \tr_2' \cdot \tr_2'' \wedge \Ballot(H,\ballot) \in \tr_2' \wedge \End \in \tr_2'').$$
In particular, empty traces cannot contain a signal, thus the second and third conjunct are always false. 
Thus, it follows that 
$$\exists \tr_2',\tr_2''\sd \neg(\tr_2 = \tr_2' \cdot \tr_2'' \wedge \Ballot(H,\ballot) \in \tr_2' \wedge \End \in \tr_2'')$$
and hence $A$ is false in $\tr_2$. 
It follows that $(\exists \tr_2',\tr_2''\sd A \implies B)$ and therefore $F$ hold in $\tr_2$ for $H$ and $b$.

Hence, we showed that in all cases where $F$ holds for a $H$ and $b$ in $\tr_1$, $F$ also holds for $H$ and $b$ in $\tr_2$.
 \\\\
 We next prove the lemma's second conjunct.
 Assume two traces $\tr_1$ and $\tr_2$ such that $\drequal(\tr_1,\tr_2)$ and assume $\tr_1 \in \proauth(\server)$.
 We show that this implies $\tr_2 \in\proauth(\server)$.
 As $\drequal$ is symmetric, the same arguments can be applied to show that $\tr_1 \in\proauth(\server)$ follows from $\tr_2 \in\proauth(\server)$.
 
 We distinguish two cases for which $\tr_1 \in \proauth(\server)$.
 First, let $\honest(\server)\notin \tr_1$.
 Definition~\ref{def:drSimilar} implies $\honest(\server)\notin \tr_2$, as $\honest(\server)\in \tr_2$ would require $\honest(\server)\in \tr_1$, which is a contradiction.
 From $\honest(\server)\notin \tr_2$, it follows that $\tr_2 \in \proauth(\server)$ by Definition~\ref{def:proauth}.
 Second, let $\honest(\server)\in \tr_1$ and assume that it holds for all ballots $\ballot$ that $\tr_1 \notin \verdict(\server,\ballot)$.
 Definition~\ref{def:drSimilar} implies $\honest(\server)\in \tr_2$. Moreover, by Definition~\ref{def:drSimilar} it holds that $\pubTrace(\tr_1) = \pubTrace(\tr_2)$ and, by Requirement~\ref{requirement:verdict}, $\pubTrace(\tr_1) = \pubTrace(\tr_2)$ implies that $\tr_1 \in \verdict(\server,\ballot)$ iff $\tr_2 \in \verdict(\server,\ballot)$.
 Thus, it cannot hold that there exists a ballot $\ballot$ for which $\tr_2 \in \verdict(\server,\ballot)$, as this would require $\tr_1 \in \verdict(\server,\ballot)$, which is a contradiction.
 We thus conclude that for all ballots $\ballot$, $\tr_2 \notin \verdict(\server,\ballot)$, and thus by Definition~\ref{def:proauth} $\tr_2 \in\proauth(\server)$.
\end{proof}

 Using the above lemma, we show the following lemma.

\begin{lemma}\label{lemma:drsimilar}
For two sets of traces $\TR(\prot_1,\topo_1)$ and $\TR(\prot_2,\topo_2)$, where $\drsimilar(\TR(\prot_1,\topo_1),\TR(\prot_2,\topo_2))$, it holds that
 \begin{alignat*}{1}
& \TR(\prot_2, \advModel{\topo_2}{\hShH}) \subseteq \timeliness(\server) \, \cap\,\proauth(\server)
\\\multicolumn{2}{r}{$
\wedge \TR(\prot_2, \advModel{\topo_2}{\mShH}) \subseteq \,\timeliness(\server)
$}
\\\multicolumn{2}{r}{$
\wedge \TR(\prot_2, \advModel{\topo_2}{\hSmH}) \subseteq\,\proauth(\server)
\implies
$}
\\\multicolumn{2}{r}{$
 \TR(\prot_1, \advModel{\topo_1}{\hShH}) \subseteq \timeliness(\server) \, \cap\,\proauth(\server)
$}
\\\multicolumn{2}{r}{$
\wedge \TR(\prot_1, \advModel{\topo_1}{\mShH}) \subseteq \,\timeliness(\server)
$}
\\\multicolumn{2}{r}{$
\wedge \TR(\prot_1, \advModel{\topo_1}{\hSmH}) \subseteq\,\proauth(\server).
$}
 \end{alignat*} 
\end{lemma}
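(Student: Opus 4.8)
The plan is to prove the implication directly, exploiting two facts already in hand: that $\drsimilar$ supplies, for every trace of $\prot_1$ (under any of the three adversary models), a $\drequal$-equivalent trace of $\prot_2$ \emph{under the same adversary model}, and that Lemma~\ref{lemma:drequal} shows $\drequal$ preserves both $\timeliness(\server)$ and $\proauth(\server)$. So I would assume the three inclusions for $\prot_2$ listed in the antecedent and derive the three inclusions for $\prot_1$, one adversary model at a time. Since the three cases all have the same shape, I would spell out the argument generically and then instantiate.

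Concretely, I would fix $\adv \in \{\hShH,\hSmH,\mShH\}$ and take an arbitrary trace $\tr_1 \in \TR(\prot_1, \advModel{\topo_1}{\adv})$. Unfolding Definition~\ref{def:drSimilar}, the standing hypothesis $\drsimilar(\TR(\prot_1,\topo_1),\TR(\prot_2,\topo_2))$ yields a trace $\tr_2 \in \TR(\prot_2, \advModel{\topo_2}{\adv})$ with $\drequal(\tr_1,\tr_2)$. Then I apply the hypothesis about $\prot_2$: its first two conjuncts give $\tr_2 \in \timeliness(\server)$ whenever $\adv \in \{\hShH,\mShH\}$, and its first and third conjuncts give $\tr_2 \in \proauth(\server)$ whenever $\adv \in \{\hShH,\hSmH\}$. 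By the first conjunct of Lemma~\ref{lemma:drequal}, $\tr_2 \in \timeliness(\server)$ implies $\tr_1 \in \timeliness(\server)$; by the second conjunct, $\tr_2 \in \proauth(\server)$ implies $\tr_1 \in \proauth(\server)$. As $\tr_1$ was arbitrary, collecting the three choices of $\adv$ gives exactly $\TR(\prot_1, \advModel{\topo_1}{\hShH}) \subseteq \timeliness(\server) \cap \proauth(\server)$, $\TR(\prot_1, \advModel{\topo_1}{\mShH}) \subseteq \timeliness(\server)$, and $\TR(\prot_1, \advModel{\topo_1}{\hSmH}) \subseteq \proauth(\server)$, which is the conclusion.

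I do not expect a real obstacle: all the substantive work is already discharged in Lemma~\ref{lemma:drequal}, and this lemma is just a ``lift'' of that per-trace statement along the witness map provided by $\drsimilar$. The only points needing care are bookkeeping: (i) matching which of the two properties $\timeliness(\server)$, $\proauth(\server)$ is required for which adversary model, keeping in mind that the $\hShH$ case targets the intersection and therefore needs \emph{both} conjuncts of Lemma~\ref{lemma:drequal} applied to the same pair $(\tr_1,\tr_2)$; and (ii) noting that, unlike the full $\fullTime$ property, neither $\drsimilar$ nor this lemma mentions the functional requirement \functional, so no functional-existence (nonemptiness) argument enters and none should be attempted. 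The closest thing to a subtlety is simply making sure the witness in Definition~\ref{def:drSimilar} is taken under the matching $\advModel{\topo_2}{\adv}$ rather than some other model, which is precisely how that definition is stated.
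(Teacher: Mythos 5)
Your proposal is correct and follows essentially the same route as the paper's proof: for each adversary model, pick an arbitrary trace of $\prot_1$, use the witness trace of $\prot_2$ supplied by Definition~\ref{def:drSimilar} under the matching model, and transfer membership in $\timeliness(\server)$ and $\proauth(\server)$ back via Lemma~\ref{lemma:drequal}. The only cosmetic difference is that you treat the three adversary models generically while the paper spells them out one by one.
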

\begin{proof}
 Assume that 
 \begin{alignat*}{1}
 &\TR(\prot_2, \advModel{\topo_2}{\hShH}) \subseteq \timeliness(\server) \cap\proauth(\server) 
 \\&\multicolumn{2}{r}{$ 
 \wedge
 \TR(\prot_2, \advModel{\topo_2}{\mShH}) \subseteq \timeliness(\server) 
 $} 
 \\&\multicolumn{2}{r}{$ 
 \wedge \TR(\prot_2, \advModel{\topo_2}{\hSmH}) \subseteq \proauth(\server)
 $} 
 \end{alignat*}
 and that $\drsimilar(\TR(\prot_1,\topo_1),\TR(\prot_2,\topo_2))$.
 
Let $\tr_1$ be a trace in $\TR(\prot_1,\advModel{\topo_1}{\hShH})$.
By Definition~\ref{def:drSimilar}, there exists a trace $\tr_2$ in $\TR(\prot_2,\advModel{\topo_2}{\hShH})$ such that $\drequal(\tr_1,\tr_2)$.
By Lemma~\ref{lemma:drequal}, it holds that 
$\tr_1 \in\timeliness(\server) \Leftrightarrow \tr_2 \in \timeliness(\server) $ and $\tr_1 \in \proauth(\server) \Leftrightarrow \tr_2 \in \proauth(\server)$.
Since, by assumption, $\TR(\prot_2, \advModel{\topo_2}{\hShH}) \subseteq \timeliness(\server) \cap\proauth(\server) $ and thus $\tr_2 \in \timeliness(\server) \cap\proauth(\server) $, it follows that
$\tr_1 \in \timeliness(\server) \cap\proauth(\server) $.
As $\tr_1$ is an arbitrary trace in $\TR(\prot_1,\advModel{\topo_1}{\hShH})$, it follows that $\TR(\prot_1,\advModel{\topo_1}{\hShH})\subseteq \timeliness(\server) \cap\proauth(\server) $.

Let $\tr_1$ be a trace in $\TR(\prot_1,\advModel{\topo_1}{\mShH})$.
By Definition~\ref{def:drSimilar}, there exists a trace $\tr_2$ in $\TR(\prot_2,\advModel{\topo_2}{\mShH})$ such that $\drequal(\tr_1,\tr_2)$.
By Lemma~\ref{lemma:drequal}, it holds that 
$\tr_1 \in\timeliness(\server) \Leftrightarrow \tr_2 \in \timeliness(\server) $.
Since, by assumption, $\TR(\prot_2, \advModel{\topo_2}{\mShH}) \subseteq \timeliness(\server)$ and thus $\tr_2 \in \timeliness(\server)$, it follows that $\tr_1 \in \timeliness(\server)$.
As $\tr_1$ is an arbitrary trace in $\TR(\prot_1,\advModel{\topo_1}{\mShH})$, it follows that $\TR(\prot_1,\advModel{\topo_1}{\mShH})\subseteq \timeliness(\server) $.

Finally, let $\tr_1$ be a trace in $\TR(\prot_1,\advModel{\topo_1}{\hSmH})$.
By Definition~\ref{def:drSimilar}, there exists a trace $\tr_2$ in $\TR(\prot_2,\advModel{\topo_2}{\hSmH})$ such that $\drequal(\tr_1,\tr_2)$.
By Lemma~\ref{lemma:drequal}, it holds that 
$\tr_1 \in \proauth(\server) \Leftrightarrow \tr_2 \in \proauth(\server)$.
Since, by assumption, $\TR(\prot_2, \advModel{\topo_2}{\hSmH}) \subseteq \proauth(\server) $ and thus $\tr_2 \in \proauth(\server) $, it follows that $\tr_1 \in \proauth(\server) $.
As $\tr_1$ is an arbitrary trace in $\TR(\prot_1,\advModel{\topo_1}{\hSmH})$, it follows that $\TR(\prot_1,\advModel{\topo_1}{\hSmH})\subseteq\proauth(\server)$.
\end{proof}

Using the above lemmas, we next prove Lemma~\ref{lemma:possAndImposs}.
\begin{proof}[Proof of Lemma~\ref{lemma:possAndImposs}]
We define $\topo_1 \smallertopoNotEqual \topo_2:= \topo_1 \smallertopo \topo_2 \wedge \topo_1 \neq \topo_2$.
Let $\topos=(V_S,E_S,\trust_S,\channel_S)$, $\topoi=(V_I,E_I,\trust_I,\channel_I)$, and $\topoi \smallertopo \topos$ and let $\prot$ be an arbitrary protocol such that $\fullTime(\prot,\topoi)$.
By Definition~\ref{def:useful},
$\fullTime(\prot,\topoi)= 
\TR(\prot, \advModel{\topoi}{\hShH}) \subseteq\, \protime(\server) \, \cap\,\proauth(\server)$
$ \wedge \,\TR(\prot, \advModel{\topoi}{\mShH}) \subseteq \,\protime(\server) $
$\wedge \,\TR(\prot, \advModel{\topoi}{\hSmH}) \subseteq\, \proauth(\server) $
$\wedge\, \TR(\prot, \advModel{\topoi}{\hShH})\, \cap \,\functional\, \neq \emptyset$.
For two channel or trust types $x_I$ and $x_S$, we say that $x_S$ is \emph{minimally stronger} than $x_I$ iff $x_I \smallertopoNotEqual x_S \wedge \neg \exists x_M \sd x_I \smallertopoNotEqual x_M \smallertopoNotEqual x_S$.
We say that two topologies \emph{differ by one attribute} in the following cases:
one topology contains exactly one channel or role that does not exist in the other,
a role that occurs in both topologies has a minimally stronger trust type in one of the topologies, 
or a channel that occurs in both topologies has a minimally stronger channel type in one of the topologies.
We distinguish three cases how $\topoi$ and $\topos$, where $\topoi \smallertopo \topos$, relate: (1) the topologies are equal, (2) they differ only by one attribute, and (3) they differ by several attributes.
Case (1), where $\topos = \topoi$, is trivial and we look at the other cases. 

 \textit{Case (2): $\topoi \smallertopoNotEqual \topos 
 \wedge \neg \exists \topom \sd 
 \topoi \smallertopoNotEqual \topom \smallertopoNotEqual \topos$.}
 
 $\topos$ can differ from $\topoi$ (i) because there is one more (default and insecure) channel or (untrusted) role, (ii) because one of the channel types is stronger, or (iii) because one of the trust types is stronger.
 We further distinguish between these three cases in (2i)--(2iii).

 \textit{Case (2i):} 
 $(\exists (v_a,v_b). (v_a,v_b)\in E_S \wedge (v_a,v_b) \notin E_I \wedge \channel_S(v_a,v_b) = \, \insecChanDefault)
 \vee (\exists v. v\in E_S \wedge v \notin E_I \wedge \trust_S(v) = \textit{untrusted})$.

By assumption, $\fullTime(\prot, \topoi)$.
Assume that $\prot$ only makes use of the vertices and edges in $\topoi$. 
This is without loss of generality because otherwise we can define a protocol $\prot'$, which is as $\prot$ but does not make use of the vertices and edges not contained in $\topoi$ and for which $\fullTime(\prot', \topoi)$. 

We argue that the traces resulting from $\prot$ run in $\topos$ are dispute resolution similar to the traces resulting from $\prot$ run in $\topoi$, i.e., $\drsimilar(\TR(\prot, \topos),\TR(\prot, \topoi))$.
First, when $\prot$ is run in $\topos$, the honest agents follow the protocol and never send or receive anything on $(v_a,v_b)$.
Second, when $v_a$ is under the adversary's control, the adversary can send messages on $(v_a,v_b)$. 
However, the adversary sending messages on an insecure channel does not change the fact that the 
resulting set of traces is dispute resolution similar to the original set $\TR(\prot, \topoi)$ (it does not change the publicly observable trace, nor the signals $\Ballot$, $\End$, and $\honest$).
Moreover, when $v_b$ is honest, it ignores all incoming messages and when $v_b$ is controlled by the adversary too, her knowledge does not differ compared to when the channel $(v_a,v_b)$ does not exist.
Thus, the adversary cannot learn, construct, and send messages that were not possible in $\TR(\prot, \topoi)$.
Finally, as the protocol does not specify any behavior for role $v$, no honest agent will instantiate this role and, as there is no initial knowledge specified for $v$, the adversary cannot learn any new messages even if she simulates such a role. 
We conclude that $\drsimilar(\TR(\prot, \topos),\TR(\prot, \topoi))$ and thus, by Lemma~\ref{lemma:drsimilar} and $\fullTime(\prot, \topoi)$, it follows that $\TR(\prot, \topos)$ satisfies $\timeliness(\server)$ and $\proauth(\server)$ in the required adversary models.
\\
Moreover, the trace that satisfies \functional in $\TR(\prot, \topoi)$ (by assumption such a trace exists) also satisfies \functional in $\TR(\prot, \topos)$, as the same trace is valid even when there is an additional (unused) channel or role.

 \textit{Case (2ii): $\exists (v_a,v_b) \in E_I. \channel_I (v_a,v_b)\smallertopo\channel_S(v_a,v_b) $.}
 
 We consider the following cases how a channel in $\topos$ is minimally stronger than the same channel in $\topoi$.
 \begin{description}
 \item [(a)]$\exists x \sd x \in \{\default,\reliable,\observable\}$
 \\ $\wedge\, \channel_S(v_a,v_b) =\, \secChan{x} \wedge \,\channel_I(v_a,v_b)=\,\authChan{x}$
 \item [(b)]$ \exists x \sd x \in\{\default,\reliable,\observable\}$
 \\ $\wedge\, \channel_S(v_a,v_b) =\, \secChan{x} \wedge\, \channel_I(v_a,v_b)=\,\confChan{x}$
 \item [(c)]$ \exists x \sd x \in \{\default,\reliable,\observable\}$
 \\ $\wedge\, \channel_S(v_a,v_b) =\, \authChan{x} \wedge \,\channel_I(v_a,v_b)=\,\insecChan{x}$
 \item [(d)]$ \exists x \sd x\in \{\default,\reliable,\observable\}$
 \\ $\wedge\, \channel_S(v_a,v_b) =\, \confChan{x} \wedge \,\channel_I(v_a,v_b)=\,\insecChan{x}$
 \item [(e)]$ \exists \chan{} \sd \chan{} \in \{ \secChan{}, \authChan{}, \confChan{}, \insecChan{}\}$
 \\$ \wedge \, \channel_S(v_a,v_b) =\, \chan{\reliable} \wedge \,\channel_I(v_a,v_b)=\,\chan{\default}$
 \item [(f)] $\exists \chan{} \sd \chan{} \,\in \{ \secChan{}, \authChan{}, \confChan{}, \insecChan{}\}$
 \\$ \wedge \, \channel_S(v_a,v_b) =\, \chan{\observable} \wedge \, \channel_I(v_a,v_b) = \,\chan{\reliable}$
 \end{description}
\noindent
We first discuss Cases (a)--(e) and then separately consider the Case {(f)}.

We first argue that for all traces $\tr$ and topologies $\topoi$ and $ \topos$ as described by one of the Cases {(a)}--{(e)}, it holds that $\drsimilar(\TR(\prot, \topos),$ $\TR(\prot,\topoi))$.
This holds as even if the adversary has full control over a channel, she can always behave according to the protocol and not send additional messages, change messages, or reuse messages. Also, she can always deliver messages correctly even on channels that are not reliable. Therefore, the adversary on a weaker channel can perform at least everything that she can on a stronger channel and moreover, the same behavior on a weaker channel does not change any of the signals that are relevant for dispute resolution similarity.
We thus conclude by Lemma~\ref{lemma:drsimilar} and by the assumption that $\fullTime(\prot, \topoi)$ that $\TR(\prot, \topos)$ satisfies $\timeliness(\server)$ and $\proauth(\server)$ in the required adversary models.

It remains to show, that $\prot$ also satisfies the functional property, i.e., $\exists \tr\sd \tr \in \TR(\prot,\topoi) \,\cap\, \functional \implies \exists \tr'\sd \tr' \in \TR(\prot,\topos) \,\cap\, \functional $.
By assumption, the same agents are honest in $\topoi$ and $\topos$. Thus, the same behavior as in $\tr$ can be simulated in $\tr'$.
Also, on all channels, the adversary only forwards the messages in~$\tr$.
As $\forall (v_a,v_b)\sd (v_a,v_b) \in E_I \implies (v_a,v_b) \in E_S$, the same is possible in $\topos$, thus the same messages can be sent and the same signals are produced.
Therefore, $\TR(\prot,\topos) \,\cap\, \functional\, \neq \,\emptyset $ as required.
 
 Finally, we consider the Case {(f)} separately.
 $\topos$ and $\topoi$ are equal except that $(v_a,v_b)$ is additionally undeniable in $\topos$. Thus, the same messages can be sent and received with both topologies and the only difference is that if the protocol $\prot$ uses the channel $(v_a,v_b)$, then, for some $m$, $\pubChan(v_a,v_b,m)$ is additionally recorded in the traces in $\TR(\prot,\topos)$ in contrast to the traces in $\TR(\prot,\topoi)$.
 Let $\prot$ be the protocol such that $\fullTime(\prot,\topoi)$ (which exists by assumption) and $\verdict_I$ be the verdict defined as part of $\prot$.
 First, consider a ballot $\ballot$ such that the verdict whether a trace $\tr_P$ is in $\verdict_I(\server,\ballot)$ does not depend on whether or not $\pubChan(v_a,v_b,m)\in \tr_P$.
 \begin{alignat*}{3}
 &\forall adv \in \{\hShH,\mShH,\hSmH\}, \tr_S \in \TR(\prot,\advModel{\topos}{adv})\sd 
 \\\multicolumn{2}{r}{$
 \exists \tr_I \in \TR(\prot,\advModel{\topoi}{adv}) \sd$}
 \\\multicolumn{2}{r}{$
 \tr_I \in \verdict_I(\server,\ballot) \Leftrightarrow
 \tr_S \in \verdict_I(\server,\ballot)$}
 \\
 \xRightarrow[]{(1)}&\forall adv \in \{\hShH,\mShH,\hSmH\}, \tr_S \in \TR(\prot,\advModel{\topos}{adv})\sd 
 \\\multicolumn{2}{r}{$
 \exists \tr_I \in \TR(\prot,\advModel{\topoi}{adv}) \sd
 $}
 \\\multicolumn{2}{r}{$
 (\tr_I \in\, \protime(\server) \Leftrightarrow \tr_S \in\, \protime(\server) )
 $}
 \\\multicolumn{2}{r}{$ 
 \wedge \,(\tr_I \in\, \proauth(\server) \Leftrightarrow \tr_S \in \,\proauth(\server) )
 $}
 \\
 \xRightarrow[]{(2)}&
 \forall \tr_S \in \TR(\prot,\advModel{\topos}{\hShH})\sd 
 \\\multicolumn{2}{r}{$
 \tr_S \in \protime(\server)\cap \proauth(\server)
 $}
 \\\multicolumn{2}{r}{$
 \wedge \, \forall \tr_S \in \TR(\prot,\advModel{\topos}{\mShH})\sd 
 \tr_S \in \protime(\server)
 $}
 \\\multicolumn{2}{r}{$
\wedge \, \forall \tr_S \in \TR(\prot,\advModel{\topos}{\hSmH})\sd 
 \tr_S \in \proauth(\server).
 $}
\end{alignat*} 
The traces $\tr_S$ and $\tr_I$
only differ in the signals $\pubChan(v_a,v_b,m)$ which are irrelevant for $\verdict_I(\server,\ballot)$ by assumption.
Next, the only way the signals $\pubChan(v_a,v_b,m)$ influence whether a trace is in $\protime(\server)$ or $\proauth(\server)$ is if they change the decision whether the trace is in $\verdict_I(\server,\ballot)$, which is not the case by assumption (Step (1)).
Finally, Step~(2) holds by the assumption that 
$\fullTime(\prot,\topoi)$ and thus
$\TR(\prot,\advModel{\topoi}{\hShH}) \subseteq \, \protime(\server) \,\cap \,\proauth(\server) 
\wedge \,\TR(\prot,\advModel{\topoi}{\mShH}) \subseteq \, \protime(\server) $
$\wedge \,\TR(\prot,\advModel{\topoi}{\hSmH}) \subseteq \,\proauth(\server)$.

Now consider a ballot $\ballot$ for which the verdict whether $\tr_P$ is in $\verdict_I(\server,\ballot)$ depends on whether or not $\pubChan(v_a,v_b,m)\in \tr_P$.
We construct a protocol $\prot'$ that is as $\prot$ except that the definition of the verdict for such ballots (that depend on whether or not $\pubChan(v_a,v_b,m)\in \tr_P$) is changed. Namely, 
$\verdict_S(\server,\ballot)$ (in $\prot'$) is defined as $\verdict_I(\server,\ballot)$ except that all occurrences of $\pubChan(v_a,v_b,m)\in \tr_P$ in $\verdict_I(\server,\ballot)$'s definition are substituted by $\false$ and all occurrences of $\pubChan(v_a,v_b,m)\notin \tr_P$ by $\true$.
Then
 \begin{alignat*}{3}
 &\forall adv \in \{\hShH,\mShH,\hSmH\}, \tr_S \in \TR(\prot',\advModel{\topos}{adv})\sd 
 \\\multicolumn{2}{r}{$
 \exists \tr_I \in \TR(\prot,\advModel{\topoi}{adv}) \sd
 $}
 \\\multicolumn{2}{r}{$
 \tr_I\in \verdict_I(\server,\ballot) \Leftrightarrow \tr_S \in \verdict_S(\server,\ballot)
 $}
\end{alignat*} 
as all signals $\pubChan(v_a,v_b,m)$ that could not influence $\verdict_I(\server,\ballot)$ (as they did not occur in $\tr_I$), are just ignored by $\verdict_S(\server,\ballot)$ to simulate the same behavior.
We can apply the same Steps as (1) and (2) above, to conclude that $\prot'$ satisfies $\timeliness(\server)$ and $\proauth(\server)$ in the required adversary models.

It remains to show that the functional property is preserved.
Let $\tr\in \TR(\prot,\topoi)\,\cap\,\functional$, which exists by assumption.
Then, there is a trace $\tr'$ in $\TR(\prot',\topos)$ where all agents behave exactly as in $\tr$, but which might contain additional signals $\pubChan(v_a,v_b,m)$ compared to $\tr$.
We can conclude that $\tr' \in \,\functional$ as $\functional$ only requires that certain signals appear in the trace and adding more signals does not invalidate the property.

 \textit{Case (2iii): 
 $\exists v \in V_I. \trust_I(v) \smallertopo\trust_S(v)$.}
 We consider the following cases:
 
 \begin{description}
 \item [ {(a)}]$\trust_S(v)=\textit{trusted} \wedge \trust_I(v) \in \{\trustedforward,\trustedreply\}$
 \item [ {(b)}]$\trust_S(v)\in \{\trustedforward,\trustedreply\} \wedge \trust_I(v)=\textit{untrusted}.$
 \end{description}
 Let $\prot$ be a protocol such that $\fullTime(\prot, \topo_I)$, which exists by assumption.
 We argue that, in both Cases (a) and (b), $\TR(\prot, \topos)$ is dispute resolution similar to $\TR(\prot, \topoi)$, i.e., 
 $\drsimilar(\TR(\prot,\topos),\TR(\prot,\topoi))$.
 For this, we take an arbitrary trace $\tr_S$ and adversary assumption $\adv$ such that $\tr_S \in\TR(\prot, \advModel{\topos}{\adv})$ and argue that there exists a trace $\tr_I$ in $\TR(\prot,\advModel{\topoi}{\adv})$, such that $\drequal(\tr_1,\tr_2)$.

 This is the case as a partially trusted (untrusted) agent in $\tr_I$ can always also behave according to the protocol as a trusted (partially trusted) agent in $\tr_S$.
 Thus the traces $\tr_S$ and $\tr_I$ denote the same behavior, but may differ in some signals that are only recorded for honest agents, such as $\verifyCast$.
 However, all signals $\BB$ in $\tr_S$ are equal to those in $\tr_I$ ($\BB$ is always honest and receives the same messages in $\tr_I$ and $\tr_S$),
 all signals $\Evidence$ are equal in $\tr_S$ and $\tr_I$ (they can be recorded by untrusted agents), and all signals $\pubChan$ are equal in $\tr_S$ and $\tr_I$ (as the same channels are used). Thus, $\pubTrace(\tr_I)=\pubTrace(\tr_S)$.
 Moreover, we never consider topologies where $t_S(H)=\textit{trusted}$ or $t_S(\server)=\textit{trusted}$ for the voter $H$ or the authority $\server$ thus the signals $\Ballot$ and $\honest(\server)$ are also equal in $\tr_I$ and $\tr_S$. 
 We conclude by Definition~\ref{def:drSimilar} that $\drequal(\tr_1,\tr_2)$ and by Lemma~\ref{lemma:drsimilar} that $\TR(\prot, \topos)$ satisfies $\timeliness(\server)$ and $\proauth(\server)$ with the required adversary models.

 Furthermore, it holds that $\exists \tr\sd \tr \in \TR(\prot,\topoi) \, \cap \,\functional \implies \exists \tr'\sd \tr' \in \TR(\prot,\topos)\, \cap\, \functional$, as by definition in $\tr$ all agents behave according to the protocol which is a valid behavior in $\topos$ that may however result in additional signals being recorded in $\tr'$. However, as additional signals cannot break the property $\functional$, $\tr'$ satisfies the functional property.
 
 \textit{Case (3): $\topoi \smallertopoNotEqual \topos 
 \wedge \exists \topom \sd 
 \topoi \smallertopoNotEqual \topom \smallertopoNotEqual \topos$.}
 
 Let us, step by step and in an arbitrary order, remove one edge, remove one vertex, weaken one channel assumption, and weaken one trust assumption at a time in $\topos$ until we arrive at the topology $\topoi$.
 As our topologies are finite, i.e., we only consider finitely many roles (vertices), channels (edges), trust types, and channel types, we get a finite sequence of topologies $\topo_1,\dots,\topo_n$ for which
 $\topoi\smallertopo \topo_1 \smallertopo \dots \smallertopo \topo_n \smallertopo \topos$ and where each pair of topologies
 $(\topoi,\topo_1)$, $(\topo_i,\topo_{i+1})_{i \in \{1,\dots,n-1\}}$, and $(\topo_n,\topos)$ only differs in one attribute.
 By the results of Case 2, it follows from $\topoi \smallertopo \topo_1 \wedge \fullTime(\prot,\topoi)$ that $\exists \prot'. \fullTime(\prot',\topo_1)$.
 We can consecutively apply the result of Case 2 to conclude from $\exists \prot. \fullTime(\prot,\topo_{i})$ that $\exists \prot'. \fullTime(\prot',\topo_{i+1})$, for all 
 $ i \in \{1,\dots, n-1\}$.
 Finally, we can conclude from $\topo_n\smallertopo \topos \wedge \exists \prot. \fullTime(\prot,\topo_n)$ that 
 $\exists \prot'. \fullTime(\prot',\topos)$, which concludes the proof. 
\end{proof}

\subsubsection{Proof of Theorem~\ref{theorem:completeChar}}
\label{appendix:proofoftheorem:completeChar}
Next, we state necessary conditions for a topology in our voting protocol class to satisfy $\fullTime$ with some protocol. Then, we show that these conditions are also sufficient.
We use this, in combination with the topology hierarchy, to prove Theorem~\ref{theorem:completeChar} at the end of this section.

Recall from Figure~\ref{fig:GenericSystemSetup} (p.~\pageref{fig:GenericSystemSetup}) that we distinguish two possible setups and types of protocols: the setup $G_S$ for protocols where the ballots are cast by the voters and the setup $G_U$, where the voters and the platforms are not distinguished, for protocols that use trusted platforms to cast the ballots.
As the topologies modeling these two setups are incomparable, we first consider the necessary conditions for topologies $\topo$ such that $G(\topo)\subseteq_G G_S$ and afterwards for the topologies $\topo$ such that $G(\topo)\subseteq_G G_U$.
As explained in the proof of Lemma~\ref{lemma:possAndImposs}, we use $\topo_1 \smallertopoNotEqual \topo_2:= \topo_1 \smallertopo \topo_2 \wedge \topo_1 \neq \topo_2$.
Also, we sometimes write $\topo_2\not\smallertopo \topo_1$ for $\neg(\topo_2\smallertopo \topo_1)$.

\paragraph{Necessary conditions for $\topo$ with $G(\topo)\subseteq_G G_S$}

In order to satisfy the functional property, a necessary requirement is the existence of the roles $\human$, $\platform$, and $\server$ and of the channels $(H,P)$ and $(P,\server)$. 
This is the case as $\human$ must cast a ballot, $\platform$ must forward it, and $\server$ must publish it on the bulletin board.

Given this, we first show that to achieve $\fullTime$ in any topology, there must be a reliable path from $\human$ to $\server$.
That is, both the channels from $\human$ to $\platform$ as well as from $\platform$ to $\server$ must be reliable and the platform must be trusted to forward messages correctly.
This corresponds to the observation that, in practice, when ballots can be dropped between $H$ and $\server$, for example when they are cast by mail and not delivered by the post office, then $\fullTime$ cannot hold even when both
the voter $H$ and the authority $\server$ are honest.
\begin{lemma}\label{lemma:TopoReliableChannel}
 Let $\prot$ be a protocol and $\topo=(V,E,\trust,\channel)$ be a topology in our voting protocol class such that $G(\topo)\subseteq_G G_S$, where $G_S$ is the topology graph in Figure~\ref{fig:GenericSystemSetup}.
 \begin{alignat*}{3}
 &
 \fullTime(\prot,\topo)
 \implies 
 \\ &\multicolumn{2}{r}{$ 
 \insecChan{\reliable}\, \smallertopo \channel(\human,\platform)\,
 \wedge\; \insecChan{\reliable} \,\smallertopo \channel(\platform,\server)
 \wedge\; \trustedforward \smallertopo \trust(\platform). 
 $}
 \end{alignat*}
 \end{lemma}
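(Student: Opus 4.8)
The plan is to prove the contrapositive by constructing, from the functional execution that $\fullTime(\prot,\topo)$ guarantees, an execution that violates $\timeliness(\server)$. I would carry out the whole argument in the single adversary model $\advModel{\topo}{\hShH}$, in which both the distinguished voter $\human$ and the authority $\server$ are honest. By Definition~\ref{def:useful}, $\fullTime(\prot,\topo)$ yields at once that $\TR(\prot,\advModel{\topo}{\hShH})\subseteq\timeliness(\server)\cap\proauth(\server)$ and that there is a trace $\tr_f\in\TR(\prot,\advModel{\topo}{\hShH})\cap\functional$. By the discussion preceding the lemma, $\topo$ contains the roles $\human,\platform,\server$ and the channels $(\human,\platform)$ and $(\platform,\server)$, and in $\tr_f$ an honest voter $\human$ casts some ballot $\ballot$ which is then forwarded by $\platform$, delivered to $\server$, and published in $\BBrecorded$ before $\End$; in particular the cast is a send on $(\human,\platform)$ and the forwarding a send on $(\platform,\server)$.

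Next I would assume, towards a contradiction, that the conclusion fails, i.e.\ $\channel(\human,\platform)$ is not reliable, or $\channel(\platform,\server)$ is not reliable, or $\trust(\platform)=\textit{untrusted}$, and show that in each case the adversary can keep the cast ballot from ever reaching $\server$. If $\channel(\human,\platform)$ is not reliable, the delivery restriction of Section~\ref{subsubsec:channelTypes} does not apply to it and the adversary simply drops every message $\human$ sends on $(\human,\platform)$. Otherwise, if $\platform$ is untrusted, the adversary controls $\platform$ and has it receive but never forward the ballot. Otherwise $\channel(\human,\platform)$ is reliable and $\platform$ is trusted or \trustedforward, so the cast message is delivered to $\platform$ and forwarded on $(\platform,\server)$; but then, since the conclusion still fails, $\channel(\platform,\server)$ must be non-reliable and the adversary drops the forwarded message on $(\platform,\server)$. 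In all three cases the ballot $\ballot$ never reaches $\server$.

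From this I would build the execution $\tr$ that agrees with $\tr_f$ up to and including the step recording $\Ballot(\human,\ballot)$ together with the cast message, and from that point on lets the adversary perform only the blocking just described (and nothing else), while the honest agents continue with their remaining steps; in particular $\server$ still tallies after the voting phase, using only the ballots it actually received (none of which is $\ballot$), publishes the result, and $\BB$ records $\End$. Since dropping messages on non-reliable channels and stalling a compromised $\platform$ are always within the adversary's power, and since honest agents take every available step, $\tr$ is a legitimate trace in $\TR(\prot,\advModel{\topo}{\hShH})$. In $\tr$, $\Ballot(\human,\ballot)$ occurs before $\End$; no $\BBrecorded$ signal occurring before $\End$ contains $\ballot$, since the honest $\server$ never received $\ballot$ and the adversary is chosen not to re-inject it; and, because $\server$ is honest in $\advModel{\topo}{\hShH}$ so that $\honest(\server)\in\tr$, the inclusion $\tr\in\proauth(\server)$ forces $\tr\notin\verdict(\server,\ballot)$. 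Hence $\tr$ falsifies $\timeliness(\server)$ (Definition~\ref{def:proTime}), contradicting $\TR(\prot,\advModel{\topo}{\hShH})\subseteq\timeliness(\server)$; the conclusion therefore holds.

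The one step that really needs thought is the interplay between $\timeliness(\server)$ and $\proauth(\server)$: a protocol could satisfy $\timeliness(\server)$ vacuously by declaring $\server$ dishonest in every trace, so it is essential to run the construction in a model where $\server$ is honest and then invoke $\proauth(\server)$ to obtain $\tr\notin\verdict(\server,\ballot)$ for free. Everything else is routine bookkeeping: checking that the truncated-and-extended execution is a legal trace, that the standing assumption that honest agents take all available steps lets $\server$ and $\BB$ drive the trace forward to $\End$, and that the functional trace genuinely sends the ballot across the two channels whose reliability is at issue.
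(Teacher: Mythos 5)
Your proposal is correct and follows essentially the same route as the paper's proof: extract the functional trace in $\advModel{\topo}{\hShH}$, let the adversary exploit the non-reliable channel (or the untrusted $\platform$) to block the cast ballot while all honest agents otherwise proceed as before, and derive the contradiction from the tension between $\timeliness(\server)$ and $\proauth(\server)$ with an honest $\server$. Your explicit three-way case split and your remark about why $\proauth(\server)$ is indispensable (to rule out a vacuous $\verdict$) are just slightly more detailed renderings of the same argument; the only point the paper additionally flags, and which you omit, is the routine observation that the blocking works uniformly across multiple instances of $\platform$ since they all share the same channel and trust types.
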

 \begin{proof}
 We assume that there exists a protocol $\prot$ that satisfies $\fullTime(\prot,\topo)$ for a topology $\topo$ where 
 {(i)} $\insecChan{r} \, \not\smallertopo \channel(\human,\platform)$, 
 {(ii)}$\insecChan{r} \,\not\smallertopo \channel(\platform,\server)$, 
 or {(iii)} $\trustedforward \not\smallertopo \trust(\platform)$ 
 and show that such a protocol cannot exist by arriving at a contradiction.
 For simplicity, we show the proof for Case {(i)}, but it is analogous for the Cases {(ii)} and {(iii)}.

 By the definition of a dispute resolution property (Definition~\ref{def:useful}) and by the functional property $\functional$ (Definition~\ref{def:functional}), there exists a trace $\tr_1 \in \TR(\prot,\advModel{\topo}{\hShH})$ such that 
 $\exists \tr_1',\tr_1'', \human,\ballot, \bs \sd 
 \tr_1 = \tr_1' \cdotp \tr_1'' \wedge
 \Ballot(\human,\ballot) \in \tr_1' \wedge
 \BBrecorded(\bs) \in \tr_1' \wedge \ballot \in \bs \wedge \End \in \tr_1'' \wedge \tr_1 \in \honestNetwork$. Let $\tr_1$ be the smallest trace which satisfies this property.
 As $\server$ is honest in this trace, it further holds that $\honest(\server) \in \tr_1$ and since $\proauth(\server)$ holds, it follows that $\forall \ballot \sd \tr_1 \notin \verdict(\server,\ballot)$.
 
 Suppose that {(i)} holds.
 We show that there exists a trace $\tr_2$ in $\TR(\prot,\advModel{\topo}{\hShH})$, which contradicts $\DRprop{\prot}{\advModel{\topo}{}}{\timeliness(\server)}{\,\proauth(\server)}{\,\functional}$.
 In $\tr_2$, the adversary drops all ballots that are sent on the channel between $\human$ and $\platform$ but other than that behaves as in $\tr_1$.
 This is possible because the channel $(\human, \platform)$ is not reliable in (i).
 Assume that in $\tr_2$ all honest agents, including $\human$ and $\server$, behave according to the same role as in $\tr_1$ and as specified by $\prot$.
 
 It holds that for some traces $\tr_2'$ and $\tr_2''$ and for some list $[\ballot']$, $\tr_2 =\tr_2' \cdotp \tr_2'' \wedge
 \Ballot(\human,\ballot)\in \tr_2' \wedge 
 \BBrecorded([\ballot'])\in \tr_2' \wedge \ballot \notin [\ballot']\wedge \End \in \tr_2'' $.
 This trace exists since $\human$ casts $\ballot$ as in $\tr_1$ and $\server$ publishes only received ballots which are by construction the same as in $\tr_1$ except that $\server$ cannot have received $H$'s ballot $\ballot$.
 First, $\server$ cannot receive $\ballot$ from $\human$, as the adversary drops all ballots on the channel between $\human$ and $\platform$. 
 Moreover, as the adversary does not inject any messages and all honest agents behave as in $\tr_1$, $\ballot$ can also not be received by $\server$ through other channels.
 Nevertheless, $\server$ follows the same role as in $\tr_1$ and, by the assumptions of our protocol class, publishes the result even if it has not received ballots from all voters.
 As $\ballot \notin [\ballot']$, for $\tr_2\in\, \protime(\server)$ it must hold that $ \tr_2 \in \verdict(\server,\ballot)$. However, for $\tr_2\in \,\proauth(\server)$ it must hold that $ \tr_2 \notin \verdict(\server,\ballot)$ as $\server$ is honest. Thus we have a contradiction and such a protocol $\prot$, where $\fullTime(\prot, \topo)$ in Case (i), cannot exist.
 
 The other cases can be shown analogously, as the adversary can again drop the ballots that the voter $H$ sends, either on the channel from $\platform$ to $\server$ (Case {(ii)}) or on $\platform$ when it is dishonest (Case {(iii)}).
 Note that we allow for protocols with multiple instantiations of the roles.
 However, by the topology, the channels from and to any instance of $P$ have the same channel type and any instance of $P$ the same trust type.
 Therefore, the adversary can drop the messages respectively on \emph{all} channels to and from any instance of $P$ or on all instances of $P$ and the above contradiction can be derived independently of the number of instances.
 \end{proof}

In addition to the above lemma, a second necessary condition states that a topology must ensure the existence of evidence, which is required in dispute resolution to unequivocally determine whether a ballot under dispute has been received by the authority.
This can be achieved under five conditions (the five disjuncts in the next lemma).
Evidence can be ensured when one of the channels $(\human,P)$ or $(\platform,\server)$ is undeniable and generates public evidence that a ballot must have been received by $\server$.
Alternatively, when $\platform$ is trusted, it can be used to keep a trustworthy record of the cast ballots as evidence. 
Finally, evidence can be collected in the form of some confirmation that is sent back from $\server$. This requires $\server$ to be trusted to provide a timely reply and the confirmation must either be sent on an undeniable channel $(\server,\platform)$ or it must be ensured that $\human$ receives the confirmation by a reliable path from $\server$ to $\human$.
\begin{lemma}\label{lemma:TopoEvidence}
 Let $\prot$ be a protocol and $\topo=(V,E,\trust,\channel)$ be a topology in our voting protocol class such that $G(\topo)\subseteq_G G_S$, where $G_S$ is the topology graph in Figure~\ref{fig:GenericSystemSetup}. 
 \begin{alignat*}{3}
 &
\fullTime(\prot,\topo)
 \implies
 \\&\multicolumn{2}{r}{$
 \insecChan{\observable} \,\smallertopo \channel(\human,\platform)
 \vee \insecChan{\observable}\,\smallertopo \channel(\platform,\server)
 \vee \textit{trusted} \smallertopo \trust(\platform)
 $} 
 \\&\multicolumn{2}{r}{$
\vee ( \trustedreply \smallertopo \trust(\server)
 \,\wedge \insecChan{\observable}\,\smallertopo \channel(\server,\platform))
$} 
\\&\multicolumn{2}{r}{$
\vee ( \trustedreply\smallertopo \trust(\server)
 \wedge 
 \insecChan{\reliable} \,\smallertopo \channel(\server,\platform)
 \wedge \insecChan{\reliable}\, \smallertopo \channel(\platform,\human)).
$}
 \end{alignat*}
\end{lemma}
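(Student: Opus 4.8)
I argue the contrapositive: if $\topo$ falsifies all five disjuncts then no protocol $\prot$ can satisfy $\fullTime(\prot,\topo)$. First I cut the problem down using Lemma~\ref{lemma:TopoReliableChannel}: were $\fullTime(\prot,\topo)$ to hold, that lemma forces $\channel(\human,\platform)$ and $\channel(\platform,\server)$ to be at least reliable and $\trust(\platform)$ to be at least $\trustedforward$. Together with the failure of the third disjunct this pins $\trust(\platform)=\trustedforward$, and together with the failure of the first two it rules out an undeniable leg anywhere on the path $\human\to\platform\to\server$. Thus I may assume this shape of $\topo$ and split on $\trust(\server)$. If $\trust(\server)=\textit{untrusted}$, the fourth and fifth disjuncts hold vacuously and nothing further is assumed about the back channels. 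If $\trust(\server)=\trustedreply$, the failure of the fourth disjunct makes $\channel(\server,\platform)$ non-undeniable and the failure of the fifth splits further into the sub-case where $\channel(\server,\platform)$ is not reliable and the sub-case where $\channel(\platform,\human)$ is not reliable; in every sub-case there is a leg on the back-path $\server\to\platform\to\human$ that is neither reliable nor undeniable.

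The heart of the proof is a two-worlds argument, in the same style as the proof of Lemma~\ref{lemma:TopoReliableChannel}, relying on the consequence of Requirement~\ref{requirement:verdict} that traces with equal $\pubTrace$ agree on membership in $\verdict(\server,\ballot)$. Fix any $\prot$ with verdict $\verdict$ and suppose $\fullTime(\prot,\topo)$. From the functional property \functional{} I obtain a reference execution in which an honest voter $\human$ casts a ballot $\ballot$ that is recorded before $\End$; this fixes the protocol steps involved in casting. I then construct $\tr_A\in\TR(\prot,\advModel{\topo}{\mShH})$ in which $\human$ is honest and casts $\ballot$, the ballot travels the reliable path to $\server$, but $\server$ does not record it: either $\server$ is dishonest and simply absorbs $\ballot$, or (in the $\trustedreply$ case) $\server$ is partially trusted and so must reply, but its reply is dropped on the non-reliable back-path leg and is witnessed by no one; in either case $\server$ otherwise behaves exactly like an honest $\server$ that never received $\ballot$, publishing the corresponding $\BB$-data with $\ballot\notin\bs$. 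Since $\Ballot(\human,\ballot)$ precedes $\End$ and $\ballot\notin\bs$, $\timeliness(\server)$ forces $\tr_A\in\verdict(\server,\ballot)$. Next I build a companion trace $\tr_B\in\TR(\prot,\advModel{\topo}{\hSmH})$, a model where $\proauth(\server)$ is required (hence $\honest(\server)\in\tr_B$) and the distinguished voter may be dishonest, in which $\ballot$ is again not recorded — the now-dishonest voter never feeds a proper $\ballot$ to the honest $\server$, while the remaining honest agents are driven to do exactly what they did in $\tr_A$ — and I arrange $\pubTrace(\tr_A)=\pubTrace(\tr_B)$. Requirement~\ref{requirement:verdict} then yields $\tr_B\in\verdict(\server,\ballot)$, contradicting $\proauth(\server)$. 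The point is that the negation of all five disjuncts is precisely what makes $\tr_B$ reproducible: no undeniable leg on the forward path emits a $\pubChan$ signal certifying the ballot's transit to $\server$ (disjuncts one, two), no trusted $\platform$ keeps and publishes a trustworthy record of it (disjunct three), and a $\trustedreply$ $\server$'s reply is neither carried on an undeniable $(\server,\platform)$ channel (disjunct four) nor guaranteed to reach $\human$ along a reliable back-path (disjunct five), so an honest $\server$ that never saw $\ballot$ leaves the identical public trace.

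The main obstacle is establishing $\pubTrace(\tr_A)=\pubTrace(\tr_B)$ uniformly across the sub-cases, which requires bookkeeping over all three admissible signal kinds. For $\BB$: the non-recording $\server$ of $\tr_A$ is made to behave as honest-$\server$-without-$\ballot$, which the honest $\server$ of $\tr_B$ matches. For $\Evidence$: honest agents record it only when driven by received messages, and a dishonest agent can record $\Evidence(\ballot,\evidence)$ only when it knows the term $\evidence$; the failure of disjuncts four and five guarantees that no $\server$-backed confirmation reaches $\human$ in $\tr_A$, while any other evidence an honest agent collects — notably via an off-line $\device$, whose knowledge cannot depend on $\server$ — is equally obtainable in $\tr_B$, if necessary by the dishonest voter fabricating it from terms it now knows. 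For $\pubChan$: these signals arise only on undeniable channels, and since the negated disjuncts leave no undeniable channel on the critical forward path, any undeniable channel that does occur lies off that path (e.g.\ to or from $\device$) and emits the same signal in $\tr_B$ because undeniability of a channel is independent of its endpoints' honesty. Two further delicacies, both already present in the proof of Lemma~\ref{lemma:TopoReliableChannel}, need care: reliability of a channel is tied to the \emph{static} trust type of its endpoints, which is what lets $\tr_B$ exhibit $\ballot\notin\bs$ with an honest $\server$; and, since every instance of $\platform$ shares one trust type and every $\platform$-incident channel one channel type (likewise for $\server$), the construction works verbatim regardless of how many agents instantiate each role. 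Combining this impossibility result with Lemma~\ref{lemma:TopoReliableChannel} yields the full set of necessary conditions, which — together with the sufficiency witnesses and the minimality argument — feed into Theorem~\ref{theorem:completeChar}.
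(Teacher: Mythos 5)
Your proposal is correct and follows essentially the same strategy as the paper's proof: the contrapositive, a case split on whether $\server$ is untrusted or $\trustedreply$ with a non-undeniable and partly non-reliable back-path, and a two-worlds construction of traces $\tr_A\in\TR(\prot,\advModel{\topo}{\mShH})$ and $\tr_B\in\TR(\prot,\advModel{\topo}{\hSmH})$ with equal $\pubTrace$, so that Requirement~\ref{requirement:verdict} forces $\timeliness(\server)$ and $\proauth(\server)$ to contradict each other on membership in $\verdict(\server,\ballot)$. The only (immaterial) organizational difference is that the paper instantiates the argument on explicit maximal topologies $\topo_{I1},\topo_{I2},\topo_{I3}$ and then invokes Lemma~\ref{lemma:possAndImposs}, whereas you run the same construction directly on an arbitrary topology falsifying all five disjuncts, after using Lemma~\ref{lemma:TopoReliableChannel} to pin down the forward path.
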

\begin{figure*} 
 \centering
\begin{subfigure}[b]{0.4\textwidth}
 \begin{center} \scalebox{0.7}{
\begin{tikzpicture}[->,>=stealth',shorten >=1pt,auto,node distance=3cm and 1.5cm,semithick]

 \node[state] (H) {$\human$};
 \node[state, accepting] (D)[left=1cm of H] {$\device$}; 
 \node[state, accepting,style=dashed](P)[right of=H] {$\platform$};
 \node[state] (S) [right of=P] {$\server$};
 
 \node[state, accepting
 ] (BB)[below=0.4cm of P] {$\BB$};
 \node[state, accepting
 ] (A) [below=0.4cm of H] {$\auditor$};

 \path 
 (H) edge[\edgeSec,bend left]node[above left] {\reliable} (P) 
 (P) edge[\edgeSec]node[below] {\observable} (H) 
 (H) edge[\edgeSec,bend left] node[below] {\observable} (D) 
 (D) edge[\edgeSec,bend left] node[above] {\observable} (H) 
 
 (P) edge[\edgeSec,bend left]node[above right] {\reliable} (S) 
 (S) edge[\edgeSec]node[below] {\observable} (P) 
 
 (S) edge[\edgeAuth
 ] node[below] {\default} (BB)
 (BB)edge[\edgeAuth
 ] node[below] {\default} (H)
 edge[\edgeAuth
 ] node[below left] {\default} (A)
 ;
\end{tikzpicture}}
\end{center}
\caption{The maximal topology $\topo_{I1}$ in Case (i).}
\label{fig:evidenceImpossibility1}
\end{subfigure}
\begin{subfigure}[b]{0.4\textwidth}
 \begin{center} \scalebox{0.7}{
\begin{tikzpicture}[->,>=stealth',shorten >=1pt,auto,node distance=3cm and 1.5cm,semithick]

 \node[state] (H) {$\human$};
 \node[state, accepting] (D)[left=1cm of H] {$\device$};
 \node[state, accepting,style=dashed](P)[right of=H] {$\platform$};
 \node[state, accepting,style=dashed] (S) [right of=P] {$\server$};
 
 \node[state, accepting
 ] (BB)[below=0.4cm of P] {$\BB$};
 \node[state, accepting
 ] (A) [below=0.4cm of H] {$\auditor$};

 \path 
 (H) edge[\edgeSec,bend left]node[above left] {\reliable} (P) 
 (P) edge[\edgeSec]node[below] {\observable} (H) 
 (H) edge[\edgeSec,bend left] node[below] {\observable} (D) 
 (D) edge[\edgeSec,bend left] node[above] {\observable} (H) 
 
 (P) edge[\edgeSec,bend left]node[above right] {\reliable} (S) 
 (S) edge[\edgeSec]node[below] {\default} (P) 
 
 (S) edge[\edgeAuth
 ] node[below] {\default} (BB)
 (BB)edge[\edgeAuth
 ] node[below] {\default} (H)
 edge[\edgeAuth
 ] node[below left] {\default} (A)
 ;
\end{tikzpicture}}
\end{center}
\caption{The maximal topology $\topo_{I2}$ in Case~(ii).}
\label{fig:evidenceImpossibility2}
\end{subfigure}
\begin{subfigure}[b]{0.4\textwidth}
 \begin{center} \scalebox{0.7}{
\begin{tikzpicture}[->,>=stealth',shorten >=1pt,auto,node distance=3cm and 1.5cm,semithick]

 \node[state] (H) {$\human$};
 \node[state, accepting] (D)[left=1cm of H] {$\device$};
 \node[state, accepting,style=dashed](P)[right of=H] {$\platform$};
 \node[state, accepting,style=dashed] (S) [right of=P] {$\server$};
 
 \node[state, accepting
 ] (BB)[below=0.4cm of P] {$\BB$};
 \node[state, accepting
 ] (A) [below=0.4cm of H] {$\auditor$};

 \path 
 (H) edge[\edgeSec,bend left]node[above left] {\reliable} (P) 
 (P) edge[\edgeSec]node[below] {\default} (H) 
 (H) edge[\edgeSec,bend left] node[below] {\observable} (D) 
 (D) edge[\edgeSec,bend left] node[above] {\observable} (H) 
 
 (P) edge[\edgeSec,bend left]node[above right] {\reliable} (S) 
 (S) edge[\edgeSec]node[below] {\reliable} (P) 
 
 (S) edge[\edgeAuth
 ] node[below] {\default} (BB)
 (BB)edge[\edgeAuth
 ] node[below] {\default} (H)
 edge[\edgeAuth
 ] node[below left] {\default} (A)
 ;
\end{tikzpicture}}
\end{center}
\caption{The maximal topology $\topo_{I3}$ in Case (iii).}
\label{fig:evidenceImpossibility3}
\end{subfigure}
\caption{Topologies for which it is impossible to achieve $\fullTime$ by Lemma~\ref{lemma:TopoEvidence}. 
}\label{fig:impossResult}
\end{figure*}

\begin{proof}
 We show the statement by proving its contrapositive, i.e., we show that if all of the disjuncts are false, then so is the property. 
 We distinguish the three cases where all the disjuncts are false.
 In all three cases, it holds that the first three disjuncts are false, i.e., 
 $ \insecChan{\observable}\,\not\smallertopo \channel(\human,\platform)$, 
 $\insecChan{\observable}\,\not\smallertopo \channel(\platform,\server)$,
 and $ \textit{trusted} \,\not\smallertopo\trust(\platform)$.
 Additionally, it holds in Case (i) that 
 $\trustedreply \not\smallertopo \trust(\server) $,
 in Case (ii) that 
 $ \insecChan{r}\,\not\smallertopo \channel(\server,\platform)$,
 and in Case (iii) that 
 $ \insecChan{\observable}\,\not\smallertopo\channel(\server,\platform) 
 \,\wedge\, \insecChan{r}\,\not\smallertopo \channel(\platform,\human)$.
 For each case, we take the topology $\topo$ with the strongest assumptions and where all the conditions of this case hold, i.e. $\topo$ satisfies the conditions and $\forall \topo'\sd \topo' \textit{satisfy conditions} \implies \topo' \smallertopo \topo$, and show that there cannot exist any protocol $\prot$ such that $\fullTime(\prot,\topo)$. 
 It follows by Lemma~\ref{lemma:possAndImposs} that 
 $\forall \topo'\sd \topo' \textit{ satisfies conditions} \implies 
 \neg \exists \prot' \sd \fullTime(\prot',\topo')$ (Lemma~\ref{lemma:possAndImposs} implies that if there were a possibility result for $T'$, there would also be one for $T$, since $T'\smallertopo T$). 
 
 \textit{Case (i):} 
 Figure~\ref{fig:evidenceImpossibility1} depicts the maximal topology $\topo_{I1}$ which satisfies the conditions of this case.
 We assume that there exists a protocol $\prot$ such that 
 $\fullTime(\prot,\topo_{I1})$
 and show that this leads to a contradiction.
 By the functional property and the definition of \functional:
 \begin{alignat*}{3}
 &\exists \tr\in\TR(\prot, \advModel{\topo_{I1}}{\hShH}),
 \tr',\tr'', \human,\ballot,\bs
 \sd \tr = \tr' \cdot \tr''
 \\ \multicolumn{2}{r}{$ 
 \wedge \Ballot(\human,\ballot) \in \tr' 
 \wedge \BBrecorded(\bs) \in \tr '
 \wedge \ballot \in \bs 
 \wedge \End \in \tr'' 
 $}
 \\\multicolumn{2}{r}{$
 \wedge \tr \in \honestNetwork 
 .$}
 \end{alignat*}
 \noindent
 Let $\tr$ be the minimal trace satisfying this.
 We construct two traces 
 $\tr_1\in\TR(\prot, \advModel{\topo_{I1}}{\mShH})$ and 
 $\tr_2 \in \TR(\prot, \advModel{\topo_{I1}}{\hSmH})$.
 
In $\tr_1$, all roles behave exactly as in $\tr$, except 
for $\server$ who ignores $\human$'s ballot $\ballot$ when it is received. In particular, $\server$ does not send any messages on $(\server, \platform)$ which depend on the receipt of $\human$'s ballot $\ballot$ and does not include $\ballot$ in the published recorded ballots. This is possible as the adversary has full control over $\server$ in this trace and can simulate all send and receive events from $\tr$ but leave out selected ones.
In $\tr_2$, all roles behave as in $\tr$, except that $\human$, who is dishonest, never casts the ballot $\ballot$.
This is possible as the adversary can simulate all behaviors of $\human$ in $\tr$ up to the point where $\human$'s ballot $\ballot$ is cast.
$\server$ behaves in $\tr_2$ as in $\tr$, however, as it does not receive $\human$'s ballot $\ballot$, this is not included in the tallying process. Recall that this complies with any role specification of the authority as, by assumption of our protocol class, the honest $\server$ does not wait for all voters' ballots.

We next reason that the traces can be constructed such that $\pubTrace(\tr_1)=\pubTrace(\tr_2)$, i.e., that all signals whose leading function symbol is one of $\Evidence$, $\pubChan$, and $\BB$ are equal in both traces.
First, note that all signals $\Evidence(\ballot, \evidence)$ where $\evidence$ only consists of terms initially known by $\human$, can be forged by the adversary in $\tr_2$ as she compromises $\human$ and learns all his secrets.
Second, all signals $\Evidence(\ballot, \evidence)$ recorded in $\tr_1$ where $\evidence$ does not depend on terms that are only sent by $\server$ if it has received $\ballot$, are also recorded in $\tr_2$.
Finally, messages that are sent by $\server$ only when $\server$ has received the ballot $\ballot$, are never sent in $\tr_1$ by assumption and never sent in $\tr_2$ as $\server$ does not receive $\ballot$.
Thus, if $\evidence$ depends on such messages, it can neither occur in a recorded signal in $\tr_1$ nor in $\tr_2$.
 
Similarly, any recorded signal whose leading function symbol is one of $\BB$ or $\pubChan$ in one trace can be simulated in the other.
In particular, all terms $t$ appearing in signals $\BB(t)$ or $\pubChan(\server,\platform, t)$ must have been sent by $\server$. 
As the dishonest $\server$ in $\tr_1$ has the same knowledge as the honest $\server$ in $\tr_2$, the same terms can be sent to the bulletin board and over the undeniable channel from $\server$ to $\platform$ to generate the same signals.
Finally, all signals $\pubChan(A,B,t)$, for $A \neq \server$, are equal: either sends on such channels do not depend on $\server$ receiving $\human$'s ballot and are done in both traces, or they do depend on it and, in both traces, cannot possibly be done as $\server$ ignores $\human$'s ballot in $\tr_1$ and does not receive it in $\tr_2$.

We have shown that $\tr_1\in\TR(\prot, \advModel{\topo_{I1}}{\mShH})$ and $\tr_2\in\TR(\prot, \advModel{\topo_{I1}}{\hSmH})$ such that $\trP=\pubTrace(\tr_1)=\pubTrace(\tr_2)$.
By $\timeliness(\server)$, it must hold that $\trP\in\verdict(\server,\ballot)$, as the honest voter's ballot $\ballot$ is not recorded in $\tr_1$. However, by $\proauth(\server)$, it must hold that $\trP\notin\verdict(\server,\ballot)$ as $\server$ is honest in $\tr_2$, which yields a contradiction.
Note that the contradiction can be established independently of the number of devices $D$ and $P$ that can communicate with a voter, as each instantiation of $D$ and $P$ as well as all their incoming and outgoing channels have the same trust and channel types, respectively.

\textit{Case (ii):} 
The maximal topology $\topo_{I2}$ that satisfies these conditions is depicted in Figure~\ref{fig:evidenceImpossibility2}.
Similarly to Case (i), we assume a protocol $\prot$ for which $\fullTime(\prot,\topo_{I2})$ and construct two traces $\tr_3\in\TR(\prot, \advModel{\topo_{I2}}{\mShH})$ and $\tr_4\in\TR(\prot, \advModel{\topo_{I2}}{\hSmH})$ such that $\trP=\pubTrace(\tr_3)=\pubTrace(\tr_4)$ and that yield a contradiction as they respectively require that 
$\trP \in \verdict(\server,\ballot)$ and $\trP \notin \verdict(\server,\ballot)$.
In particular, this holds for $\tr_4=\tr_2$, where $\tr_2$ is from Case (i), and for $\tr_3$ which is as $\tr_1$ from Case (i) except for the following differences.
In $\tr_3$, instead of ignoring $H$'s ballot $\ballot$ as in $\tr_1$, the partially trusted $\server$ answers with any response required by the protocol but does not further consider the ballot $\ballot$, e.g., when writing all recorded ballots on the bulletin board. 
The adversary then drops on the channel $(\server,\platform)$ any such responses that are sent by $\server$ only when the ballot $\ballot$ from $H$ has been received. This results in $\pubTrace(\tr_3)=\pubTrace(\tr_1)$ and thus the conclusions from Case~(i) apply.

\textit{Case (iii):} 
The maximal topology $\topo_{I3}$ that satisfies these conditions is depicted in Figure~\ref{fig:evidenceImpossibility3}.
Again we assume a protocol $\prot$ for which $\fullTime(\prot,\topo_{I3})$ and construct two traces $\tr_5\in\TR(\prot, \advModel{\topo_{I3}}{\mShH})$ and $\tr_6\in\TR(\prot, \advModel{\topo_{I3}}{\hSmH})$ such that $\trP=\pubTrace(\tr_5)=\pubTrace(\tr_6)$ and that yield a contradiction as they respectively require that 
$\trP \in \verdict(\server,\ballot)$ and $\trP \notin \verdict(\server,\ballot)$.
This holds for $\tr_6=\tr_2$, where $\tr_2$ is from Case (i), and for $\tr_5$ which is as $\tr_3$ from Case (ii), except that the adversary drops all relevant messages sent by $\server$ on the channel $(\platform,\human)$ instead of $(\server,\platform)$.
The same conclusions as in Cases (i) and (ii) follow.
\end{proof}

\paragraph{Necessary conditions for $\topo$ with $G(\topo)\subseteq_G G_U$}
Next, we consider the necessary conditions to satisfy $\fullTime$ for topologies where the roles of the voter and platform are unified.
We will see that these conditions are closely related to the ones established above.
First, note that we require at least the roles $H$ and $\server$ and the channel $(H,\server)$ as $H$ must cast the ballot and $\server$ must publish it on the bulletin board in order for the functional property to hold.

Next, similarly to Lemma~\ref{lemma:TopoReliableChannel}, we establish that to achieve $\fullTime$ in any topology, there must be a reliable channel from $\human$ to $\server$.
This corresponds to the observation that, in practice, in a remote e-voting setting when ballots are cast over the insecure Internet where they can be dropped, then $\fullTime$ cannot hold even when both the voter $H$ and the authority $\server$ are honest.
\begin{lemma}\label{lemma:TopoReliableChannelUnified}
Let $\prot$ be a protocol and $\topo=(V,E,\trust,\channel)$ be a topology in our voting protocol class such that $G(\topo)\subseteq_G G_U$, where $G_U$ is the topology graph in Figure~\ref{fig:GenericSystemSetup}. 
\begin{alignat*}{3}
 &
\fullTime(\prot,\topo)
 \implies 
\insecChan{\reliable} \,\smallertopo \channel(\human,\server) .
\end{alignat*}
\end{lemma}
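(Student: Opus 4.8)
The plan is to prove the contrapositive, exactly mirroring the structure used for Lemma~\ref{lemma:TopoReliableChannel} but in the unified setting $G(\topo)\subseteq_G G_U$. Assume $\insecChan{\reliable}\,\not\smallertopo\channel(\human,\server)$; I want to show no protocol $\prot$ satisfies $\fullTime(\prot,\topo)$. Since $\insecChan{\reliable}$ is the weakest reliable channel type and the channel $(\human,\server)$ is the only channel on which the ballot can travel from the voter to the authority in $G_U$, the assumption means $\channel(\human,\server)$ is \emph{not} reliable, i.e.\ it is one of the default channels $\insecChan{\default},\authChan{\default},\confChan{\default},\secChan{\default}$, and in particular the adversary is permitted to drop messages on it.

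First I would invoke the functional requirement: by Definition~\ref{def:useful} applied to $\fullTime(\prot,\topo)$, there exists a trace $\tr_1\in\TR(\prot,\advModel{\topo}{\hShH})$ with $\tr_1\in\functional$, so there are $\tr_1',\tr_1''$, a voter $\human$, a ballot $\ballot$, and a list $\bs$ with $\tr_1=\tr_1'\cdotp\tr_1''$, $\Ballot(\human,\ballot)\in\tr_1'$, $\BBrecorded(\bs)\in\tr_1'$, $\ballot\in\bs$, $\End\in\tr_1''$, and $\tr_1\in\honestNetwork$; take $\tr_1$ minimal. Since $\server$ is trusted here, $\honest(\server)\in\tr_1$, and since $\proauth(\server)$ holds for $\advModel{\topo}{\hShH}$, $\tr_1\notin\verdict(\server,\ballot)$ for all $\ballot$. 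Next I construct a trace $\tr_2\in\TR(\prot,\advModel{\topo}{\hShH})$ (i.e.\ both $\human$ and $\server$ honest) that is identical to $\tr_1$ except that the adversary drops \emph{every} message sent on $(\human,\server)$ — legal precisely because that channel is not reliable. Then $\server$ never receives $\human$'s ballot $\ballot$: it cannot arrive over $(\human,\server)$ by construction, and in $G_U$ there is no other path into $\server$ carrying that ballot (the only other edge into $\server$ is $(\BB,\ldots)$-type structure; more precisely, $\server$'s only incoming channel in $G_U$ besides what it sets up itself is from $H$, and the adversary injects nothing new since she just replays $\tr_1$'s other behavior). By the protocol-class assumption that $\server$ tallies and publishes after the voting phase even without all voters' ballots, $\server$ still reaches the publication step: for some $\tr_2',\tr_2''$ and some list $[\ballot']$ with $\ballot\notin[\ballot']$ we get $\tr_2=\tr_2'\cdotp\tr_2''$, $\Ballot(\human,\ballot)\in\tr_2'$, $\BBrecorded([\ballot'])\in\tr_2'$, $\End\in\tr_2''$. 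Now $\timeliness(\server)$ forces $\tr_2\in\verdict(\server,\ballot)$ (the honest voter's ballot $\ballot$ was cast before $\End$ yet is absent from the recorded list), while $\proauth(\server)$ forces $\tr_2\notin\verdict(\server,\ballot)$ since $\honest(\server)\in\tr_2$ — contradiction, so no such $\prot$ exists.

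Finally I would close the multiplicity caveat as in Lemma~\ref{lemma:TopoReliableChannel}: the class allows many instances of the role $\human$, but by the topology all channels from any instance of $\human$ to $\server$ share the same (non-reliable) channel type, so the adversary can drop the distinguished voter's ballot regardless of how many voter instances exist; the contradiction is derived for that single distinguished voter and does not depend on the others.

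The main obstacle — and it is a mild one — is the argument that in $G_U$ there is genuinely \emph{no alternative route} by which $\ballot$ could reach $\server$ after being dropped on $(\human,\server)$: one must appeal to the fixed shape of $G_U$ in Figure~\ref{fig:GenericSystemSetup} (the device $\device$ is not connected to $\server$; $\BB$ only receives from $\server$, not the other way; auditors do not feed $\server$), together with the fact that the adversary in $\tr_2$ injects nothing she could not already derive in $\tr_1$ and all honest agents replay their $\tr_1$ behavior, so no new message reaches $\server$. Everything else is a direct transcription of the $G_S$ argument with ``$\channel(\human,\platform)$ and $\channel(\platform,\server)$ and $\trust(\platform)$'' collapsed into the single channel $\channel(\human,\server)$.
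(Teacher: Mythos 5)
Your proposal matches the paper's own proof essentially step for step: the paper likewise argues by contradiction, extracts the minimal functional trace $\tr_1$, constructs $\tr_2$ by having the adversary drop all ballots on the non-reliable channel $(\human,\server)$ while all honest agents replay their $\tr_1$ behaviour, and derives the clash between $\timeliness(\server)$ forcing $\tr_2\in\verdict(\server,\ballot)$ and $\proauth(\server)$ forbidding it, closing with the same remark about multiple voter instances. The ``no alternative route'' point you flag is handled in the paper exactly as you suggest, by noting that the adversary injects nothing new and honest agents behave as in $\tr_1$, so no further elaboration is needed.
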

\begin{proof}
We assume that there exists a protocol $\prot$ and a topology $\topo$ such that $\fullTime(\prot,\topo)$ but where nevertheless $\insecChan{r} \,\not\smallertopo \channel(\human,\server)$ and show that such a protocol cannot exist by arriving at a contradiction.
\\
By the definition of a dispute resolution property (Definition~\ref{def:useful}) and by the functional property $\functional$ (Definition~\ref{def:functional}), there exists a trace $\tr_1 \in \TR(\prot,\advModel{\topo}{\hShH})$ such that 
$\exists \tr_1',\tr_1'', \human,\ballot, \bs \sd 
\tr_1 = \tr_1' \cdotp \tr_1'' \wedge
\Ballot(\human,\ballot) \in \tr_1' \wedge
\BBrecorded(\bs) \in \tr_1' \wedge \ballot \in \bs \wedge \End \in \tr_1'' \wedge \tr_1 \in \honestNetwork$. Let $\tr_1$ be the smallest trace which satisfies this property.

As in the proof of Lemma~\ref{lemma:TopoReliableChannel}, we show that there exists a trace $\tr_2$ in $\TR(\prot,\advModel{\topo}{\hShH})$, which contradicts $\DRprop{\prot}{\advModel{\topo}{}}{\timeliness(\server)}{\,\proauth(\server)}{\,\functional}$.
In particular, this holds for a trace $\tr_2$ where all honest agents behave according to the same role as in $\tr_1$ and where the adversary drops all ballots that are sent on the channel between $\human$ and $\server$ but other than that behaves as in $\tr_1$.

We can apply the reasoning from the proof of Lemma~\ref{lemma:TopoReliableChannel} and conclude that for $\tr_2\in\, \protime(\server)$ it must hold that $\tr_2 \in \verdict(\server,\ballot)$, as the honest voter $H$'s ballot $\ballot$ is not in the list of recorded ballots on the bulletin board in $\tr_2$.
However, at the same time, for $\tr_2\in \,\proauth(\server) $ it must hold that $ \tr_2 \notin \verdict(\server,\ballot)$. Thus we have a contradiction and such a protocol $\prot$, where $\fullTime(\prot, \topo)$, cannot exist.
By the same reasoning as in the proof of Lemma~\ref{lemma:TopoReliableChannel}, we can also conclude that this holds independently of the number of agents instantiating the roles.

\end{proof}
As for the other setup, we show that a second necessary condition is that a topology ensures the existence of evidence.
In particular, evidence can be established by an undeniable channel from $H$ to $\server$.
Alternatively it can be established by a reliable channel $(\server,H)$ and a partially trusted $\server$, in which case a confirmation can be sent back from $\server$ to $H$ upon receiving the ballot, which can serve as evidence.
\begin{figure*} 
 \centering
\begin{subfigure}[b]{0.4\textwidth}
 \begin{center} \scalebox{0.7}{
\begin{tikzpicture}[->,>=stealth',shorten >=1pt,auto,node distance=3cm and 1.5cm,semithick]

 \node[state] (H) {$\human$};
 \node[state, accepting] (D)[left=1cm of H] {$\device$};
 \node[state] (S) [right of=H] {$\server$};
 
 \node[state, accepting
 ] (BB)[right=0.6 of A] {$\BB$};
 \node[state, accepting
 ] (A) [below=0.2cm of H] {$\auditor$};

 \path 
 
 (H) edge[\edgeSec,bend left]node[above right] {\reliable} (S) 
 (S) edge[\edgeSec]node[below] {\observable} (H) 
 (H) edge[\edgeSec,bend left] node[below] {\observable} (D) 
 (D) edge[\edgeSec,bend left] node[above] {\observable} (H)

 (S) edge[\edgeAuth
 ] node[below] {\default} (BB)
 (BB)edge[\edgeAuth
 ] node[below] {\default} (H)
 edge[\edgeAuth
 ] node[below left] {\default} (A)
 ;
\end{tikzpicture}}
\end{center}
\caption{The maximal topology $\topo_{I4}$ in Case (i).}
\label{fig:evidenceImpossibility4}
\end{subfigure}
\begin{subfigure}[b]{0.4\textwidth}
 \begin{center} \scalebox{0.7}{
\begin{tikzpicture}[->,>=stealth',shorten >=1pt,auto,node distance=3cm and 1.5cm,semithick]

 \node[state] (H) {$\human$};
 \node[state, accepting] (D)[left=1cm of H] {$\device$};
 \node[state, accepting,style=dashed] (S) [right of=H] {$\server$};
 
 \node[state, accepting
 ] (BB)[right=0.6 of A] {$\BB$};
 \node[state, accepting
 ] (A) [below=0.2cm of H] {$\auditor$};

 \path 
 
 (H) edge[\edgeSec,bend left]node[above right] {\reliable} (S) 
 (S) edge[\edgeSec]node[below] {\default} (H) 
 (H) edge[\edgeSec,bend left] node[below] {\observable} (D) 
 (D) edge[\edgeSec,bend left] node[above] {\observable} (H) 
 
 (S) edge[\edgeAuth
 ] node[below] {\default} (BB)
 (BB)edge[\edgeAuth
 ] node[below] {\default} (H)
 edge[\edgeAuth
 ] node[below left] {\default} (A)
 ;
\end{tikzpicture}}
\end{center}
\caption{The maximal topology $\topo_{I5}$ in Case~(ii).}
\label{fig:evidenceImpossibility5}
\end{subfigure}
\caption{Topologies for which it is impossible to achieve $\fullTime$ by Lemma~\ref{lemma:TopoEvidenceUnified}. 
}\label{fig:impossResultHP}
\end{figure*}
\begin{lemma}\label{lemma:TopoEvidenceUnified}
Let $\prot$ be a protocol and $\topo=(V,E,\trust,\channel)$ be a topology in our voting protocol class such that $G(\topo)\subseteq_G G_U$, where $G_U$ is the topology graph in Figure~\ref{fig:GenericSystemSetup}. 
\begin{alignat*}{3}
 &
\fullTime(\prot,\topo)
 \implies 
 \insecChan{\observable} \,\smallertopo \channel(\human,\server)
 \\&\multicolumn{2}{r}{$
\vee ( \trustedreply\smallertopo \trust(\server)
 \wedge 
 \insecChan{\reliable} \,\smallertopo \channel(\server,\human)).
$}
 \end{alignat*}
\end{lemma}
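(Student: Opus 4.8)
The plan is to prove the contrapositive, mirroring the proof of Lemma~\ref{lemma:TopoEvidence} but in the simpler unified setup with no $\platform$ role. Assume both disjuncts fail. Since in our class $\trust(\server)$ is either $\textit{untrusted}$ or $\trustedreply$, this splits into two cases: (i) $\insecChan{\observable}\not\smallertopo\channel(\human,\server)$ and $\trust(\server)=\textit{untrusted}$; and (ii) $\insecChan{\observable}\not\smallertopo\channel(\human,\server)$, $\trustedreply\smallertopo\trust(\server)$, and $\insecChan{\reliable}\not\smallertopo\channel(\server,\human)$. For each case I would fix the maximal topology satisfying its conditions — $\topo_{I4}$ (Figure~\ref{fig:evidenceImpossibility4}) for (i) and $\topo_{I5}$ (Figure~\ref{fig:evidenceImpossibility5}) for (ii) — and show no protocol $\prot$ satisfies $\fullTime(\prot,\topo)$; by Lemma~\ref{lemma:possAndImposs} this rules out $\fullTime$ for every weaker topology, i.e.\ for every topology meeting that case's conditions.

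Within each case, assuming $\fullTime(\prot,\topo)$, the functional property $\functional$ yields a minimal honest-network trace $\tr\in\TR(\prot,\advModel{\topo}{\hShH})$ in which $\human$ casts $\ballot$ and $\BBrecorded(\bs)\in\tr$ with $\ballot\in\bs$ before $\End$. From $\tr$ I would build $\tr_1\in\TR(\prot,\advModel{\topo}{\mShH})$, where the honest $\human$ behaves as in $\tr$ but $\server$ ignores $\ballot$: in Case (i) the adversary-controlled $\server$ simply omits $\ballot$ from the recorded ballots and withholds any message whose sending depends on having received $\ballot$; in Case (ii) the partially trusted $\server$ still issues whatever reply the protocol mandates but does not include $\ballot$ among the recorded ballots, and the adversary drops, on the non-reliable channel $\channel(\server,\human)$, any reply $\server$ sends only upon receiving $\ballot$. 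I would also build $\tr_2\in\TR(\prot,\advModel{\topo}{\hSmH})$, where the dishonest $\human$ replays all of its behaviour from $\tr$ up to but excluding the step casting $\ballot$ (so all of its interactions with $\device$ are preserved), while the honest $\server$ never receives $\ballot$ and, as permitted for protocols in our class, tallies without it. The crux is then $\pubTrace(\tr_1)=\pubTrace(\tr_2)$: every $\BB$ term and every $\pubChan(\server,\cdot,\cdot)$ term comes from $\server$'s knowledge, which is identical modulo $\ballot$ and which $\server$ declines to use in $\tr_1$; every $\Evidence$ signal is either forgeable by the adversary controlling $\human$ in $\tr_2$ from $\human$'s initial knowledge, or independent of $\ballot$-dependent server messages and present in both, or dependent on such messages and absent from both; and every remaining $\pubChan$ signal (on $\channel(\human,\device)$, $\channel(\device,\human)$, and in Case (i) on $\channel(\server,\human)$) is either $\ballot$-independent and present in both or $\ballot$-dependent and present in neither. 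Here the two failing disjuncts are exactly what is needed: $\channel(\human,\server)$ is not undeniable, so casting $\ballot$ records no $\pubChan$ signal, and $\channel(\server,\human)$ is not reliable in Case (ii), so the mandated reply may be suppressed without violating the reliability restriction (which would otherwise force delivery, since both endpoints are trusted or partially trusted). With $\pubTrace(\tr_1)=\pubTrace(\tr_2)$ in hand, $\timeliness(\server)$ forces $\tr_1\in\verdict(\server,\ballot)$ because $\human$'s ballot is recorded nowhere in $\tr_1$, Requirement~\ref{requirement:verdict} gives $\tr_2\in\verdict(\server,\ballot)$, and $\proauth(\server)$ gives $\tr_2\notin\verdict(\server,\ballot)$ since $\honest(\server)\in\tr_2$ — a contradiction. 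As in Lemma~\ref{lemma:TopoEvidence}, the argument is insensitive to the number of instances of $\human$ and $\device$, since all such instances and their incident channels carry a common trust/channel type.

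The main obstacle is precisely the $\pubTrace$-equality bookkeeping: one must check, class by class and channel by channel, that nothing $\server$ does after receiving $\ballot$ — in particular any confirmation it would return to $\human$ — leaves a publicly observable footprint that $\tr_2$ cannot reproduce. Everything else is a direct adaptation of the corresponding arguments in the proof of Lemma~\ref{lemma:TopoEvidence}, with the $\human$–$\platform$–$\server$ path collapsed to the single channel $\channel(\human,\server)$ and the $\server$–$\platform$–$\human$ return path collapsed to $\channel(\server,\human)$.
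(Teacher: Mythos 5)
Your proposal is correct and follows essentially the same route as the paper's proof: contrapositive, the two maximal topologies $\topo_{I4}$ and $\topo_{I5}$, the pair of traces ($\server$ ignoring the ballot versus $\human$ never casting it, with the adversary dropping $\ballot$-dependent replies on the non-reliable return channel in the partially trusted case), the $\pubTrace$-equality argument, and the $\timeliness(\server)$/$\proauth(\server)$ contradiction combined with Lemma~\ref{lemma:possAndImposs}. Your case split (partitioning on whether $\server$ is untrusted or partially trusted) differs only cosmetically from the paper's overlapping split on which conjunct of the second disjunct fails, and covers the same topologies.
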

\begin{proof}
 As in the proof of Lemma~\ref{lemma:TopoEvidence}, we show the statement by proving its contrapositive, i.e., we show that if both disjuncts are false, then so is the property. 
 We distinguish the two cases where the disjuncts are false.
 In Case (i), $\insecChan{\observable} \,\not\smallertopo \channel(\human,\server)$ and $ \trustedreply\not\smallertopo \trust(\server)$ and in Case (ii), 
 $\insecChan{\observable} \,\not\smallertopo \channel(\human,\server)$ and
 $\insecChan{\reliable} \,\not\smallertopo \channel(\server,\human)$.
Then, for both cases, we take the topology $\topo$ with the strongest assumptions and where all the conditions of this case hold and show that there cannot exist any protocol $\prot$ such that $\fullTime(\prot,\topo)$. 
As argued in the proof of Lemma~\ref{lemma:TopoEvidence}, this implies an impossibility for all topologies satisfying the conditions (by Lemma~\ref{lemma:possAndImposs}).

\textit{Case (i):} 
The maximal topology $\topo_{I4}$ that satisfies these conditions is depicted in Figure~\ref{fig:evidenceImpossibility4}.
As in the proof of Lemma~\ref{lemma:TopoEvidence}, we assume a protocol $\prot$ for which $\fullTime(\prot,\topo_{I4})$ and construct two traces $\tr_1\in\TR(\prot, \advModel{\topo_{I4}}{\mShH})$ and $\tr_2\in\TR(\prot, \advModel{\topo_{I4}}{\hSmH})$ such that $\trP=\pubTrace(\tr_1)=\pubTrace(\tr_2)$ and that yield a contradiction as they respectively require that 
$\trP \in \verdict(\server,\ballot)$ and $\trP \notin \verdict(\server,\ballot)$.
In particular, let $\tr_1$ be as $\tr_1$ in Lemma~\ref{lemma:TopoEvidence}'s proof where $\server$ ignores $H$'s ballot, except that here the fact that $\server$ ignores $H$'s ballot $b$ means that it never sends any message dependent on the receipt of $b$ on the channel $(\server,H)$ (rather than the channel $(\server,P)$ in Lemma~\ref{lemma:TopoEvidence}'s proof).
Also, let $\tr_2$ be as $\tr_2$ in Lemma~\ref{lemma:TopoEvidence}'s proof, where $H$ never casts a ballot.
As argued in Lemma~\ref{lemma:TopoEvidence}'s proof, this results in $\pubTrace(\tr_1)=\pubTrace(\tr_2)$ and yields a contradiction.

\textit{Case (ii):}
The maximal topology $\topo_{I5}$ that satisfies these conditions is depicted in Figure~\ref{fig:evidenceImpossibility5}.
Again we assume a protocol $\prot$ for which $\fullTime(\prot,\topo_{I5})$ and construct two traces $\tr_3\in\TR(\prot, \advModel{\topo_{I5}}{\mShH})$ and $\tr_4\in\TR(\prot, \advModel{\topo_{I5}}{\hSmH})$ such that $\trP=\pubTrace(\tr_3)=\pubTrace(\tr_4)$ and that yield a contradiction as they respectively require that 
$\trP \in \verdict(\server,\ballot)$ and $\trP \notin \verdict(\server,\ballot)$.
This holds for $\tr_4$ that is as $\tr_2$ from Case (i), and for $\tr_3$ that is as $\tr_1$ from Case (i), 
except that in $\tr_3$, instead of ignoring $H$'s ballot $\ballot$, the partially trusted $\server$ answers with any response required by the protocol but does not further consider the ballot $\ballot$, e.g., when writing all recorded ballots on the bulletin board. 
The adversary then drops on the channel $(\server,\human)$ any such responses that are sent by $\server$ only when the ballot $\ballot$ from $H$ has been received. This results in $\pubTrace(\tr_3)=\pubTrace(\tr_1)$ and thus the same conclusions as in Case~(i) and in the proof of Lemma~\ref{lemma:TopoEvidence} apply.
\end{proof}

\paragraph{Sufficient conditions}
We next show that the above conditions are also sufficient by showing that the seven topologies $\topo_1,\dots,\topo_7$ in Figure~\ref{fig:possResults} satisfy these conditions and by establishing a possibility result for each of them. 
All topologies
have a reliable path from $\human$ to $\server$, as this is required by Lemmas~\ref{lemma:TopoReliableChannel} and~\ref{lemma:TopoReliableChannelUnified}, 
and additional trust assumptions, required by Lemmas~\ref{lemma:TopoEvidence} and~\ref{lemma:TopoEvidenceUnified}.

The next lemma states that for all these topologies, there exists a protocol that satisfies $\fullTime$ and that these topologies are minimal.
That is, there are no topologies that have weaker assumptions than $\topo_1,\dots,\topo_7$ but where it is nevertheless possible to achieve $\fullTime$ with some protocol.
We establish the lemma's first part by presenting for each topology $\topo_{i}$, $i\in \{1,\dots,7\}$ a protocol $\prot_i$ and proving $\fullTime(\prot_i,\topo_i)$ with the Tamarin tool~\cite{tamarin}.
All relevant Tamarin files can be found in~\cite{tamarinfiles}.
\begin{lemma}\label{lemma:possibility}\label{lemma:minimalTopo}
Let the $\topo_i$, for $i\in\{1,\dots,7\}$, be the topologies depicted in Figure~\ref{fig:possResults} and $T$ be a topology in our voting protocol class where $\topo \smallertopoNotEqual \topo_i$ for some $i$.
\begin{alignat*}{3}
\exists \prot \sd &\fullTime(\prot,\topo_i)
\wedge \neg \exists \prot' \sd \fullTime(\prot',\topo).
\end{alignat*}
\end{lemma}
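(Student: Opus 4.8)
The plan is to split the statement into its two halves and handle them separately: a \emph{possibility} half --- for each $i\in\{1,\dots,7\}$ there is a protocol $\prot$ with $\fullTime(\prot,\topo_i)$ --- which I would prove constructively (and which is machine-checked in~\cite{tamarinfiles}), and a \emph{minimality} half --- $\topo\smallertopoNotEqual\topo_i$ implies $\neg\exists\prot'\sd\fullTime(\prot',\topo)$ --- which follows from the necessary-condition Lemmas~\ref{lemma:TopoReliableChannel}--\ref{lemma:TopoEvidenceUnified} plus a finite case analysis over the attributes of each $\topo_i$.

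For the possibility half I would exhibit, for each $\topo_i$, a deliberately minimal protocol $\prot_i$ of a common shape: the voter $\human$ computes and casts a ballot $\ballot$; $\ballot$ is relayed along the reliable $\human$-to-$\server$ path (directly in $\topo_6,\topo_7$, via $\platform$ otherwise) to $\server$; and $\server$ publishes $\ballot$ immediately, so that $\BBrecorded(\bs)$ with $\ballot\in\bs$ is recorded before $\End$. The verdict $\verdict(\server,\ballot)$ is instantiated to fire exactly when the evidence available in that topology witnesses that $\ballot$ reached $\server$ while $\ballot\notin\bs$ for the published $\bs$; the evidence source differs per topology: for $\topo_1,\topo_2,\topo_6$ it is the $\pubChan$ signal produced by the undeniable channel on the forward path; for $\topo_3$ it is an $\Evidence(\ballot,\cdot)$ signal emitted by the trusted $\platform$; for $\topo_4,\topo_7$ it is a confirmation consisting of $\ballot$ signed by $\server$, returned by the $\server$ of type $\trustedreply$, carried to $\human$ along the reliable return path and stored via $\Evidence$; and for $\topo_5$ that confirmation is sent on the undeniable return channel $\server\to\platform$, so the resulting $\pubChan$ signal is itself public evidence. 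I would then verify the three conditions of $\fullTime$. \emph{Timeliness:} in each of the three adversary models with $\human$ honest, reliability of the forward path together with the assumption that honest and partially trusted agents never stall when a step is enabled forces either $\ballot$ to be on $\BB$ by $\End$ or the witnessing evidence signal to be present by $\End$. \emph{$\proauth$:} an honest $\server$ records every ballot it receives and never produces (nor signs anything that could count as) evidence for a ballot it did not process, so $\verdict$ never fires --- the crucial point is that the adversary cannot fabricate this evidence, since it is either a $\pubChan$ signal tied to a genuine transmission or $\server$'s own signature. \emph{$\functional$:} the all-honest-network run realises the required trace. On paper I would write out one representative case, e.g.\ $\topo_4$, in full, and remark that the other six are analogous.

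For the minimality half I would first note that each $\topo_i$ satisfies the full conjunction of necessary conditions: Lemmas~\ref{lemma:TopoReliableChannel} and~\ref{lemma:TopoEvidence} for $\topo_1,\dots,\topo_5$ (reliable forward path plus one of the five evidence disjuncts) and Lemmas~\ref{lemma:TopoReliableChannelUnified} and~\ref{lemma:TopoEvidenceUnified} for $\topo_6,\topo_7$; this is immediate from Figure~\ref{fig:possResults}. I would then show each $\topo_i$ is \emph{minimal} among topologies meeting these conditions by a finite check over its few attributes: a strict weakening of $\topo_i$ either deletes a role or channel that $\functional$ needs, or weakens the $\human$-$\server$ path below reliable, or weakens $\platform$ below $\trustedforward$ (violating Lemma~\ref{lemma:TopoReliableChannel} or~\ref{lemma:TopoReliableChannelUnified}), or weakens the single ``evidence'' attribute $\topo_i$ carries --- an undeniable channel down to reliable or default, a trusted $\platform$ down to $\trustedforward$, a $\server$ of type $\trustedreply$ down to untrusted, or a reliable return channel down to default --- and in each case every disjunct of the applicable evidence lemma now fails. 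A sub-topology of $\topo_i$ cannot recover by satisfying a \emph{different} disjunct, because the remaining disjuncts each either require an attribute strictly stronger than one that was weakened or mention channels/roles absent from $\topo_i$, neither of which a subgraph with weaker labels can supply. (The split between the $G_S$- and $G_U$-shaped setups is harmless: $\topo\smallertopo\topo_i$ forces $G(\topo)\subseteq_G G_S$ when $i\le 5$ and $G(\topo)\subseteq_G G_U$ when $i\in\{6,7\}$, so the correct pair of lemmas applies.) Hence $\topo\smallertopoNotEqual\topo_i$ implies $\topo$ violates a necessary condition, so $\neg\exists\prot'\sd\fullTime(\prot',\topo)$.

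The main obstacle I expect is the $\proauth$ direction of the possibility proofs: $\verdict$ must fire on \emph{every} trace required by $\timeliness(\server)$ yet on \emph{no} trace in which $\server$ is honest, even though the adversary may record $\Evidence(\ballot,\evidence)$ for any term $\evidence$ it knows. Reconciling these forces the evidence to be genuinely unforgeable against an honest $\server$ --- an $\server$-signed confirmation, or a $\pubChan$ signal reflecting an actual send --- while still guaranteeing that the honest voter obtains it by $\End$; pinning down $\verdict$ so that both directions close simultaneously in all seven topologies is the delicate step, which is exactly why the paper discharges it with Tamarin. For the minimality half the corresponding worry is completeness of the case analysis: ensuring no weakened topology slips through by meeting the necessary conditions via an unforeseen combination of disjuncts.
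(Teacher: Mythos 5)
Your proposal is correct and follows essentially the same route as the paper: the possibility half is discharged by exhibiting, for each $\topo_i$, a minimal one-voter protocol whose $\verdict$ is keyed to exactly the evidence source that topology affords ($\pubChan$ on the forward undeniable channel for $\topo_1,\topo_2,\topo_6$; a confirmation signed by the trusted $\platform$ for $\topo_3$; an $\server$-signed confirmation stored as $\Evidence$ for $\topo_4,\topo_7$; the same confirmation witnessed by $\pubChan$ on the undeniable return channel for $\topo_5$), with the three $\fullTime$ conjuncts machine-checked in Tamarin, and the minimality half is the same finite case analysis showing every minimal weakening of each $\topo_i$ violates Lemma~\ref{lemma:TopoReliableChannel}, \ref{lemma:TopoEvidence}, \ref{lemma:TopoReliableChannelUnified}, or~\ref{lemma:TopoEvidenceUnified} (propagated to arbitrary weaker topologies via Lemma~\ref{lemma:possAndImposs}). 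No gaps.
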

\begin{proof}

\begin{figure}
\centering
 \scalebox{0.8}{\includegraphics{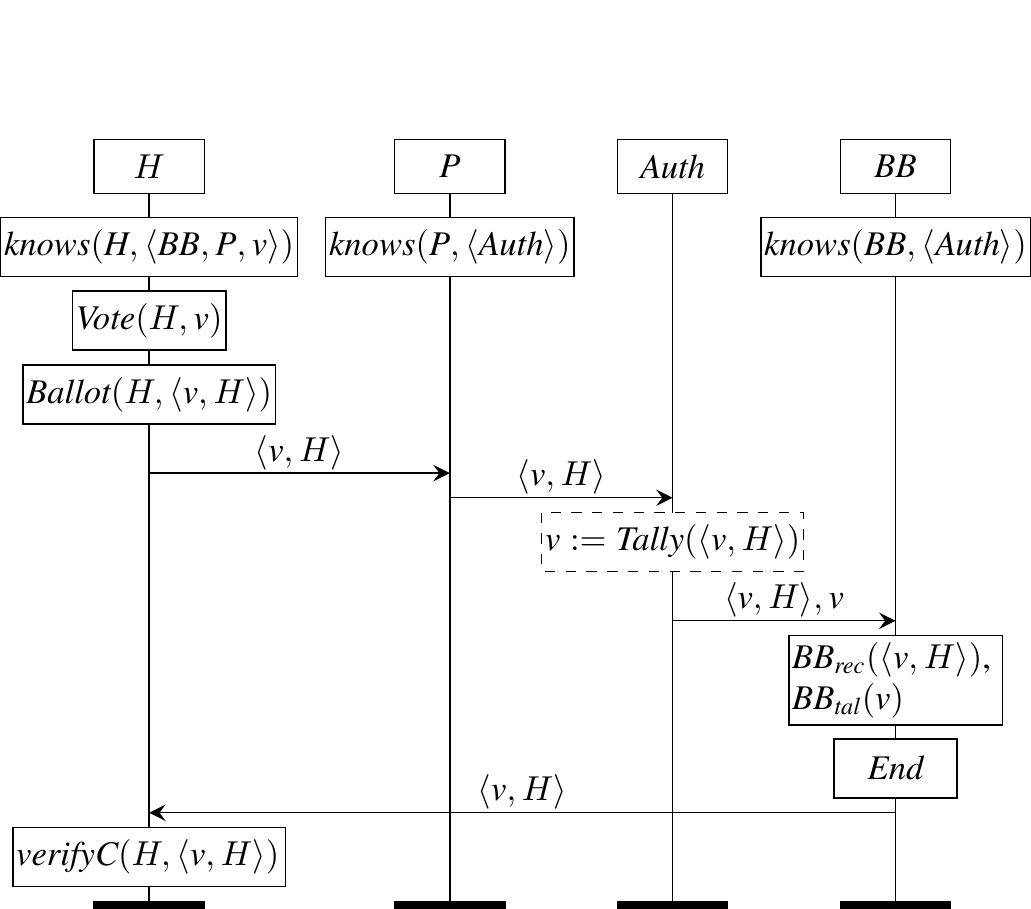} }
\scalebox{0.84}{\parbox[c]{\linewidth}{
\begin{alignat*}{1}
 &\verdict(\server,\ballot):= 
 \{\tr|\exists \platform, \bs, H, v\sd \pubChan(\platform,\server,\ballot) \in \tr
 \\ \multicolumn{2}{r}{$
 \wedge \BBrecorded(\bs) \in \tr
 \wedge \ballot \notin \bs
 \wedge \ballot = \langle v,H \rangle \}.
 $}
\end{alignat*}}}
 \caption{The protocol $\prot_1$.}\label{fig:possProtocol1}
\end{figure}

\begin{figure}
 \centering
\scalebox{0.8}{\includegraphics{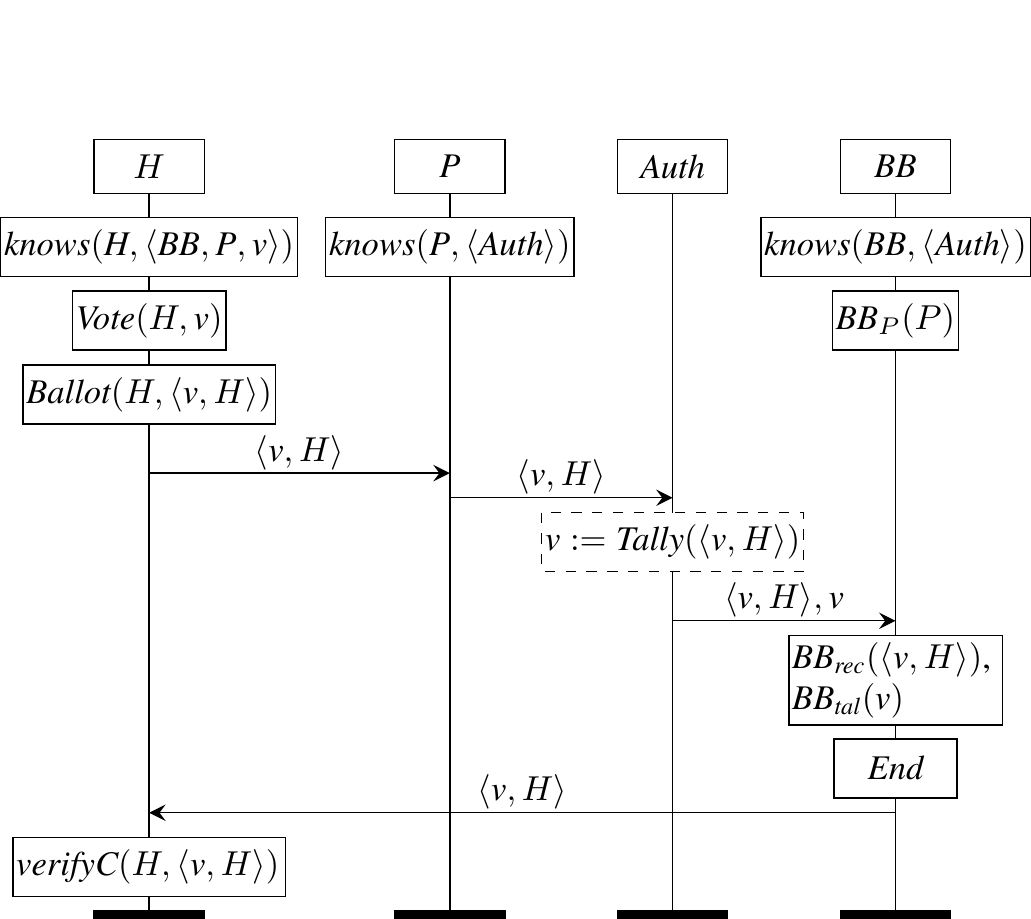}}
\scalebox{0.85}{\parbox[c]{\linewidth}{
\begin{alignat*}{1}
 \verdict(\server,\ballot):=
 \{\tr|
\exists \human,\platform, \bs, \vote \sd
\pubChan(\human,\platform,\ballot) \in \tr
\wedge \BB_\platform(\platform) \in \tr 
\\\multicolumn{2}{r}{$
\wedge \BBrecorded(\bs) \in \tr 
\wedge \ballot \notin \bs 
 \wedge \ballot = \langle v,H \rangle \}.$}
\end{alignat*}}}
\caption{The protocol $\prot_2$.}\label{fig:possProtocol2}
\end{figure}

\begin{figure}
 \centering
\scalebox{0.8}{\includegraphics{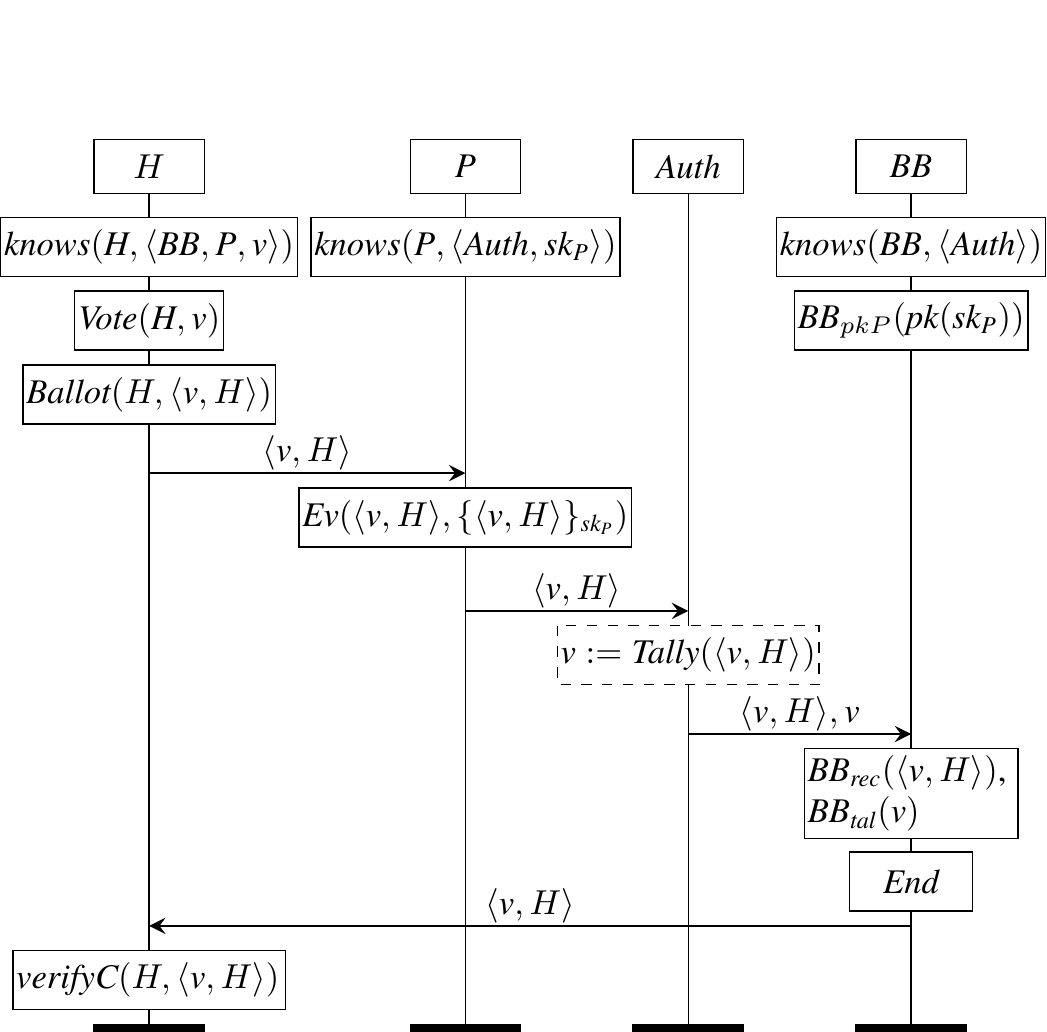}}
 \scalebox{0.85}{ \parbox[c]{\linewidth}{
\begin{alignat*}{1}
\textit{For } \ballot = \perp: &\; \verdict(\server, \ballot):=\{ \}.
 \\
\textit{For }\ballot \neq \perp: &\;
\verdict(\server,\ballot):= 
\{\tr|
\exists \confirmation,\bs, \pkP \sd 
\BBpkP(\pkP) \in \tr 
\\\multicolumn{2}{r}{$ 
\wedge \Evidence(\ballot,\confirmation) \in \tr 
\wedge \ver(\confirmation, \pkP)=\ballot 
\wedge \BBrecorded(\bs) \in \tr 
\wedge \ballot \notin \bs\}.
$}
\end{alignat*}}}
\caption{The protocol $\prot_3$.}\label{fig:possProtocol3}

\end{figure}
\begin{figure}
 \centering
\scalebox{0.8}{\includegraphics{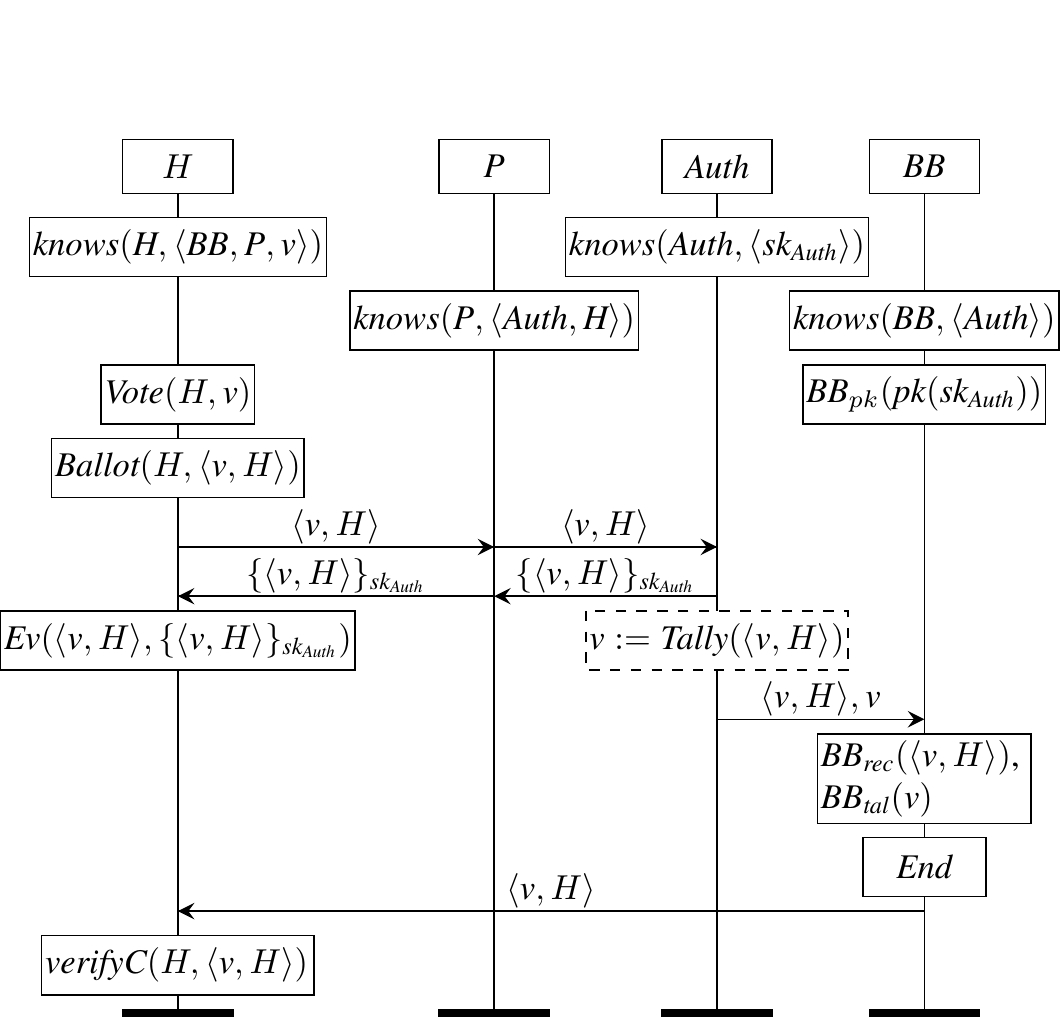}}
\scalebox{0.85}{ \parbox[c]{\linewidth}{
\begin{alignat*}{1}
\textit{For } \ballot = \perp: &\; \verdict(\server, \ballot):=\{ \}.
 \\
\textit{For }\ballot \neq \perp: &\;
 \verdict(\server,\ballot):= 
 \{\tr|
\exists \confirmation, \pkS, \bs \sd
\BBpkS(\pkS) \in \tr 
\\\multicolumn{2}{r}{$
\wedge \Evidence(\ballot,\confirmation) \in \tr 
\wedge \ver(\confirmation, \pkS)=\ballot
\wedge \BBrecorded(\bs) \in \tr \wedge \ballot \notin \bs\}.
$}
\end{alignat*}}}
 \caption{The protocol $\prot_4$.}\label{fig:possProtocol4}
\end{figure}

\begin{figure}
 \centering
\scalebox{0.8}{\includegraphics{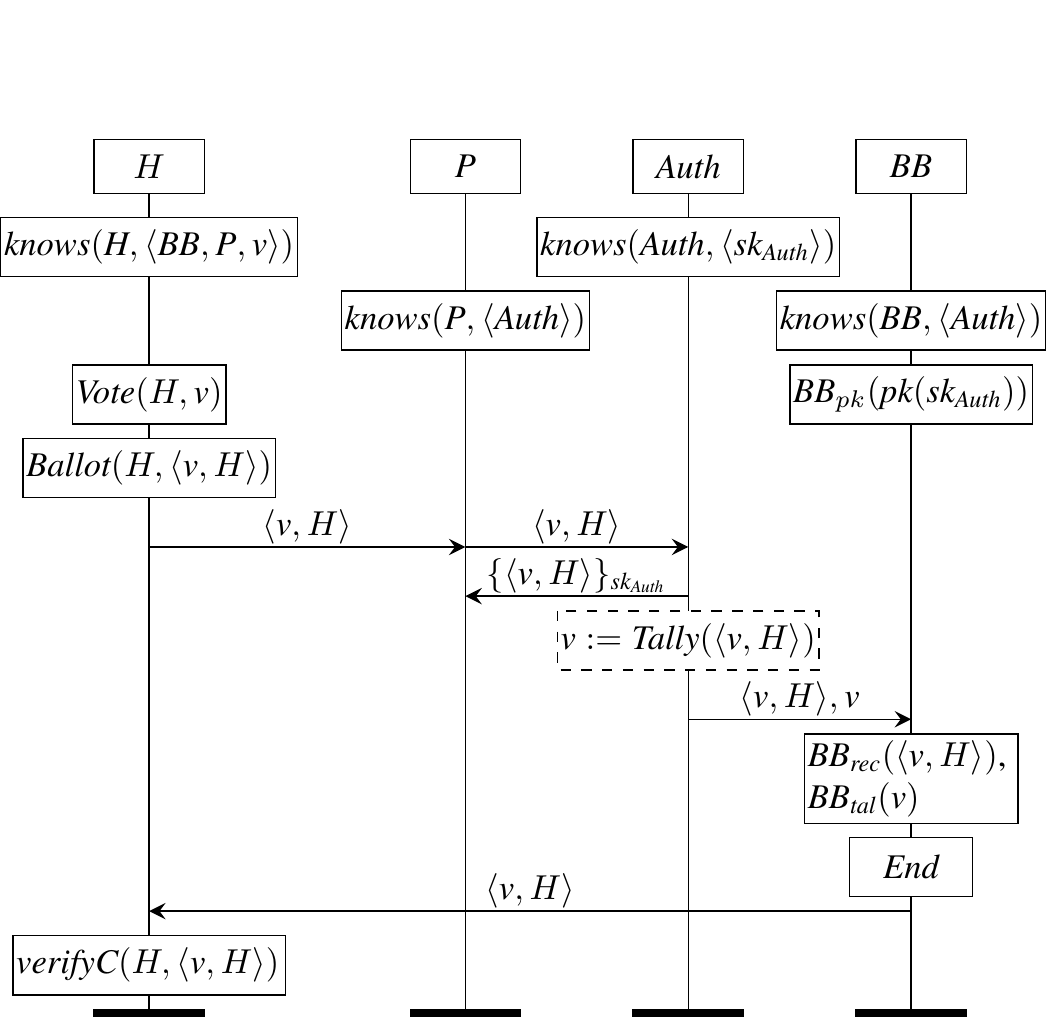}}
\scalebox{0.85}{ \parbox[c]{\linewidth}{
\begin{alignat*}{1}
\textit{For } \ballot = \perp: &\; \verdict(\server, \ballot):=\{ \}.
 \\
\textit{For }\ballot \neq \perp: &\;
 \verdict(\server,\ballot):= 
 \{\tr|
\exists \confirmation, \platform, \pkS, \bs \sd
\BBpkS(\pkS) \in \tr
\\\multicolumn{2}{r}{$
\wedge \pubChan(\server,\platform,\confirmation) \in \tr
\wedge \ver(\confirmation, \pkS)=\ballot
\wedge \BBrecorded(\bs) \in \tr \wedge \ballot \notin \bs\}.
$}
\end{alignat*}}}
\caption{The protocol $\prot_5$.}\label{fig:possProtocol5}
\end{figure}

\begin{figure}
 \centering
\scalebox{0.8}{\includegraphics{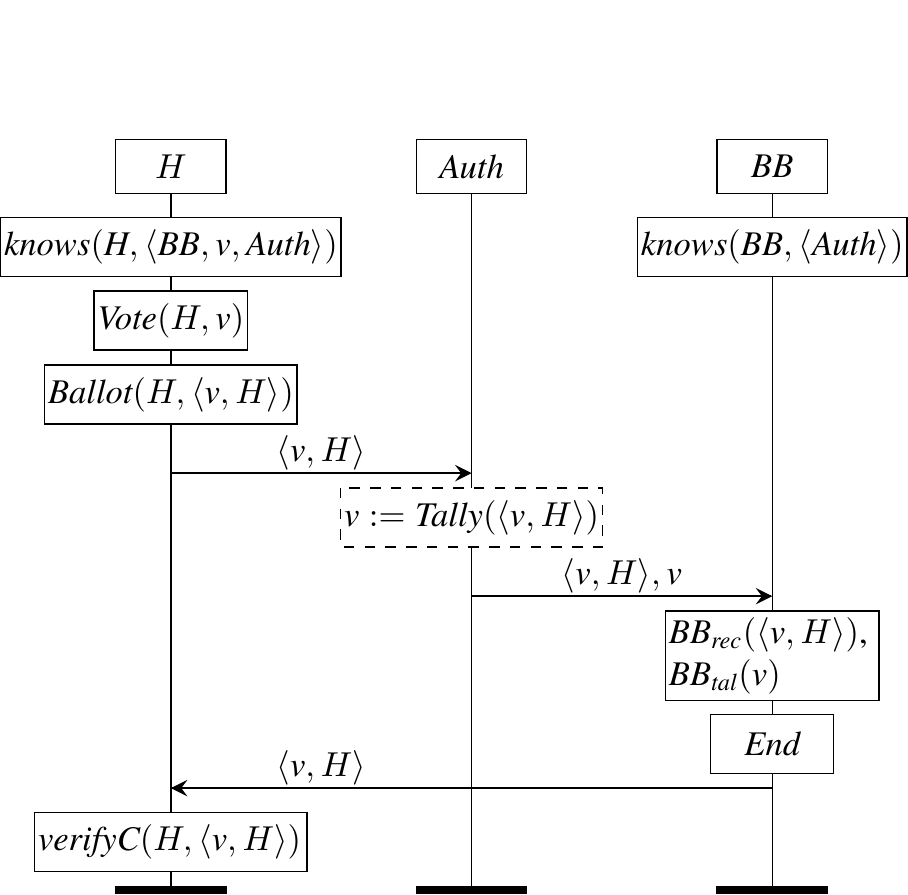}}
\scalebox{0.85}{ \parbox[c]{\linewidth}{
\begin{alignat*}{1}
 \verdict(\server,\ballot):= 
 \{\tr|\exists H, \bs,\vote \sd \pubChan(H,\server,\ballot) \in \tr \\ \multicolumn{2}{r}{$
 \wedge \BBrecorded(\bs) \in \tr 
 \wedge \ballot \notin \bs
 \wedge \ballot = \langle v,H\rangle \}.
 $}
\end{alignat*}}}
\caption{The protocol $\prot_6$.}\label{fig:possProtocol6}
\end{figure}

\begin{figure}
 \centering
\scalebox{0.8}{\includegraphics{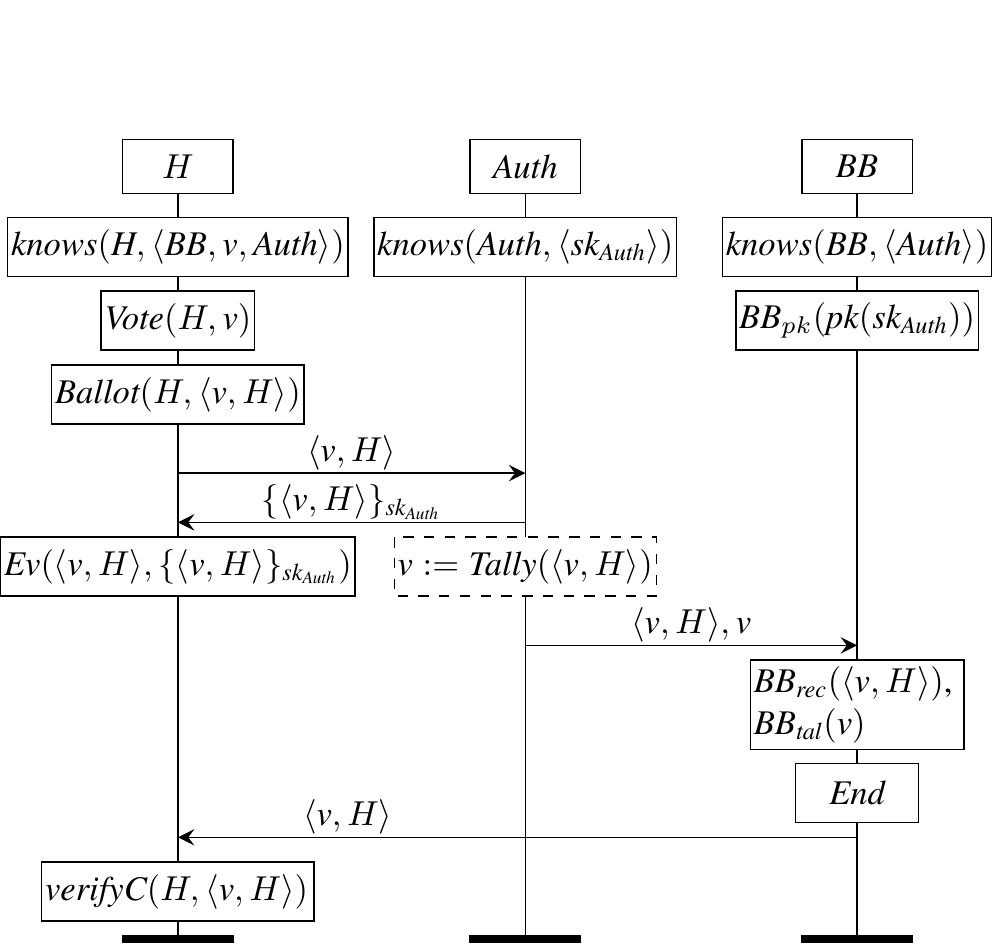}}
\scalebox{0.85}{ \parbox[c]{\linewidth}{
\begin{alignat*}{1}
\textit{For } \ballot = \perp: &\; \verdict(\server, \ballot):=\{ \}.
 \\
\textit{For }\ballot \neq \perp: &\;
 \verdict(\server,\ballot):= 
 \{\tr|
\exists \confirmation, \pkS, \bs \sd
\BBpkS(\pkS) \in \tr 
\\\multicolumn{2}{r}{$
\wedge \Evidence(\ballot,\confirmation) \in \tr 
\wedge \ver(\confirmation, \pkS)=\ballot
\wedge \BBrecorded(\bs) \in \tr \wedge \ballot \notin \bs\}
$}
\end{alignat*}}}
\caption{The protocol $\prot_7$.}\label{fig:possProtocol7}
\end{figure}

Recall the topologies $\topo_1,\dots,\topo_7$ in Figure~\ref{fig:possResults}. 
We first present for each topology $\topo_i, i \in \{1, \dots, 7\}$ a protocol $\prot_i$, as depicted in Figures~\ref{fig:possProtocol1}--\ref{fig:possProtocol7}. 
\label{msc}
We present the protocols using message sequence charts, as explained in Section~\ref{subsubsec:mixnet:protocol}.
As a simple protocol with one voter is sufficient to demonstrate possibility, all protocols $\prot_i$ specify that only one election is run at a time with one voter $H$ and one platform $P$.
In all protocols, the ballot is a pair consisting of the voter's vote $\vote$ and his identity $\human$.
We extend the term algebra from Section~\ref{sec:protocolModel} by defining for the tally function the equation $\Tally(\langle \vote, \human \rangle)=\vote$.
Moreover, for each protocol, the definition of $\verdict$ is given below the protocol's message sequence chart.

For each topology $\topo_i$ and protocol $\prot_i$, we prove that $\fullTime(\prot_i,\topo_i)$ holds.
To do this, we model in three separate Tamarin theories the traces $\TR(\prot_i,\topo_i^{\hShH})$, $\TR(\prot_i,\topo_i^{\mShH})$, and $\TR(\prot_i,\topo_i^{\hSmH})$.
We then automatically prove that 
$\TR(\prot_i, \advModel{\topo_i}{\mShH}) \subseteq \, \protime(\server)$, 
$\TR(\prot_i, \advModel{\topo_i}{\hSmH}) \subseteq\, \proauth(\server)$,
and $\TR(\prot_i, \advModel{\topo_i}{\hShH}) \,\subseteq\,\protime(\server) \,\cap\,\proauth(\server)$.
All relevant Tamarin files are in~\cite{tamarinfiles}.
Furthermore, it is obvious that all protocols satisfy
$\TR(\prot_i, \advModel{\topo_i}{\hShH})\, \cap\, \functional \,\neq \, \emptyset$.
\\\\
Next, we examine for each topology $\topo_i$ in Figure~\ref{fig:possResults} all minimal possibilities of making the topology weaker, i.e., to generate a topology 
 $\topo'$ such that $\topo' \smallertopoNotEqual \topo_i$ and 
 $\neg \exists \topo_M \sd \topo' \smallertopoNotEqual \topo_M \smallertopoNotEqual \topo_i$.
 We then argue that any such topology $\topo'$ cannot satisfy $\fullTime$ for any protocol, by Lemmas~\ref{lemma:TopoReliableChannel}--\ref{lemma:TopoEvidenceUnified}.
 It follows by Lemma~\ref{lemma:possAndImposs} that all weaker topologies $\topo'' \smallertopo \topo'$ also cannot satisfy the property (as this would imply that $\topo'$ satisfies the property, too).
 Note that we can weaken a topology in three ways: by weakening the channel assumptions, by weakening the trust assumptions, or by removing a channel or a role.
 
 First, recall that by assumption of our protocol class, the roles $\BB$ and $\auditor$ and their incoming and outgoing channels are fixed.
 Furthermore, for in the topologies $\topo_1,\dots,\topo_5$, we cannot remove the channels $(\human,\platform)$ or $(\platform,\server)$ or the roles $\human$, $\platform$, or $\server$, as otherwise the functional property cannot hold. 
 For the same reason, we cannot remove the channel $(\human,\server)$ or the roles $\human$ or $\server$ in the topologies $\topo_6$ and $\topo_7$.
 We further argue that we cannot weaken any other assumptions either.
 
 \textit{Topology $\topo_1$:}
 When generating $\topo'$ by making the channel $(\human,\platform)$ default or by making $\platform$ untrusted, $\topo'$ cannot satisfy $\fullTime$ by Lemma~\ref{lemma:TopoReliableChannel}.
 Also, by Lemma~\ref{lemma:TopoEvidence}, the same is true when we make the channel $(\platform,\server)$ reliable only.
 
 \textit{Topology $\topo_2$:} 
 When generating $\topo'$ by making the channel 
 $(\platform,\server)$ default or by making $\platform$ untrusted, 
 $\topo'$ cannot satisfy $\fullTime$ by Lemma~\ref{lemma:TopoReliableChannel}.
 Also, by Lemma~\ref{lemma:TopoEvidence}, the same is true when we make 
 the channel $(\human,\platform)$ reliable only.
 
 \textit{Topology $\topo_3$:} 
 When generating $\topo'$ by making any of
 the channels $(\human,\platform)$ or $(\platform,\server)$ default, 
 $\topo'$ cannot satisfy $\fullTime$ by Lemma~\ref{lemma:TopoReliableChannel}.
 Also, by Lemma~\ref{lemma:TopoEvidence}, the same is true when we make
 $\platform$ of type $\trustedforward$ only.
 
 \textit{Topology $\topo_4$:} 
 When generating $\topo'$ by making the
 channels $(\human,\platform)$ or $(\platform,\server)$ default or by making $\platform$ untrusted,
 $\topo'$ cannot satisfy $\fullTime$ by Lemma~\ref{lemma:TopoReliableChannel}.
 Also, by Lemma~\ref{lemma:TopoEvidence}, the same is true when we make $\server$ untrusted, or when we make either of the channels $(\server,\platform)$ or $(\platform, \human)$ default.
 
 \textit{Topology $\topo_5$:} 
 When generating $\topo'$ by making the
 channels $(\human,\platform)$ or $(\platform,\server)$ default or by making $\platform$ untrusted,
 $\topo'$ cannot satisfy $\fullTime$ by Lemma~\ref{lemma:TopoReliableChannel}.
 Also, by Lemma~\ref{lemma:TopoEvidence}, the same is true when we make $\server$ untrusted, or when we make the channel $(\server,\platform)$ reliable only.
 
 \textit{Topology $\topo_6$:} 
 When generating $\topo'$ by making the
 channel $(\human,\server)$ reliable,
 $\topo'$ cannot satisfy $\fullTime$ by Lemma~\ref{lemma:TopoEvidenceUnified}.
 
 \textit{Topology $\topo_7$:} 
 When generating $\topo'$ by making the
 channel $(\human,\server)$ default,
 $\topo'$ cannot satisfy $\fullTime$ by Lemma~\ref{lemma:TopoReliableChannelUnified}.
 Also, by Lemma~\ref{lemma:TopoEvidenceUnified}, the same is true when we make $\server$ untrusted, or when we make the channel $(\server,\human)$ default.

 In all cases, there are no further options to weaken the given topology in a minimal way. 
\end{proof}

Finally, we use the above results to prove Theorem~\ref{theorem:completeChar}.

\begin{proof}[Proof of Theorem~\ref{theorem:completeChar}]
 The theorem's statement is a direct consequence of 
 Lemmas~\ref{lemma:possAndImposs} and~\ref{lemma:minimalTopo} and because $\topo_1,\dots,\topo_7$ are the only minimal topologies that satisfy all requirements necessary by Lemmas~\ref{lemma:TopoReliableChannel}--\ref{lemma:TopoEvidenceUnified}.
 The latter is established by an exhaustive case distinction on the finitely many possible topologies:
 each possible way of exchanging one assumption in a topology $\topo_1,\dots,\topo_7$ by another one, results either in a topology $\topo'$ that does not satisfy the conditions required by Lemmas~\ref{lemma:TopoReliableChannel}--\ref{lemma:TopoEvidenceUnified} or in a topology $\topo'$ that makes stronger assumptions than one of the seven topologies and is thus not minimal, i.e., $\exists \topo_j \in \{\topo_1,\dots,\topo_7\} \sd \topo_j \smallertopoNotEqual \topo'$.
\end{proof}

\subsection{Proof of Theorem~\ref{theorem:uniquenessImpliesDR}}
\label{appendix:securityproperties1}
We next prove Theorem~\ref{theorem:uniquenessImpliesDR} from Section~\ref{sec:uniqueness}, which states that $\uniqueness(\server)$ can be used to infer $\provoterAbstain(\server)$ in disputes~\stwo in protocols that satisfy certain conditions.

\begin{proof}[Proof of Theorem~\ref{theorem:uniquenessImpliesDR}]
Let $\prot$ be a protocol as assumed by the theorem.
Let $\tr$ be a trace in $\TR(\prot,\topo)$ such that $\tr \in \uniqueness(\server)$ and such that there is a voter $H$ for which $\verifyNoVote(H,\emptyset)\in \tr$.
All traces where no such signal $\verifyNoVote$ is recorded trivially satisfy $\provoterAbstain(\server)$ and as $\prot$ is a protocol without re-voting, $\verifyNoVote$ is only recorded for honest voters who abstain (that is, there are no signals $\verifyNoVote(H,x)$ for $x\neq \emptyset$).

We make a case distinction.
First, assume $$\neg (\exists \bs,\ballot \sd \BBrecorded(\bs) \in \tr \wedge \ballot \in \bs \wedge \CastBy(\ballot)=H).$$
Then, $\provoterAbstain(\server)$ holds by Definition~\ref{def:provoterAbstain}.

Second, assume $\exists \bs,\ballot \sd \BBrecorded(\bs) \in \tr \wedge \ballot \in \bs \wedge \CastBy(\ballot)=H$.
Thus, we consider a trace where
\begin{alignat*}{1}
&\tr \in \uniqueness(\server)
 \wedge \verifyNoVote(H,\emptyset)\in \tr
\\\multicolumn{2}{r}{$ 
 \wedge \BBrecorded(\bs) \in \tr \wedge \ballot \in \bs 
 \wedge \CastBy(\ballot)=H
$}
\\
\xRightarrow[]{(1)}&
\tr \in \uniqueness(\server)
\wedge \BBrecorded(\bs)\in \tr
 \\\multicolumn{2}{r}{$
\wedge \ballot \in \bs 
\wedge \CastBy(\ballot)=H
\wedge \neg \exists m,A \sd \send(H,A,m)\in \tr 
$}
\\\xRightarrow[]{(2)}&
\neg \exists \ballot'. {\ballot'\neq \perp \wedge} \tr \notin \verdict(\server,\ballot') 
\\\xRightarrow[]{(3)} &
\exists \ballot'.\tr \in \verdict(\server,\ballot').
\end{alignat*}
\noindent
In (1), we use that $\verifyNoVote$ is only recorded for an honest voter who abstains and thus, by assumption, does not send any message.
In (2) we use that $\uniqueness(\server)$ holds and therefore there cannot be any $\ballot'\neq \perp$ such that $\tr \notin \verdict(\server,\ballot')$ as this would require that $\send(H,A',m')$ is in the trace for some $A'$ and $m'$ such that $ m'\subterm \ballot$, which is false. 
Thus, for all ballots $\ballot'\neq \perp$ the trace $\tr$ must be in $\verdict(\server,\ballot')$ and Step (3) follows (for an arbitrary choice of a ballot $\ballot'\neq \perp$).
Thus, $\tr$ satisfies $\provoterAbstain(\server)$ by Definition~\ref{def:provoterAbstain}.
\end{proof}

\subsection{Analyzing a mixnet-based voting protocol with dispute resolution}
\label{appendix:AnalyzingAmixnetbasedVotingProtocolWithDisputeResolution}
We present the details of the protocol \protocolnameInst and show that it satisfies dispute resolution as well as other standard voting properties. 
To formalize the protocol, we first extend the protocol model from Section~\ref{sec:protocolModel} with new functions and equations.
Then, we present the communication topology and the detailed protocol specification.
Finally, we present the (instantiated) dispute resolution properties, introduce new properties, and analyze the protocol.
Our model and the definitions of the properties other than dispute resolution are closely related to~\cite{alethea}, where a voting protocol has been analyzed in a formalism supported by the Tamarin tool.

\subsubsection{New functions and equations}
\label{subsubsec:newFunctionsAndEquations}
Let $[l_1]$ and $[l_2]$ be two lists.
We denote by $[l_1]\subseteq_l [l_2]$ that the elements of $[l_1]$ are a sub-multiset of the elements of $[l_2]$.
Next, we introduce a generalization of the signature verification function $\ver$ that is applied to a list of signed messages and a list of verification keys, where the lists are of the same length. 
The function is denoted by $\verlist([s],[k])$ and if each signed message $[s]_i$ is verified with the verification key $[k]_i$, then the function returns the list of messages that were included in $[s]$ but with the signatures removed.
Otherwise, the function returns the default value $\perp$. 
 \scalebox{0.98}{ \parbox[]{\linewidth}{
 \begin{equation*}
 \verlist([s],[k]) \; := \;
 \begin{cases}
 [m], & \forall i \sd \ver([s]_i,[k]_i) = [m]_i \wedge [m]_i \neq \perp \\
 \perp, &\text{otherwise} .
 \end{cases}\label{eq:AL:verlist}
 \end{equation*} }}

\protocolnameInst uses non-interactive zero knowledge proofs to shuffle, i.e., randomly permute and re-encrypt, and decrypt ballots in a publicly verifiable manner.
First, we introduce a probabilistic asymmetric encryption scheme.
The encryption of the message $m$ under the public key $\pk$ and using the randomness $r$ is denoted by $\encp{m}{\pk}{r}$ and the decryption of the ciphertext $c$ with the private key $k$ by $\deccp(c,k)$.
The functions obey the equation $\deccp(\encp{m}{pk(k)}{r},k)=m$.
Then, we introduce $\zkp([\textit{i}],[\textit{o}],k)$ to denote the non-interactive zero knowledge proof that the list $[\textit{i}]$ was correctly shuffled and its elements correctly decrypted to the elements in $[\textit{o}]$.
Thereby, $k$ denotes the private key needed to decrypt the messages in $[\textit{i}]$ and to generate a correct proof.
A zero knowledge proof can be verified using the function $\verifyzkp(\zkproof,[\textit{i'}],[\textit{o'}],pk')$, which takes as input a proof $\zkproof$, two lists $[\textit{i'}]$ and $[\textit{o'}]$, where the latter presumably contains the decryptions of the former up to permutation, and a public key $pk'$.
Such a verification is successful if the following three conditions hold.
\begin{enumerate}
 \item $\zkproof$ is a non-interactive zero knowledge proof that was constructed with respect to permutations of the two lists that were input to the verification function ($[\textit{i'}]$ and $[\textit{o'}]$ above).
 \item The elements of the first list $[\textit{i'}]$ correspond to encryptions of the elements of the second list $[\textit{o'}]$, but can be permuted.
 \item The proof $\zkproof$ was constructed with the private key $k$ corresponding to the public key $pk'=\pk(k)$ used in the verification and the elements in the list $[\textit{i'}]$ are encrypted with the same key $pk'$.
\end{enumerate}
The last condition means that only an agent who possesses the private key to decrypt the messages in $[\textit{i'}]$ can construct a valid proof.
The following equation models these conditions, where $\pi_1$, $\pi_2$, and $\pi_3$ denote arbitrary permutations and $\pi[x]$ denotes that permutation $\pi$ is a applied to the list $[x]$.
\begin{alignat*}{1}\label{eq:AL:verifyzk}
 & \verifyzkp(\zkp(\pi_1[\encp{m}{\pk(k)}{r}],\pi_2 [m],k),
 \pi_3 [\encp{m}{\pk(k)}{r}],[m],
 \\\multicolumn{2}{r}{$ \pk(k))
 =\true.$}
\end{alignat*} 
These functions could, for example, be realized by the scheme of \emph{Groth}~\cite{grothShuffle}, who proposes a zero knowledge proof for the combined shuffle-and-decrypt operation. 
In~\cite{grothShuffle}, several decryption servers each possess a share of the private key corresponding to the public key used for the encryptions. Each server, in turn, shuffles the ballots and decrypts them with respect to its key share.
As we only consider one agent~$\server$, we can use \emph{Groth}'s scheme where $\server$ performs all servers' computations.

In addition to the new functions, we extend our protocol model with new signals, which we explain next while presenting the protocol.

\subsubsection{Protocol and communication topology}
\begin{figure}
 \begin{center}
 \scalebox{0.7}{
\begin{tikzpicture}[->,>=stealth',shorten >=1pt,auto,node distance=3cm and 1.5cm,semithick]

 \node[state] (H) {$\human$};
 \node[state, accepting] (D)[left=1cm of H] {$D$};
 \node[state, accepting,style=dashed](P)[right of=H] {$\platform$};
 \node[state, accepting,style=dashed] (S) [right of=P] {$\server$};
 
 \node[state, accepting
 ] (BB)[below=0.4cm of P] {$\BB$};
 \node[state, accepting
 ] (A) [below=0.4cm of H] {$\auditor$};

 \path 
 (H) edge[\edgeInsec,bend left]node[above left] {\reliable} (P) 
 (H) edge[\edgeSec,bend left]node[below] {\default} (D) 
 (D) edge[\edgeSec,bend left]node[above] {\default} (H)
 
 (P) edge[\edgeInsec,bend left]node[above right] {\reliable} (S) 
 (S) edge[\edgeInsec]node[below] {\reliable} (P)
 (P) edge[\edgeInsec]node[below] {\reliable} (H)
 
 (S) edge[\edgeAuth
 ] node[below] {\default} (BB)
 (BB)edge[\edgeAuth
 ] node[below] {\default} (H)
 edge[\edgeAuth
 ] node[below left] {\default} (A)
 ;
\end{tikzpicture}}
 \caption{The topology $\topoInst$ for the protocol \protocolnameInst.}\label{fig:topoAlethea}
 \end{center}
\end{figure}

\begin{figure*}
\begin{center}
 \scalebox{0.8}{\includegraphics{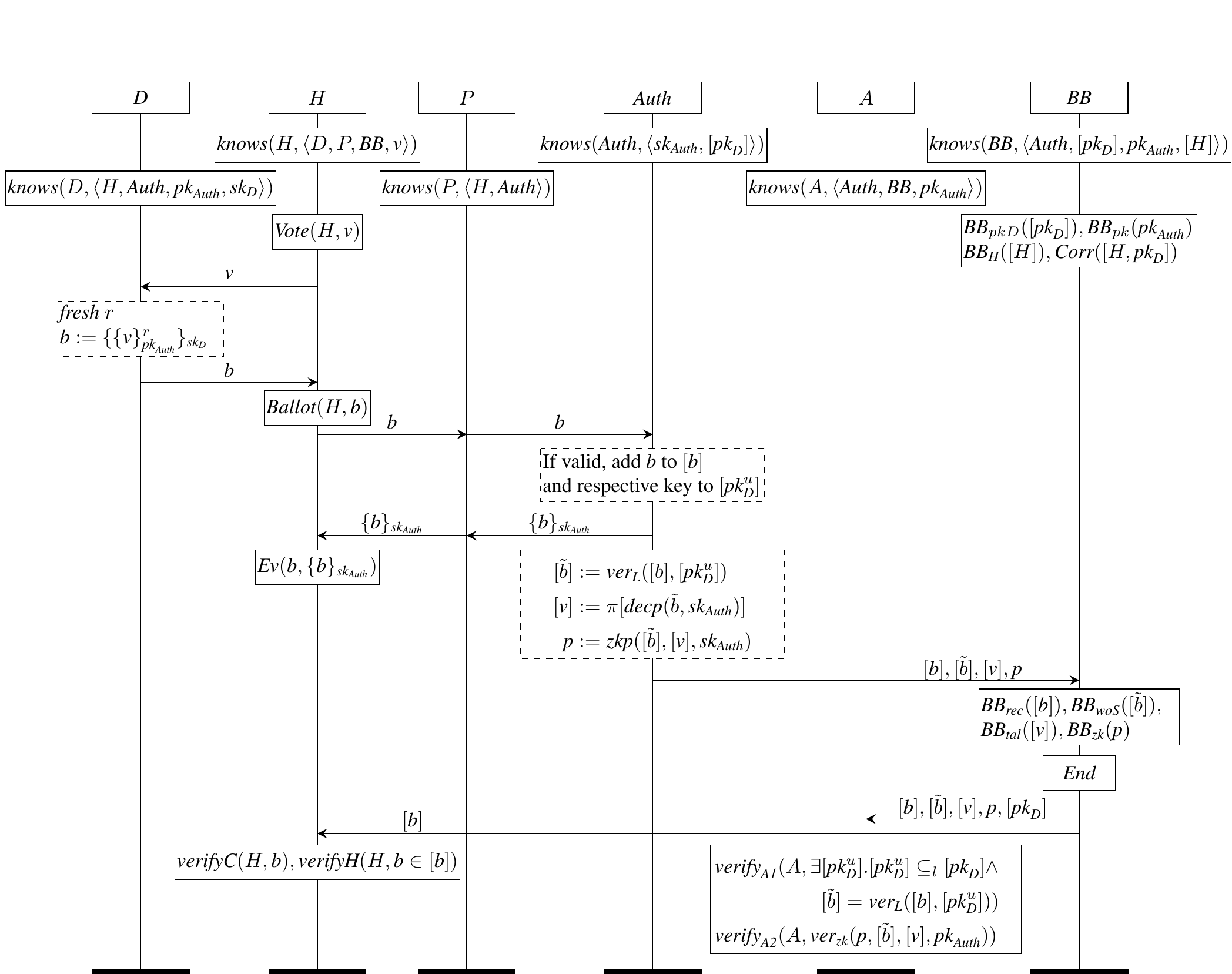}}
 \scalebox{0.9}{ \parbox[c]{\textwidth}{
 \begin{alignat*}{1}
 \textit{For } \ballot = \perp: &\; \verdict(\server, \ballot):=\{ \}.
 \\
\textit{For }\ballot \neq \perp: &\;
\verdict(\server, \ballot):=\{\tr |
(\exists [\ballot], \pkS, \confirmation \sd \BBpkS(\pkS) \in \tr \wedge \Evidence(\ballot,\confirmation) \in \tr 
\wedge \ver(\confirmation, \pkS) = \ballot
\wedge \BBrecorded([\ballot]) \in \tr
\wedge \ballot \notin [\ballot])
\\&\multicolumn{2}{r}{$
\vee 
(\exists [\ballot],[\pkD],[\ballotPrime] \sd
\BBrecorded(\bs)\in \tr 
\wedge \BBpkD([\pkD])\in \tr
\wedge \BBballotWoSignature([\ballotPrime])\in \tr 
 \wedge (\neg \exists [\pkDused] \sd
[\pkDused] \subseteq_l [\pkD] 
\wedge [\ballotPrime]=\verlist(\bs, [\pkDused]) ))\}.
$}
\end{alignat*}
}}
\end{center}
 \caption{The protocol \protocolnameInst, or $\protInst$, where $\pkD =\pk(\skD)$, $\pkS=\pk(\skS)$, and $\CastBy(\ballot)= H$ holds iff $ \exists \pk \sd\ver(\ballot,\pk)\neq \perp \wedge \langle H,\pk\rangle \in [H,\pk] \wedge \corresponds([H,\pk])\in \tr$.
The protocol's setup specifies a single agent $\server$, that each voter $H$ is associated with a unique trusted off-line device $D$, and that there is no restriction on the relation between voters $H$ and platforms $P$, i.e., several voters can be connected with the same $P$.
The role for a voter $H$ who abstains consists of receiving the list of recorded ballots from the bulletin board followed by the signal $\verifyNoVote(H,\emptyset)$.
 }\label{fig:protAlethea}
\end{figure*}

\protocolnameInst's communication topology $\topoInst$ and detailed protocol steps are respectively depicted in Figures~\ref{fig:topoAlethea} and~\ref{fig:protAlethea}.
As explained in Section~\ref{subsubsec:mixnet:protocol}, we present the protocol as a message sequence chart. 

The topology specifies roles for the trusted off-line device $D$, only connected to the voter $H$, and the platform $P$, through which $H$ can access $\server$. 
Apart from the fixed assumptions about the roles $\BB$, $\auditor$ and their incoming and outgoing channels, the topology specifies the following assumptions.
We assume that the channels $(H,D)$ and $(D,H)$ are secure and default. Moreover, the network between $H$ and $\server$ is insecure but the channels $(H,P)$, $(P,\server)$, $(\server,P)$, and $(P,H)$ reliably deliver messages. 
$P$ is partially trusted to forward messages and $\server$ is partially trusted to reply with a confirmation when it receives a valid ballot. Except for the device $D$, the topology is as the topology $\topo_4$ (Figure~\ref{fig:topo4}) and can be interpreted as explained in Section~\ref{subsec:topologiesprovidingfulltime}.

As \server naturally learns how each voter voted when collecting and tallying the ballots, 
there are no privacy guarantees if \server is not trusted.
For simplicity, and as other authors~\cite{helios,alethea,Chaum}, we thus model a trusted \server when examining privacy properties. In reality, this trust could be distributed as is done in other work, e.g.,~\cite{civitas,eperio}.

We assume that, at each point in time, only one election takes place, i.e., there are no parallel sessions.
Moreover, the protocol's setup specifies that there is a single agent $\server$ and that each voter $H$ has a unique trusted device $D$ to which he has exclusive access. In contrast, there is no fixed relation between platforms and voters, i.e., several voters may use the same platform $P$ to cast their ballot.
For readability,
we explain the protocol for one voter, device, and platform in Figure~\ref{fig:protAlethea}.
At the protocol's start, the device and authority each know a unique private key ($\skD$ and $\skS$, respectively) and know each others corresponding public key, which is also published on the bulletin board $\BB$.
Moreover, $\BB$ contains the list of all eligible voters $[H]$ and information on which signing key corresponds to which eligible voter, i.e., which signing key is installed on that voter's device.
We denote the later by the signal $\corresponds([H,\pkD])$, where each pair $\langle H,\pkD \rangle$ in the list denotes that the signing key corresponding to $\pkD$ is installed on $H$'s device.

To compute a ballot, a voter enters his vote on his device $D$.
$D$ generates a fresh random number $r$ (denoted by $\fresh$), uses $r$ to probabilistically encrypt the vote under $\server$'s public key, and signs the resulting encryption with its signing key.
$D$ displays the computed ballot to $H$ and $H$ casts the ballot (denoted by the signal $\Ballot$) by entering it on his platform $P$ from where it is forwarded to $\server$.
This communication could, for example, be realized by $D$ displaying the ballot as a QR code and $H$ scanning this code with $P$.
Upon receiving a ballot, $\server$ checks whether it is well-formed and signed with an eligible voter's key.
If this is the case, $\server$ adds the ballot to the list of recorded ballots $[\ballot]$ and the respective verification key to $[\pkDused]$ to keep track of the used public keys, as each voter can only vote once.
Also, $\server$ sends as confirmation the signed ballot back to $H$ via $P$, where the voter keeps it as evidence in case of subsequent disputes.

After the voting phase ends, $\server$ computes the tally.
Instead of the function $\Tally$, we describe this procedure in terms of several, more detailed steps:
$\server$ removes the signatures from the recorded ballots, shuffles and decrypts the resulting encryptions with its private key, and produces a zero knowledge proof that these operations were done correctly. $\server$ then publishes all the lists and the proof on the bulletin board, where an auditor can read them.

The auditor first checks that all recorded ballots contain a signature corresponding to a unique public key in $[\pkD]$.
Then, the auditor verifies the zero knowledge proof.
We denote this respectively by the explicit signals $\verifyValidSig(A,p_1)$ and $\verifyProof(A,p_2)$, which record that the agent $A$ checks whether the predicates $p_1$ and $p_2$ hold.
In the protocol's traces, these signals are recorded as $\verifyValidSig(A,p_1,t_1)$ and $\verifyProof(A,p_2,t_2)$, with $t_1,t_2\in\{\true,\false\}$ respectively indicating whether the predicate $p_1$ and $p_2$ is satisfied in this particular trace. 
This allows us to refer both to the terms that are evaluated in the predicate and the predicate's truth value during execution. 

Moreover, the voter reads the list of recorded ballots from $\BB$ and checks whether his ballot is contained in this list.
In addition to $\verifyCast(H,b)$, this is recorded by the signal $\verifyIV(H,\ballot \in [\ballot])$, which we will need to express individual verifiability.
As with $\verifyValidSig$ and $\verifyProof$, $\verifyIV$ denotes the agent and the checked predicate and contains in the protocol's trace a third argument stating whether the predicate is satisfied.

Finally, a voter who abstains does not send any messages. After the results are published, he reads from the bulletin board the list of recorded ballots and believes at that step that no ballot is recorded for him, which is recorded by the signal $\verifyNoVote(H,\emptyset)$.

We complete the protocol specification as follows.
A ballot is considered to be cast by a voter when the ballot's signature can be verified with the public key associated with this voter. 
That is $\CastBy(\ballot)= H$ iff there exists a verification key $\pk$ such that $\ver(\ballot,\pk)\neq \perp \wedge \langle H,\pk\rangle \in [H,\pk] \wedge \corresponds([H,\pk])\in \tr$.
Moreover, we define the verdict that $\server$ behaved dishonestly as the set of traces where a) there exists evidence consisting of a ballot signed by $\server$ but this ballot is not included in the recorded ballots on the bulletin board or b) there are recorded ballots that are not signed by a unique eligible voter. 
The corresponding set $\verdict$ is defined in Figure~\ref{fig:protAlethea}.

\subsubsection{Security properties}\label{caseStudy:SecurityProperties}
With the above instantiation of the function $\verdict$, all introduced dispute resolution properties are now defined and can be analyzed.
As the definition of $\uniqueness(\server)$ additionally contains $\CastBy$, we explicitly instantiate this part for simplicity.
\begin{mydef}\label{def:uniquenessMV}
Let the length of the list $\bs$ be $n$.
 \begin{alignat*}{1}
&\uniqueness(\server) := \; \{\tr \mid 
 \ballot \neq \perp \wedge
 \tr \notin \verdict(\server,\ballot)
 \\\multicolumn{2}{r}{$ 
 \wedge \BBrecorded(\bs) \in \tr
 \wedge i \in \{1,\dots,n\}
 \wedge j \in \{1,\dots,n\}
 \implies
 $}
 \\\multicolumn{2}{r}{$ 
 \exists \Hs,i',j',[\pk],{\pk}_1,{\pk}_2, A_1, A_2,m_1,m_2 \sd
 \BBH(\Hs) \in \tr 
 $}
 \\\multicolumn{2}{r}{$ 
 \wedge \corresponds([H,\pk]) \in \tr
 \wedge\; \ver(\bs_i,{\pk}_1)\neq \perp 
 \wedge\; \ver(\bs_j,{\pk}_2)\neq \perp 
 $}
 \\\multicolumn{2}{r}{$ 
 \wedge \langle \Hs_{i'},{\pk}_1 \rangle \in[H,\pk]
 \wedge \langle \Hs_{j'},{\pk}_2 \rangle \in[H,\pk]
 $}
 \\\multicolumn{2}{r}{$ 
 \wedge \send(\Hs_{i'},A_1,m_1)\in \tr
 \wedge \send(\Hs_{j'},A_2,m_2)\in \tr 
 $}
 \\\multicolumn{2}{r}{$ 
 \wedge m_1\subterm \bs_i
 \wedge m_2 \subterm \bs_j 
 \wedge (i \neq j \implies \Hs_{i'} \neq \Hs_{j'})
 \}. 
 $}
\end{alignat*}
\end{mydef}
Compared to the more general definition of $\uniqueness(\server)$ from Definition~\ref{def:uniqueness}, the function $\CastBy$ is instantiated in the fourth and fifth line.

In addition to the dispute resolution properties, we analyze standard verifiability and privacy properties that we introduce next.
We start with individual verifiability, or \indivVerif for short, that states that whenever a voter verifies that his ballot is in the list of recorded ballots, then indeed one of the recorded ballots corresponds to his vote.
The property's definition is based on individual verifiability due to~\cite{alethea} and \cite{kremerVerif}.
\begin{mydef}\label{def:AL:indivVerifvote}
\begin{alignat*}{1}
&\textit{\indivVerif}:=
\{\tr \mid
 \verifyIV(H,\ballot\in[\ballot],\true) \in \tr \wedge 
 \\\multicolumn{2}{r}{$
 \Vote(H,\vote) \in \tr \implies 
\exists [\ballot'], \pkS,r,\skD \sd
 $}
 \\\multicolumn{2}{r}{$
\BBrecorded([\ballot'])\in \tr \wedge
 \ballot \in [\ballot'] \wedge 
 \ballot=\{\encp{v}{\pkS}{r}\}_{\skD} \}.
 $}
\end{alignat*}
\end{mydef}

Next, we introduce the universal verifiability property \emph{Tallied-as-recorded} that states that any auditor can verify that all recorded ballots are counted correctly in the final tally.
In particular, if an auditor performs its specified checks on some lists, then the bulletin board contains the same lists of ballots, votes, and verification keys, the votes in the final tally correspond to the votes encrypted in the recorded ballots, and each ballot contains a signature associated with a different verification key in $[\pkD]$. 
The order of the votes and ballots can be permuted, as indicated by the permutation $\pi$.
\begin{mydef}
\begin{alignat*}{1}
&\textit{\TalliedAsRecorded}:=
\{\tr \mid
 \verifyValidSig(A,\exists [\pkDused]. 
\\
\multicolumn{2}{r}{$ 
[\pkDused]\subseteq_l [\pkD] \wedge
[\ballotPrime]=\verlist(\bs, [\pkDused]),\true) \in \tr 
$}
\\
\multicolumn{2}{r}{$ 
\wedge \verifyProof(A, \verifyzkp(\zkproof,[\ballotPrime],[\vote],\pkS),\true) \in \tr \implies 
$}
\\
\multicolumn{2}{r}{$ 
\exists [r],[\skD],\pi \sd 
 \BBrecorded([\ballot])\in \tr \wedge
 \BBtallied([\vote])\in \tr \wedge
$}
\\
\multicolumn{2}{r}{$ 
 \BBpkD([\pkD])\in \tr
 \wedge 
[\ballot]=\pi [\{\encp{\vote}{\pkS}{r}\}_{\skD}]
$}
\\
\multicolumn{2}{r}{$ 
\wedge [\pk(\skD)] = [\pkDused]
\wedge [\pkDused] \subseteq_l [\pkD]\}.
$}
\end{alignat*}
\end{mydef}
\noindent
We then define \emph{end-to-end verifiability}, or \verifEnd for short, as the conjunction of individual verifiability and tallied-as-recorded.
\begin{mydef} 
 $$\textit{\verifEnd} := \textit{\indivVerif} \cap \textit{\TalliedAsRecorded}.$$
\end{mydef}

\emph{Eligibility verifiability} is another universal verifiability property that is defined by \emph{Kremer et al.}~\cite{kremerVerif} as the property that \emph{``anyone can check that each vote in the election outcome was cast by a registered voter and there is at most one vote per voter.''}
We denote the property by \eligibilityVerif and define it conditional on an auditor $A$ performing all of its specified verifiability checks.
If these checks are verified, then each vote in the published final tally must have been sent by an eligible voter $H$. 
Furthermore, no two votes in the final tally correspond to the same voter.
\begin{mydef}
Let $n$ be the length of the list $\vs$.
\begin{alignat*}{1}
 &\textit{\eligibilityVerif}:= 
\{\tr \mid
 \verifyValidSig(A,\exists [\pkDused]. 
[\pkDused]\subseteq_l [\pkD] 
\\
\multicolumn{2}{r}{$ 
\wedge [\ballotPrime]=\verlist(\bs,[\pkDused]),\true) \in \tr 
$}
\\\multicolumn{2}{r}{$ 
\wedge \verifyProof(A, \verifyzkp(\zkproof,[\ballotPrime],[\vote],\pkS),\true) \in \tr $}
\\\multicolumn{2}{r}{$ 
\wedge i \in \{1,\dots,n\} \wedge j \in \{1,\dots,n\}
$}
\\
\multicolumn{2}{r}{$ 
\implies 
\exists [H], i', j',A,A' \sd \BBH([H]) \in \tr \wedge \BBtallied([\vote])\in \tr 
 $}\\
\multicolumn{2}{r}{$ 
 \wedge \send([H]_{i'},A,[\vote]_{i}) \in \tr
\wedge \send([H]_{j'},A',[\vote]_{j}) \in \tr
 $}\\
\multicolumn{2}{r}{$ 
\wedge (i \neq j \implies [H]_{i'} \neq [H]_{j'}) 
\}.
$}
\end{alignat*}
\end{mydef}
\noindent
Note that \uniqueness and \eligibilityVerif make similar guarantees in that both state that each element of a list is associated with a \emph{unique eligible voter}.
\eligibilityVerif states this guarantee with respect to all votes in the final tally.
In contrast, \uniqueness states the guarantee for the \emph{recorded} ballots as these are the relevant terms for the disputes.

Next, we introduce the privacy property \emph{receipt-freeness}.
Whereas \emph{vote privacy} denotes that an adversary cannot link voters to their votes, receipt-freeness is strictly stronger and additionally requires that this holds even when a voter reveals his secrets to the adversary, i.e., when he tries to provide a receipt of his vote.
This property is important, as a protocol should ensure privacy even when honest voters are forced by an adversary to reveal private information.

As~\cite{alethea} and similarly to~\cite{delaune}, we define receipt-freeness as an observational equivalence property, which states that an adversary cannot distinguish between two systems.
Concretely, we define a \emph{left system} where the voter $A$ votes $v_1$ and the voter $B$ votes $v_2$ and a \emph{right system} where $A$ votes $v_2$ and $B$ votes $v_1$.
Moreover, we change the original protocol $\prot$ to $\prot'$ where $A$ sends all his secrets to the adversary except that $A$ always claims that his vote is $\vote_1$ (which is only true in the left system). 
We then define a set $\mathcal{S}$ of trace pairs $(tr_L,tr_R)$ where $tr_L$ is from the left system and $tr_R$ from the right system.
A protocol $\prot$ satisfies receipt-freeness if for all traces of $\prot'$ in one system, there exists a trace in the other system such that the pair of traces is contained in $\mathcal{S}$.

We use the notation $\prot_{m_1\leftarrow m_1',m_2\leftarrow m_2'}$ to denote the specification of the protocol $\prot$ where each occurrence of the terms $m_1$ and $m_2$ is replaced by $m_1'$ and $m_2'$, respectively. 
Furthermore, we write $t_1\approx t_2$ to denote that two traces are indistinguishable for an adversary. We use this definition informally here and refer to~\cite{tamarinDiff} for a formal definition of $\approx$.
\begin{mydef}
Let $\prot'$ be the protocol obtained from $\prot$ as described above, let $\vote_A$ and $\vote_B$ be the term that denotes $A$'s and $B$'s vote, respectively, and let $\vote_1$ and $\vote_2$ be message terms. 
Receipt-freeness of the protocol $\prot$ run in the topology $\topo$ is defined as follows.
\begin{alignat*}{3}
 \textit{\receiptfree}:=&
 \{(\tr_L,\tr_R)\in 
 \\\multicolumn{2}{r}{$ 
 \TR(\prot'_{\vote_A\leftarrow \vote_1,\vote_B\leftarrow \vote_2},\topo)\times 
 \TR(\prot'_{\vote_A\leftarrow \vote_2,\vote_B\leftarrow \vote_1},\topo)
 $}
 \\\multicolumn{2}{r}{$ 
 \mid \tr_L \approx \tr_R \}.
 $}
 \end{alignat*}
\end{mydef}\noindent
This set defines all indistinguishable trace pairs such that the traces are from two systems where $A$ and $B$ vote the opposite way. $A$ reveals all secrets, except he claims in both systems that he votes $v_1$, which is only true in the left system. 

\subsubsection{Analysis}\label{appendix:mixvote:analysis}
As mentioned in Section~\ref{subsec:CaseStudy}, by Theorem~\ref{theorem:completeChar} it is possible to achieve \fullTime in the topology $\topoInst$ 
as $\topo_4 \smallertopo \topoInst$.
We next analyze whether \protocolnameInst in Figure~\ref{fig:protAlethea}, $\protInst$ for short, indeed satisfies all above properties when run in the 
topology $\topoInst$.
Recall that $\provoterCast(\server)$, $\timeliness(\server)$, and $\provoterAbstain(\server)$ are guarantees for the voter and should hold for a distinguished voter $H$, even when the authority $\server$ and all other voters are dishonest.
Similarly, $\uniqueness(\server)$ and the verifiability properties must hold even when the authority is dishonest.
In contrast, $\proauth(\server)$ is a guarantee for $\server$ and should hold even when all voters are dishonest. 
Finally, recall that we assume for receipt-freeness that $\server$ is honest.
\begin{theorem}
 \protocolnameInst satisfies 
 $\provoterCast(\server)$, $\timeliness(\server)$, $\provoterAbstain(\server)$, $\proauth(\server)$, 
 $\uniqueness(\server)$, \indivVerif, \TalliedAsRecorded, \verifEnd, \eligibilityVerif, and \receiptfree when run in the topology $\advModel{\topoInst}{\hShH}$.
 When run in the topology $\advModel{\topoInst}{\mShH}$, the protocol satisfies the same properties, except for $\proauth(\server)$ and \receiptfree.
 When run in the topology $\advModel{\topoInst}{\hSmH}$ the protocol satisfies $\proauth(\server)$.
\end{theorem}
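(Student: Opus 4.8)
The plan is to establish each property first for a simplified single-voter instance of \protocolnameInst{} with the Tamarin tool, and then to lift the single-voter results to an arbitrary number of voters by a pen-and-paper reduction. Two of the listed properties are obtained indirectly and need no new Tamarin work. By definition, \verifEnd{} is the conjunction of \indivVerif{} and \TalliedAsRecorded{}, so it follows once those two are proven. And $\provoterAbstain(\server)$ is obtained from $\uniqueness(\server)$ by Theorem~\ref{theorem:uniquenessImpliesDR}, once one checks that \protocolnameInst{} meets that theorem's hypotheses: it has no re-voting, and an abstaining voter sends no message -- both directly visible in the specification of Figure~\ref{fig:protAlethea}. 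This reduces the direct obligations to $\provoterCast(\server)$, $\timeliness(\server)$, $\proauth(\server)$, $\uniqueness(\server)$, \indivVerif, \TalliedAsRecorded, and \eligibilityVerif, plus \receiptfree{} which is handled separately.

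First I would build three Tamarin theories, one per adversary model $\advModel{\topoInst}{\hShH}$, $\advModel{\topoInst}{\mShH}$, and $\advModel{\topoInst}{\hSmH}$. Each encodes the role specifications of $D$, $H$, $P$, $\server$, $\BB$, and $A$ from Figure~\ref{fig:protAlethea}; the channel types of $\topoInst$ together with the reliable-delivery and undeniable-channel restrictions of Section~\ref{subsubsec:channelTypes}; the trust assignments (partially trusted $P$ and $\server$, with the distinguished honesty of $H$ and/or $\server$ fixed by the model); and the equational theory extended with $\verlist$, probabilistic encryption and decryption, and the shuffle-and-decrypt zero-knowledge proof of Appendix~\ref{subsubsec:newFunctionsAndEquations}. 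Each target property is then phrased as a trace property over the signals $\BB$, $\Evidence$, $\pubChan$, $\Ballot$, $\End$, $\honest$, $\verifyCast$, $\verifyNoVote$, $\verifyIV$, $\verifyValidSig$, and $\verifyProof$, with $\verdict$ and $\CastBy$ instantiated as in Figure~\ref{fig:protAlethea}, and proved automatically: $\proauth(\server)$ in the $\advModel{\topoInst}{\hSmH}$ theory, the voter-side guarantees in the $\advModel{\topoInst}{\mShH}$ theory, and all of them in the $\advModel{\topoInst}{\hShH}$ theory, which together cover exactly the three topologies appearing in the theorem statement. As a consistency check, the achievability of \fullTime{} in $\topoInst$ already follows from $\topo_4 \smallertopo \topoInst$ and Theorem~\ref{theorem:completeChar}.

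The lifting step is a reduction showing that if any of these properties fails for $n$ voters then it already fails for one: given an attack trace, project it onto the distinguished voter $H$ and replace every other voter by an adversary-controlled agent. This is legitimate because in each model the non-distinguished voters are already untrusted, so the adversary can reproduce their externally visible behaviour, and one then checks that the signals entering each property -- in particular $\BBrecorded$, the $\Evidence$ and $\pubChan$ clauses of $\verdict$, and the $\CastBy$-based clause of $\uniqueness(\server)$ -- are preserved under the projection. For \receiptfree{} I would instead invoke Tamarin's diff-equivalence support~\cite{tamarinDiff} on the two-voter system of the receipt-freeness definition in Appendix~\ref{caseStudy:SecurityProperties} ($A$ and $B$ voting oppositely, $A$ leaking all secrets but always claiming $v_1$), relying on the adversary's inability to access $D$ in $\advModel{\topoInst}{\hShH}$ and on the confirmations containing only the signed probabilistic ciphertext, which is indistinguishable across the two systems.

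The main obstacle I expect is twofold. First, getting the Tamarin theories to terminate: the equational theory for the shuffle-and-decrypt proof, combined with the reliable and undeniable channel restrictions, tends to produce large or diverging proof searches, so auxiliary source lemmas and carefully engineered restrictions will likely be needed to close the automated proofs. Second, the single-voter reduction must be watertight for all properties simultaneously, and the delicate point is that $\uniqueness(\server)$ and the ``not all ballots in $\bs$ are signed by a unique eligible key'' clause of $\verdict$ quantify over the whole recorded list, so collapsing the other voters must neither create nor destroy a violating ballot; arguing this uniformly rather than property by property is where most of the care goes. Matching the left and right traces for \receiptfree{} in the presence of $H$'s additional secret leakage is a secondary concern.
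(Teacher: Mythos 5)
Your overall decomposition coincides with the paper's: single-voter Tamarin theories for each of the three adversary models, \verifEnd{} obtained as the intersection of \indivVerif{} and \TalliedAsRecorded{}, $\provoterAbstain(\server)$ derived from $\uniqueness(\server)$ via Theorem~\ref{theorem:uniquenessImpliesDR}, and \receiptfree{} proved by diff-equivalence on the two-voter system in $\advModel{\topoInst}{\hShH}$ only. The one place you genuinely diverge is the generalization to arbitrarily many voters, and that is where your plan has a gap. The paper does \emph{not} reduce the $n$-voter case to the single-voter case; it gives direct pen-and-paper trace arguments (Lemmas~\ref{lemma:protocolInst_timeliness}--\ref{lemma:protocolInst_EligibilityVerif}) for each property over arbitrary traces of the full multi-voter system, using the Tamarin results only as machine-checked confirmation for the fixed-size instances.

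Your proposed reduction --- project the attack trace onto $H$ and replace every other voter by an adversary-controlled agent --- does not go through for the list-quantified properties. In $\topoInst$ the non-distinguished voters are untrusted, but their devices $D$ are trusted and hold signing keys $\skD$ that the adversary never learns. You therefore cannot absorb the other voters into the adversary: if you hand the adversary their signing keys you have changed the trust model, and $\uniqueness(\server)$ can genuinely fail in the modified system because the adversary may then produce validly signed ballots that no voter ever sent, destroying the required $\send$ events; if you do not, the adversary cannot reproduce the other voters' validly signed ballots, so the projected trace has a different $\BBrecorded(\bs)$, which changes the evaluation of $\uniqueness(\server)$, \eligibilityVerif, \TalliedAsRecorded, and the second disjunct of $\verdict$ on which $\proauth(\server)$ depends. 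You flag this as the delicate point but offer no resolution, and in the form stated there is none; the fix is to drop the reduction and argue each of these properties directly for arbitrary $n$, as the paper does, by tracking which agent can possess each $\skD$ and which messages the honest $\BB$ and the trusted devices can emit.
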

\begin{proof}
To prove the theorem, we combine proofs by Tamarin with pen-and-paper proofs.
 We establish all properties except for $\provoterAbstain(\server)$, \verifEnd, and \receiptfree for one voter who casts a vote in Tamarin and model each of the topologies $\advModel{\topoInst}{\hShH}$, $\advModel{\topoInst}{\mShH}$, and $\advModel{\topoInst}{\hSmH}$ in a separate Tamarin theory.
 Moreover, to prove \receiptfree we model two voters as explained above and automatically prove the property in the topology $\advModel{\topoInst}{\hShH}$ using Tamarin's built in support for observational equivalence~\cite{tamarinDiff}.
 All Tamarin files can be found in~\cite{tamarinfiles}.
 
 As Tamarin requires specifying a fixed number of voters, we prove by hand in the Lemmas~\ref{lemma:protocolInst_timeliness}--\ref{lemma:protocolInst_EligibilityVerif} below that the properties 
 $\timeliness(\server)$, $\provoterCast(\server)$, $\proauth(\server)$, \indivVerif, \TalliedAsRecorded, $\uniqueness(\server)$, and \eligibilityVerif also hold for an arbitrary number of voters in the topologies claimed by the theorem.
 We then show in Lemma~\ref{lemma:protocolInst_provoterAbstain} that $\provoterAbstain(\server)$ is implied by $\uniqueness(\server)$ in the required topologies, as Theorem~\ref{theorem:uniquenessImpliesDR} can be applied.
 Finally, as \verifEnd is the intersection of \indivVerif and \TalliedAsRecorded, it directly follows that \verifEnd also holds in the required adversary models.
\end{proof}

\begin{lemma}\label{lemma:protocolInst_timeliness}
 \protocolnameInst satisfies $\timeliness(\server)$ when run in the topologies $\advModel{\topoInst}{\hShH}$ and $\advModel{\topoInst}{\mShH}$.
\end{lemma}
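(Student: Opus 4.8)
The plan is to fix an arbitrary trace $\tr \in \TR(\protInst, \advModel{\topoInst}{\adv})$ with $\adv \in \{\hShH, \mShH\}$, fix a distinguished honest voter $\human$ and a ballot $\ballot$, and establish the implication defining $\timeliness(\server)$ for this pair. If $\Ballot(\human,\ballot) \notin \tr$, or $\End \notin \tr$, or $\End$ occurs before $\Ballot(\human,\ballot)$ in $\tr$, then a split of $\tr$ can be chosen that falsifies the premise, so I may assume $\Ballot(\human,\ballot)$ occurs strictly before $\End$ in $\tr$; in particular $\ballot \neq \perp$, since an honest voter's trusted device $D$ outputs a well-formed signed ciphertext, never $\perp$.

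First I would follow the ballot to $\server$ and the confirmation back to $\human$. Since $\Ballot(\human,\ballot)$ is recorded, the honest $\human$ sent $\ballot$ to his platform $P$ over the reliable channel $(\human,P)$; as $\trust(\human)=\textit{trusted}$ and $\trust(P)=\trustedforward$ in both $\advModel{\topoInst}{\hShH}$ and $\advModel{\topoInst}{\mShH}$, the reliability restriction forces $P$ to receive $\ballot$, and by Assumption~1 the partially trusted $P$ forwards it on the reliable channel $(P,\server)$, which (as $\trust(\server)$ is $\trustedreply$ or $\textit{trusted}$) delivers $\ballot$ to $\server$. The ballot is valid at $\server$: it carries a signature under $\human$'s device key $\skD$, and since $\skD$ lives only on $\human$'s unique trusted device and never leaks, no agent other than $\human$'s device can produce a ballot bearing $\human$'s signature, so $\server$ has not previously seen $\human$ vote and therefore accepts $\ballot$. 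In $\advModel{\topoInst}{\hShH}$, $\server$ is honest, hence records $\ballot$ and, by the protocol-class assumptions on tallying and publication, publishes $\BBrecorded(\bs)$ with $\ballot\in\bs$ before $\End$. In $\advModel{\topoInst}{\mShH}$, $\trust(\server)=\trustedreply$, so $\server$ answers the valid ballot $\ballot$ with the confirmation $\confirmation=\{\ballot\}_{\skS}$, which the reliable channels $(\server,P)$ and $(P,\human)$ together with Assumption~1 deliver to $\human$, so the honest $\human$ records $\Evidence(\ballot,\confirmation)$.

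Next I would split on whether $\ballot$ belongs to the published recorded ballots $\bs$ (for which $\BBrecorded(\bs)\in\tr$ before $\End$, per the protocol-class assumptions). If $\ballot\in\bs$, the first disjunct of the conclusion holds, witnessed by the split of $\tr$ whose cut lies just after the later of $\Ballot(\human,\ballot)$ and $\BBrecorded(\bs)$. If $\ballot\notin\bs$ --- which by the previous paragraph cannot occur in $\advModel{\topoInst}{\hShH}$, so this case is only relevant for $\advModel{\topoInst}{\mShH}$ --- then $\tr\in\verdict(\server,\ballot)$ by the first disjunct of $\verdict$ in Figure~\ref{fig:protAlethea}: $\BBpkS(\pkS)\in\tr$ is published at setup, $\Evidence(\ballot,\confirmation)\in\tr$ was just obtained, $\ver(\confirmation,\pkS)=\ver(\{\ballot\}_{\skS},\pk(\skS))=\ballot$, $\BBrecorded(\bs)\in\tr$, and $\ballot\notin\bs$; since verdict membership does not depend on the split, any cut of $\tr$ lying between $\Ballot(\human,\ballot)$ and $\End$ then discharges $\timeliness(\server)$.

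Finally, I would observe that the whole argument is uniform in the number of voters, which is what lets it lift the one-voter instance machine-checked in Tamarin to arbitrarily many voters: every step uses only the per-edge reliability restriction on $(\human,P)$, $(P,\server)$, $(\server,P)$, $(P,\human)$, Assumption~1 for $\human$, $P$, and $\server$, the role specifications of $\human$, $D$, and $\server$ concerning the single ballot $\ballot$, the secrecy of $\skD$ and $\skS$, and the one-time publication of $\BBpkS(\pkS)$ and $\BBrecorded(\bs)$ --- none of which is sensitive to the presence of further voters (dishonest ones being in any case subsumed by the adversary). The main obstacle I anticipate is the delivery-versus-tally timing in $\advModel{\topoInst}{\hShH}$: I must combine the protocol-class deadline assumptions with Assumption~1 to argue that the reliably delivered $\ballot$ reaches $\server$ while it is still collecting ballots, so that $\server$ indeed includes it before publishing $\BBrecorded$; the analogous subtlety for $\advModel{\topoInst}{\mShH}$ is arguing that $\ballot$ counts as a ``correct message'' that triggers the $\trustedreply$ obligation of a possibly misbehaving $\server$, regardless of any internal state it may claim.
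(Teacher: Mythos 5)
Your proof follows the same backbone as the paper's: chase the ballot along the reliable $(\human,P)$ and $(P,\server)$ channels using the reliability restriction and Assumption~1, obtain the signed confirmation back along $(\server,P)$ and $(P,\human)$, record $\Evidence(\ballot,\{\ballot\}_{\skS})$, and finish with the case split on $\ballot\in\bs$ versus $\tr\in\verdict(\server,\ballot)$. The one place you diverge is the $\advModel{\topoInst}{\hShH}$ case, and that divergence is exactly where your self-identified obstacle lives. You route $\hShH$ through the claim that the honest $\server$ actually includes $\ballot$ in the published $\bs$, which forces you to argue that the reliably delivered ballot arrives \emph{before} $\server$ closes the voting phase and tallies --- a timing fact that the symbolic reliability restriction does not give you (it guarantees $\receive(\cdot)$ occurs in the trace, not that it occurs before any particular other event), and your final case split then asserts that $\ballot\notin\bs$ ``cannot occur'' in $\hShH$, which rests on that unestablished timing claim. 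The paper avoids this entirely by treating both topologies uniformly: since $\textit{trusted}$ is at least as strong as $\trustedreply$ and the honest role specification includes the confirmation step, the $\Evidence$ signal is derived in \emph{both} $\hShH$ and $\mShH$, and then the case $\ballot\notin\bs$ is discharged via $\verdict(\server,\ballot)$ regardless of whether $\server$ is honest or merely partially trusted. (Whether an honest $\server$ can thereby end up convicted is the concern of $\proauth(\server)$, a separate lemma, not of timeliness.) Your proof is repairable by simply extending your $\mShH$ evidence argument to $\hShH$ as well, which collapses it into the paper's; as written, though, the $\hShH$ branch has the gap you flagged.
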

 
\begin{proof}
 Let $H$ be the distinguished voter for which the property should hold, independently of whether or not there are other (dishonest) voters in the system.
 Let $\tr$ be a trace in $\TR(\protInst,\advModel{\topoInst}{\hShH})\cup\TR(\protInst,\advModel{\topoInst}{\mShH})$ such that $\tr=\tr'\cdot\tr''$ and $ \Ballot(H,\ballot) \in \tr'$ and $\End \in \tr''$.
 \begin{alignat*}{1}
& \tr=\tr'\cdot\tr'' \wedge \Ballot(H,\ballot) \in \tr' \wedge\End \in \tr''
\\
\xRightarrow{(1)} &
\exists P,\bs \sd \send(H,P,\ballot) \in \tr \wedge \BBrecorded([\ballot]) \in \tr'
\\
\xRightarrow{(2)} &
\send(H,P,\ballot) \in \tr \wedge 
\receive(H,P,\ballot) \in \tr 
\\\multicolumn{2}{r}{$
\wedge \send(P,\server,\ballot) \in \tr
$}
\\
\xRightarrow{(3)} &
\send(H,P,\ballot) \in \tr \wedge 
\receive(P,\server,\ballot) \in \tr 
\\
\xRightarrow{(4)} &
\exists \skS \sd 
\send(H,P,\ballot) \in \tr \wedge 
\send(\server,P,\{\ballot\}_{\skS}) \in \tr 
\\\multicolumn{2}{r}{$
\wedge \BBpkS(\pk(\skS)) \in \tr
$}
\\
\xRightarrow{(5)} &
\send(H,P,\ballot) \in \tr \wedge 
\receive(\server,P,\{\ballot\}_{\skS}) \in \tr
\\\multicolumn{2}{r}{$
\wedge \send(P,H,\{\ballot\}_{\skS}) \in \tr
\wedge \receive(P,H,\{\ballot\}_{\skS}) \in \tr
$}
\\
\xRightarrow{(6)} &
\Evidence(\ballot, \{\ballot\}_{\skS}) \in \tr
\end{alignat*} 
The first conjunct of Step (1) holds as $\Ballot(H,\ballot)$ is recorded when $H$ sends the ballot.
Note that the honest voter $H$ only sends this message on the channel $(H,P)$.
The second part of Step (1) holds as $\End$ is be definition recorded after the last message is published on the bulletin board. 
Thus, for some list $[\ballot]$, $\BBrecorded([\ballot])$ is recorded in the trace before $\End$.
Step (2) follows by the fact that in both topologies $\advModel{\topoInst}{\hShH}$ and $\advModel{\topoInst}{\mShH}$, it holds that $\channel(H,P)=\;\insecChanReliable$, $\trust(H)=\textit{trusted}$, and $\trust(P)=\trustedforward$. Thus, by the restrictions from Section~\ref{subsubsec:channelTypes}, each message that is sent from $H$ to $P$ will be received by $P$.

Moreover, by the fact that $\trust(P)=\trustedforward$ $P$ always forwards messages correctly, i.e., $P$ forwards the ballot to the (unique) authority $\server$.
Similarly to Step (2), Steps (3) and (4) follow by the assumption that the channel from $P$ to $\server$ is reliable and that $\trust(\server)=\trustedreply$. 
In particular, as $\server$ is partially trusted to answer with the (correct) confirmation for a valid ballot, it sends back the ballot signed with its (correct) signing key (as $H$ and his device $D$ are honest, the ballot $\ballot$ is valid).
By the shared initial knowledge, the setup assumptions, and since the bulletin board is honest by the topology $\topoInst$, the public key corresponding to $\server$'s signing key is published on the bulletin board (second part in Step (4)).
In Step (5), we use the same reasoning as in Steps (2) and (3) for the channels $(\server,P)$ and $(P,H)$ which are also reliably by $\topoInst$.
By the protocol specification, when the honest $H$ receives the confirmation, the signal $\Evidence$ is recorded, where the first argument is the ballot that $H$ sent and the second the received confirmation. Thus (6) holds.

Finally, we make a case distinction.
First, let $\ballot \in \bs$. 
Then, it immediately follows that $\timeliness(\server)$ holds.
Second, let $\ballot \notin \bs$. 
Then, it holds that $\BBpkS(\pk(\skS)) \in \tr \wedge \Evidence(\ballot, \{\ballot\}_{\skS}) \in \tr \wedge \ver( \{\ballot\}_{\skS}, \pk(\skS)) = \ballot \wedge \BBrecorded([\ballot]) \in \tr \wedge \ballot \notin [\ballot]$ and thus, by the definition of $\verdict$ it holds that $\tr \in \verdict(\server,\ballot)$. Therefore, $\timeliness(\server)$ also holds in this case.
\end{proof}

\begin{lemma}\label{lemma:protocolInst_provoterCast}
 \protocolnameInst satisfies $\provoterCast(\server)$ when run in the topologies $\advModel{\topoInst}{\hShH}$ and $\advModel{\topoInst}{\mShH}$.
\end{lemma}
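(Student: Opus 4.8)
The plan is to unfold Definition~\ref{def:proVoter} and argue directly from the honest voter's role specification. Fix an arbitrary trace $\tr$ in $\TR(\protInst,\advModel{\topoInst}{\hShH})\cup\TR(\protInst,\advModel{\topoInst}{\mShH})$ with $\verifyCast(H,\ballot)\in\tr$; in both adversary models $H$ is trusted, hence the agent (respectively, the device) that performs this check follows the specification in Figure~\ref{fig:protAlethea}. The first step is to extract the key structural fact from that specification: the signal $\verifyCast(H,\ballot)$ is recorded only after $H$ has (a) received from $P$ a confirmation $c$ for which the signature check $\ver(c,\pkS)=\ballot$ succeeds, where $\ballot$ is the ballot $\{\encp{\vote}{\pkS}{r}\}_{\skD}$ that $H$ cast — at which point $\Evidence(\ballot,c)$ is recorded — and (b) read a list $\bs$ of recorded ballots off the bulletin board in order to carry out the check, so that $\BBrecorded(\bs)\in\tr$. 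In addition, $\BBpkS(\pkS)\in\tr$ holds by the setup assumptions, and $\ballot\neq\perp$ since $\ballot$ has the form $\{\encp{\vote}{\pkS}{r}\}_{\skD}$.

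Given these facts, I would finish by a case distinction on whether $\ballot\in\bs$. If $\ballot\in\bs$, then $\exists\bs\sd\BBrecorded(\bs)\in\tr\wedge\ballot\in\bs$, so the first disjunct of $\provoterCast(\server)$ holds. If $\ballot\notin\bs$, then instantiating the existentially quantified variables in the first disjunct of $\verdict(\server,\ballot)$ (Figure~\ref{fig:protAlethea}) with $\bs$, $\pkS$, and $c$, all of its conjuncts are satisfied: $\BBpkS(\pkS)\in\tr$, $\Evidence(\ballot,c)\in\tr$, $\ver(c,\pkS)=\ballot$, $\BBrecorded(\bs)\in\tr$, and $\ballot\notin\bs$. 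Hence $\tr\in\verdict(\server,\ballot)$, the second disjunct of $\provoterCast(\server)$. Since $\verifyCast(H,\ballot)\in\tr$ was arbitrary, $\tr\in\provoterCast(\server)$; and since $\tr$ was an arbitrary trace of either topology, both required inclusions follow.

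Two observations explain why this argument is robust and why the single statement covers both topologies. First, honesty of $\server$ is never used: the property holds precisely because $H$ defers the $\verifyCast$ step until it holds $\server$'s signed confirmation of $\ballot$, which is exactly the reason $\provoterCast(\server)$ — unlike $\timeliness(\server)$ — survives an untrusted authority, as discussed in Section~\ref{subsubsec:mixvote:DR}. Thus the identical proof applies to $\advModel{\topoInst}{\hShH}$ and to $\advModel{\topoInst}{\mShH}$, and an arbitrary number of other, possibly dishonest, voters is irrelevant, since $\provoterCast(\server)$ constrains only the fixed voter $H$ and the fixed ballot $\ballot$, and the entire chain of reasoning lives inside $H$'s role, the $\BBpkS$ setup signal, and the definition of $\verdict$. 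Second, adversarial forgery of $\Evidence$ signals does not affect the argument: we only use the genuine $\Evidence(\ballot,c)$ recorded by the honest $H$, and extra forged signals can only enlarge $\verdict(\server,\ballot)$.

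The main obstacle is the first step — pinning down exactly where in $H$'s role the signal $\verifyCast(H,\ballot)$ lies relative to the receipt of the confirmation and the read of the bulletin board, and confirming that $H$'s role proceeds past the receipt of $c$ only when $\ver(c,\pkS)=\ballot$ holds, so that $c$ is genuine evidence usable in $\verdict$. This is a matter of careful inspection of the specification in Figure~\ref{fig:protAlethea}; it mirrors the one-voter version already checked in Tamarin (see~\cite{tamarinfiles}), and the lifting to arbitrarily many voters is immediate because nothing in the reasoning refers to any voter other than $H$. The event-ordering fact I use here is the ``backward'' direction of the chain appearing in the proof of Lemma~\ref{lemma:protocolInst_timeliness}: there it is shown that casting leads to the confirmation and hence to $\Evidence$; here I need only that reaching $\verifyCast$ presupposes that $\Evidence$ (with a matching signature) and $\BBrecorded$ have already occurred.
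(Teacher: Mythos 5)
There is a genuine gap in your premise (a). You derive the existence of usable evidence from the claim that $H$'s role records $\verifyCast(H,\ballot)$ only after receiving a confirmation $c$ with $\ver(c,\pkS)=\ballot$, and you explicitly state that honesty of $\server$ is never used. But the protocol as specified does not include a signature check by the voter on the confirmation: the paper itself notes in Section~\ref{subsubsec:mixvote:DR} that letting the voter interpret whether $\server$'s signature is valid is a simplifying assumption which ``in reality \dots would require an additional protocol step where the voter uses a device'', and that discussion concerns a hypothetical variant $\topoInst'$, not $\protInst$ run in $\topoInst$. In the actual specification the honest $H$ records $\Evidence(\ballot,c)$ for whatever $c$ arrives on the insecure (albeit reliable) channel from $\platform$; since the adversary can write on that channel, the $c$ in the evidence signal recorded by $H$ need not satisfy $\ver(c,\pkS)=\ballot$, and then the first disjunct of $\verdict(\server,\ballot)$ cannot be discharged with it. Without the signature check, your backward argument yields an $\Evidence$ signal but not a \emph{valid} one.

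The paper closes exactly this hole by going forward rather than backward: from $\verifyCast(H,\ballot)$ it extracts $\send(H,P,\ballot)$ and, via the authenticity of the channel from $\BB$ together with $\BB$'s honesty and single-message behaviour, $\BBrecorded(\bs)\in\tr$; it then replays the chain of Lemma~\ref{lemma:protocolInst_timeliness}, where the reliable channels $(H,P)$, $(P,\server)$, $(\server,P)$, $(P,H)$ together with $\trust(\platform)=\trustedforward$ and $\trust(\server)=\trustedreply$ guarantee that the \emph{correct} confirmation $\{\ballot\}_{\skS}$ is delivered, so that $\Evidence(\ballot,\{\ballot\}_{\skS})$ with a verifying signature is in the trace. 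The partial trust on $\server$, present in both $\advModel{\topoInst}{\hShH}$ and $\advModel{\topoInst}{\mShH}$, is therefore essential to the lemma for this protocol; your observation that the property would survive an untrusted $\server$ is correct only for a modified protocol in which the voter (or his device) verifies the confirmation before proceeding. As a minor further point, your step (b) also needs to be routed through the channel and trust assumptions on $\BB$ to turn ``$H$ read a list'' into $\BBrecorded(\bs)\in\tr$, as the paper's Steps (2)--(3) do.
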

\begin{proof}
Let $H$ be the distinguished voter for which the property should hold, independently of whether there are other (dishonest) voters in the system.
Let $\tr$ be a trace in $\TR(\protInst,\advModel{\topoInst}{\hShH})\cup\TR(\protInst,\advModel{\topoInst}{\mShH})$ such that $\verifyCast(H,\ballot) \in \tr$.
 \begin{alignat*}{2}
 &\verifyCast(H,\ballot) \in \tr
 \\
\xRightarrow{(1)} &
\exists P , [\ballot],\BB \sd \send(H,P,\ballot) \in \tr
\wedge \receive(\BB,H,[\ballot]) \in \tr
 \\ 
\xRightarrow{(2)} &
 \send(H,P,\ballot) \in \tr
\wedge \send(\BB,H,[\ballot]) \in \tr
\\ 
\xRightarrow{(3)} &
\send(H,P,\ballot) \in \tr
\wedge \BBrecorded([\ballot]) \in \tr
\end{alignat*} 
As $H$ is honest in $\advModel{\topoInst}{\hShH}$ and $\advModel{\topoInst}{\mShH}$, he follows his role specification. 
Thus, when $\verifyCast(H,\ballot)$ is in the trace, so are the signals $\send(H,P,\ballot)$ for some $P$ and $\receive(\BB,H,[\ballot])$ for some $\BB$ and $\bs$ as these signals are preceding the verifiability check in $H$'s role (Step (1)).
By the topology $\topoInst$, the channel $(\BB,H)$ is authentic. Therefore, $\BB$ must have sent the message $\bs$ for $H$ to receive it on this channel (Step (2)).
Moreover, $\BB$ is honest by the topology $\topoInst$. Thus, $\BB$ only sends one message containing a single list, which was previously recorded in the signal $\BBrecorded$. In particular, as there are no parallel sessions, this list cannot be confused with another list sent by $\BB$ and it follows in Step (3) that $\BBrecorded(\bs)\in \tr$.

We now have a trace with the same signals as in the proof of Lemma~\ref{lemma:protocolInst_timeliness} after Step (1). 
We can thus apply the same steps as in the Proof of Lemma~\ref{lemma:protocolInst_timeliness} to conclude that $\provoterCast(\server)$ holds in all the required topologies.
(Note that in the proof of Lemma~\ref{lemma:protocolInst_timeliness}, $\BBrecorded$ is in the subtrace $\tr'$ rather than in $\tr$, which is however not needed in the further proof steps.)
\end{proof}

\begin{lemma}\label{lemma:protocolInst_proauth}
 \protocolnameInst satisfies $\proauth(\server)$ when run in the topologies $\advModel{\topoInst}{\hShH}$ and $\advModel{\topoInst}{\hSmH}$.
 \end{lemma}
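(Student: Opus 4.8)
The plan is to observe first that $\proauth(\server)$ holds vacuously on any trace not containing $\honest(\server)$ (Definition~\ref{def:proauth}), and that in both $\advModel{\topoInst}{\hShH}$ and $\advModel{\topoInst}{\hSmH}$ the role $\server$ has trust type \textit{trusted}, so $\honest(\server)$ appears in every trace. It therefore suffices to show that for every $\tr \in \TR(\protInst,\advModel{\topoInst}{\hShH}) \cup \TR(\protInst,\advModel{\topoInst}{\hSmH})$ and every ballot $\ballot$ we have $\tr \notin \verdict(\server,\ballot)$. For $\ballot = \perp$ this is immediate since $\verdict(\server,\perp)$ is the empty set (Figure~\ref{fig:protAlethea}); the work is in ruling out, for $\ballot \neq \perp$, each of the two disjuncts (a) and (b) in the definition of $\verdict(\server,\ballot)$.

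For disjunct (a), suppose it holds for some $\ballot \neq \perp$: then $\BBpkS(\pkS) \in \tr$, $\Evidence(\ballot,\confirmation) \in \tr$, $\ver(\confirmation,\pkS) = \ballot$, $\BBrecorded(\bs) \in \tr$, and $\ballot \notin \bs$. From $\BBpkS(\pkS)\in\tr$ and the setup, $\pkS = \pk(\skS)$, and then $\ver(\confirmation,\pkS) = \ballot \neq \perp$ forces $\confirmation = \{\ballot\}_{\skS}$ by the signature theory. The crucial step is then a secrecy invariant: since $\server$ is honest, $\skS$ lies only in $\server$'s initial knowledge, is known to no other honest role, and is never disclosed by the honest $\server$ (which sends only its public key, signed confirmations, the decrypted votes, and the shuffle proof, none of which reveals $\skS$), so the adversary never learns $\skS$; consequently the term $\{\ballot\}_{\skS}$ can occur in any agent's knowledge only if the honest $\server$ produced it. By $\protInst$'s specification the only terms of that shape the honest $\server$ ever produces are confirmations $\{\ballot\}_{\skS}$ emitted after $\server$ receives $\ballot$, checks its validity, and appends $\ballot$ to its running list of recorded ballots. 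Since there are no parallel sessions and $\BB$ is honest, $\BBrecorded(\bs)$ records precisely the final list built and sent by the single agent $\server$, so $\ballot \in \bs$ --- contradicting $\ballot \notin \bs$. I expect this secrecy/unforgeability step to be the main technical point; it is a routine invariant (provable by induction on the trace, or liftable from the one-voter Tamarin analysis) but must be stated and used carefully.

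For disjunct (b), the plan is simply to read off the witness that the honest $\server$ always supplies. When $\server$ tallies, it has only ever accepted ballots carrying a signature verifiable under a key in $[\pkD]$, it has recorded in a list $[\pkDused]$ the keys it used (one per accepted ballot, hence $[\pkDused] \subseteq_l [\pkD]$), and it publishes on $\BB$ the lists $\bs$, $[\pkD]$, and $[\ballotPrime] := \verlist(\bs,[\pkDused])$ (the last signalled by $\BBballotWoSignature$), where $|\bs| = |[\pkDused]|$ and every component signature verifies, so the defining equation of $\verlist$ applies. Because there are no parallel sessions and $\BB$ is honest, the lists in $\BBrecorded$, $\BBpkD$, and $\BBballotWoSignature$ are exactly these, so $[\pkDused]$ witnesses $\exists [\pkDused] \sd [\pkDused] \subseteq_l [\pkD] \wedge [\ballotPrime] = \verlist(\bs,[\pkDused])$ and hence contradicts the negation asserted by (b). This disjunct thus involves only bookkeeping about the honest $\server$'s behaviour, with no cryptographic content.

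Finally, both arguments rely only on the honesty of $\server$ and of the fixed role $\BB$, plus the no-parallel-sessions assumption, and never on the trust status of the voters or $\platform$; so they apply uniformly to $\advModel{\topoInst}{\hShH}$ and $\advModel{\topoInst}{\hSmH}$ and are independent of the number of voters, which is exactly why a pen-and-paper argument suffices here to extend the one-voter Tamarin result.
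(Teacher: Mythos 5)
Your proposal is correct and follows essentially the same route as the paper's proof: the case split on $\ballot=\perp$ versus the two disjuncts of $\verdict(\server,\ballot)$, the secrecy/unforgeability argument that only the honest $\server$ can produce a term signed with $\skS$ and that its only such outputs are confirmations for ballots it has recorded, and the bookkeeping argument (via the honest $\BB$, authentic channel, and no parallel sessions) that the published lists satisfy the $\verlist$/$\subseteq_l$ witness refuting disjunct (b). The paper additionally splits off the subcase where no $\BBrecorded$ signal occurs at all, but your version subsumes this since disjunct (a) already presupposes $\BBrecorded(\bs)\in\tr$.
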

 
\begin{proof}
Let $\tr$ be a trace in $\TR(\protInst,\advModel{\topoInst}{\hShH})\cup\TR(\protInst,\advModel{\topoInst}{\hSmH})$, such that $\honest(\server)\in \tr$.
 We show that, for all ballots $\ballot$, this trace is not in $\verdict(\server,\ballot)$.
First, consider $\ballot = \perp$. As $\verdict(\server,\perp)=\{\}$ and as we just defined that $\honest(\server)\in \tr$, it holds that $\tr \notin \verdict(\server,\perp)$.

We thus consider only ballots $\ballot$ different from $\perp$ in the following and show that $\tr$ is not in $\verdict(\server,\ballot)$ by separately showing that $\tr$ satisfies neither of the two disjuncts of the definition of $\verdict(\server,\ballot)$ for ballots $\ballot \neq \perp$.
We start with the second disjunct and assume that the corresponding lists are published on the bulletin board. If this is not given, the second disjunct is false and we are done.
\begin{alignat*}{1}
 & \BBrecorded(\bs)\in \tr \wedge \BBpkD([\pkD])\in \tr\wedge \BBballotWoSignature([\ballotPrime])\in \tr 
 \\
 \xRightarrow{(1)} &
 \exists \skS \BB, [\vote], \zkproof\sd 
 \knows(\server,\langle \skS, [\pkD]\rangle) \in \tr
 \\
 \multicolumn{2}{r}{$
 \wedge
 \send(\server,\BB,\langle [\ballot], [\ballotPrime], [\vote], \zkproof \rangle) \in \tr
 $}
\\
 \xRightarrow{(2)} &
 \exists [\pkDused]\sd
 [\ballotPrime]=\verlist(\bs, [\pkDused])
 \wedge [\pkDused] \subseteq_l [\pkD]
 \end{alignat*} 
 \noindent
 The first part of Step (1) holds by the shared initial knowledge specified in the protocol and as $\BB$ is honest by the topology $\topoInst$. 
 Thus, $\BB$ and $\server$ share the same list $[\pkD]$ in their initial knowledge that $\BB$ has published in the signal $\BBpkD$.
 Also, there are no parallel sessions so no message confusion is possible.
 The second part of Step (1) holds as the honest $\BB$ only records the lists $\bs$ and $[\ballotPrime]$ after it has received them from $\server$ as the first two of four messages.
 As the channel $(\server,\BB)$ is authentic by topology $\topoInst$, $\server$ must thus have sent these messages for $\BB$ to receive them.
 
 Next, as $\server$ only sends one message to $\BB$ and as there are no parallel sessions, the lists sent by $\server$ respectively correspond to the correctly computed ballots, ballots without signatures, tallied votes, and the zero knowledge proof.
 Thus, in particular, each recorded ballot is signed by a unique eligible voter's signing key, hence Step (2) holds. (Note that the honest $\server$ uses the list $[\pkD]$ from its initial knowledge for its check whether a ballot is valid.)
 This is a contradiction to the second disjunct in $\verdict$'s definition, which thus cannot be satisfied by the trace~$\tr$.
 
 Next, we show that the first disjunct in the definition of $\verdict(\server,\ballot)$ for ballots $\ballot \neq \perp$ is also not satisfied by $\tr$.
Let $S$ be an explicit signal, i.e., a top-level function in our protocol model, and let $x$ and $y$ be arbitrary terms. 
Assume that there is a signal in the trace $\tr$, which contains a subterm 
signed by \server's signing key. 
 Recall that $m_2\subterm m_1$ denotes that $m_1$ is a subterm of the composed message $m_2$.
\begin{alignat*}{1}
 & \honest(\server) \in \tr 
 \wedge
 S(x) \in \tr \wedge x \subterm \{y\}_{\skS}
\\
\multicolumn{2}{r}{$
 \wedge \BBpkS(\pk(\skS))\in \tr$}
 \\
 \xRightarrow{} &
 \exists A,x' \sd \send(\server,A,x') \in \tr \wedge x' \subterm \{y\}_{\skS}
 \end{alignat*} 
 The implication holds since, by the initial knowledge specified in the protocol, only $\server$ knows the signing key $\skS$ corresponding to the public key published in $\BBpkS$ by $\BB$ (which is honest by topology $\topoInst$). 
 Furthermore, no one can learn $\skS$ during the protocol execution and no explicit signal is recorded in $\server$'s role. 
 In particular, the former holds as $\server$ only sends out messages containing $\skS$ where $\skS$ is used as a signing key, which cannot be extracted by our equational theory. 
 It follows that $\server$ must have sent a term containing $\{y\}_{\skS}$ for it to be recorded in an explicit signal in the trace and thus the implication holds.
 
 Next, we make a case distinction.
 First, suppose that there are no recorded ballots on the bulletin board, that is $\neg \exists [\ballot] \sd \BBrecorded([\ballot])\in \tr $.
 Then it follows that $\forall \ballot \sd \tr \notin \verdict(\server,\ballot)$.
 
 Second, suppose that there are some recorded ballots on the bulletin board.
 \begin{alignat*}{1}
 &
 \exists [\ballot] \sd 
 \send(\server,A,x') \in \tr
 \wedge x' \subterm \{y\}_{\skS} \wedge 
 \\\multicolumn{2}{r}{$ 
 \BBrecorded([\ballot])\in \tr 
 $}
\\ \xRightarrow{(1)} &
 \exists \BB, [\ballotPrime], [\vote], \zkproof \sd
 \send(\server,A,x') \in \tr
 \wedge x' \subterm \{y\}_{\skS}
 \\\multicolumn{2}{r}{$ 
 \wedge \send(\server,\BB,\langle [\ballot], [\ballotPrime], [\vote], \zkproof \rangle) \in \tr
 $}
 \\
 \xRightarrow{(2)} &
 y \in [\ballot]
 \end{alignat*}
 \noindent 
 As $\BB$ is honest by topology $\topoInst$, it only records $\BBrecorded([\ballot])$ if it has previously received this term from $\server$ and as the first of four terms.
 As by $\topoInst$ the channel $(\server,\BB)$ is authentic, $\server$ must have sent these terms for $\BB$ to receive them (Step (1)).
 Next, as $\server$ is honest and follows its role specification, 
 the only messages sent by $\server$ that are signed by $\skS$ are the ballots, which are also all contained in the recorded ballots $[\ballot]$, i.e., in the first list of the only message that $\server$ sends to $\BB$ (Step (2)).
 Again no message confusions are possible as $\server$ only sends one message to $\BB$ and as there are no parallel sessions.
 Finally, as the above observations hold for any explicit signal $S$ containing $\server$'s signature, they hold in particular for the signal $S=\Evidence$.
 Moreover, as $\BB$ is honest there is only one signal $\BBpkS(\pkS)$ in the trace.
 Thus, for any $\ballot, \pkS, \confirmation$ such that $\BBpkS(\pkS)\in \tr \wedge \Evidence(\ballot, \confirmation)\in \tr \wedge \ver(\confirmation,\pkS)=\ballot $ 
 it follows that, for some $\bs$, $ \BBrecorded([\ballot])\in \tr \wedge \ballot \in [\ballot]$.
 Therefore, the first disjunct of the definition of $\verdict(\server,\ballot)$ for ballots $\ballot \neq \perp$ is also not satisfied by $\tr$ and we conclude that for all ballots $\ballot$ it holds that $\tr \notin \verdict(\server, \ballot)$.
\end{proof}

\begin{lemma}\label{lemma:protocolInst_indivverif}
 \protocolnameInst satisfies \indivVerif when run in the topologies $\advModel{\topoInst}{\hShH}$ and $\advModel{\topoInst}{\mShH}$.
\end{lemma}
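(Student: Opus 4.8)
The plan is to argue directly from the honest voter $H$'s role, in the same style as the proofs of Lemmas~\ref{lemma:protocolInst_timeliness} and~\ref{lemma:protocolInst_provoterCast}. In both adversary models $\advModel{\topoInst}{\hShH}$ and $\advModel{\topoInst}{\mShH}$ the distinguished voter $H$, his device $D$, and the bulletin board $\BB$ are honest: only $\trust(\server)$ and $\trust(H)$ differ between these models and from $\topoInst$, and in $\topoInst$ we have $\trust(D)=\trust(\BB)=\textit{trusted}$. Fix one of the two models, and take a trace $\tr$ in the corresponding set with $\verifyIV(H,\ballot\in[\ballot],\true)\in\tr$ and $\Vote(H,\vote)\in\tr$; if no such pair of signals occurs, then $\tr$ is vacuously in \indivVerif (Definition~\ref{def:AL:indivVerifvote}). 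The goal is then to exhibit $[\ballot'],\pkS,r,\skD$ with $\BBrecorded([\ballot'])\in\tr$, $\ballot\in[\ballot']$, and $\ballot=\{\encp{\vote}{\pkS}{r}\}_{\skD}$.

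First I would unfold $H$'s role. Since $\protInst$ has no re-voting, an honest voter executes the voting role at most once, so the signals $\Vote(H,\vote)$, $\Ballot(H,\ballot)$ and $\verifyIV(H,\ballot\in[\ballot],\true)$ all stem from the same run; in that run, $\verifyIV(H,\ballot\in[\ballot],\true)$ is preceded by $\Vote(H,\vote)$ (recorded when $H$ enters his chosen vote $\vote$ on $D$), by $D$'s internal computation (which draws fresh randomness $r$ and returns $\{\encp{\vote}{\pkS}{r}\}_{\skD}$, where $\pkS=\pk(\skS)$ and $\skD$ is $D$'s signing key), by $H$ receiving that ballot, and by $H$ later receiving a list $[\ballot]$ from $\BB$. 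As $D$ is honest and, in $\topoInst$, connected to $H$ only over a secure channel, the ballot $H$ receives — which, $H$ being honest, is exactly the $\ballot$ cast in $\Ballot(H,\ballot)$ and checked in $\verifyIV$ — equals $\{\encp{\vote}{\pkS}{r}\}_{\skD}$; the absence of parallel sessions rules out any confusion between $D$'s output and another term.

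Next I would transfer the outcome of the check and locate the published list. By the convention for the truth-value argument of check signals, $\verifyIV(H,\ballot\in[\ballot],\true)\in\tr$ means that $\ballot\in[\ballot]$ holds in $\tr$. Since the channel $(\BB,H)$ is authentic in $\topoInst$, $H$'s receipt of $[\ballot]$ gives $\send(\BB,H,[\ballot])\in\tr$; and since $\BB$ is honest and, by the protocol, only forwards to $H$ the single list of recorded ballots it has previously published — with no parallel sessions to cause confusion — we obtain $\BBrecorded([\ballot])\in\tr$. Taking $[\ballot']:=[\ballot]$ together with the $\pkS,r,\skD$ from the previous paragraph yields $\BBrecorded([\ballot'])\in\tr$, $\ballot\in[\ballot']$, and $\ballot=\{\encp{\vote}{\pkS}{r}\}_{\skD}$, so $\tr$ lies in \indivVerif; as $\tr$ was arbitrary, the lemma follows.

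The main obstacle is the bookkeeping that links $\Vote(H,\vote)$, the ballot $\ballot$ cast and checked by $H$, and the ballot actually produced by $D$: one must argue that the vote encrypted inside the verified ballot is precisely the $\vote$ recorded in $\Vote(H,\vote)$. This relies on $H$ being honest, on $D$ being honest and exclusively connected to $H$ over a secure channel, on the no-re-voting assumption (so $H$'s voting role fires at most once), and on the no-parallel-sessions assumption — all standing hypotheses of $\protInst$, and used in the same way in the companion lemmas. The remaining steps are routine channel and bulletin-board reasoning, essentially identical to parts of the proof of Lemma~\ref{lemma:protocolInst_provoterCast}, so I expect no genuinely new difficulty there.
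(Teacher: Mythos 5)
Your proposal is correct and follows essentially the same route as the paper's proof: it traces the verified ballot back through the honest voter's role, uses the secure channel to the honest device $D$ and the no-parallel-sessions assumption to identify the encrypted vote with the one in $\Vote(H,\vote)$, and uses the authentic channel from the honest $\BB$ to obtain $\BBrecorded([\ballot])$. The only (cosmetic) difference is that you argue forward from $\Vote(H,\vote)$ through $D$'s computation, whereas the paper traces backward from the received ballot to the vote $H$ sent; the ingredients and conclusions are identical.
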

\begin{proof}
 Let $H$ be the distinguished voter for which the property should hold, independently of whether or not there are other (dishonest) voters in the system.
Let $\tr$ be a trace in $\TR(\protInst,\advModel{\topoInst}{\hShH})\cup\TR(\protInst,\advModel{\topoInst}{\mShH})$ such that $\verifyIV(H,\ballot\in[\ballot],\true) \in \tr$ and $\Vote(H,\vote) \in \tr $. 

 \begin{alignat*}{1}
 &\verifyIV(H,\ballot\in[\ballot],\true) \in \tr 
 \wedge \Vote(H,\vote) \in \tr
 \\
\xRightarrow{(1)}&
 \exists \BB, D,P \sd \knows(H,\pair{D,P,\BB,\vote}) \in \tr
 \\\multicolumn{2}{r}{$
 \wedge \receive(D,H,\ballot) \in \tr 
 \wedge \receive(\BB,H,\bs) \in \tr
 \wedge \ballot \in \bs 
 $}
 \\\multicolumn{2}{r}{$
 \wedge \Vote(H,\vote) \in \tr
 $}
 \\
 \xRightarrow{(2)}&
 \send(D,H,\ballot) \in \tr
 \wedge \send(\BB,H,\bs) \in \tr 
 \\\multicolumn{2}{r}{$
 \wedge \Vote(H,\vote) \in \tr
 $}
 \\
 \xRightarrow{(3)}& 
 \exists r,\pkS, \skD, \vote'\sd
 \receive(H,D,\vote') \in \tr 
 \wedge \ballot = \{\encp{\vote'}{\pkS}{r}\}_{\skD}
 \\\multicolumn{2}{r}{$
 \wedge \BBrecorded(\bs) \in \tr
 \wedge \Vote(H,\vote) \in \tr
 $}
\\
 \xRightarrow{(4)} & 
 \send(H,D,\vote') \in \tr 
 \wedge \ballot = \{\encp{\vote'}{\pkS}{r}\}_{\skD}
 \\\multicolumn{2}{r}{$
 \wedge \Vote(H,\vote) \in \tr
 $}
 \\
 \xRightarrow{(5)}& 
 \send(H,D,\vote) \in \tr 
 \wedge \ballot = \{\encp{\vote}{\pkS}{r}\}_{\skD}.
\end{alignat*}
By the assumption $\tr \in \TR(\protInst,\advModel{\topoInst}{\hShH})\cup\TR(\protInst,\advModel{\topoInst}{\mShH})$, $H$ is honest and thus only performs the check with the ballot $\ballot$ that he has previously received from $D$ and the list of recorded ballots $\bs$ that he has previously received from $\BB$ (Step (1)).
Moreover, $\ballot \in \bs$ holds since the last argument of the verifiability check is $\true$.
By the topology $\topoInst$, the channel $(D,H)$ is secure and the channel $(\BB,H)$ is authentic. Therefore, $D$ and $\BB$ must respectively have sent the messages $\ballot$ and $\bs$ for $H$ to receive them on these channels (Step (2)).

$\BB$ is honest by the topology $\topoInst$. Thus, $\BB$ only sends one message containing a single list which was previously recorded in the signal $\BBrecorded$.
Moreover, as there are no parallel sessions, this list cannot be confused with another list sent by $\BB$.
It follows in Step~(3) that $\BBrecorded(\bs) \in \tr$.
The other parts of Step (3) follow as $D$ is honest by the topology $\topoInst$. 
Thus, $D$ only sends a ballot $\ballot$ that contains a vote~$\vote'$, encrypted and signed, that it has previously received from $H$.
Also, no message confusion is possible as there are no parallel sessions and $D$ only sends one message in the protocol.
As by the topology $\topoInst$ the channel $(H,D)$ is secure, $H$ must have sent $\vote'$ for $D$ to receive it (Step (4)).
As $H$ is honest and there are no parallel sessions, he only sends on the channel $(H,D)$ the vote that is also recorded in the signal $\Vote(H,\vote)$. Thus, it holds that $v=v'$ and Step (5) follows.

\end{proof}

\begin{lemma}\label{lemma:protocolInst_TalliedAsRecorded}
 \protocolnameInst satisfies \TalliedAsRecorded when run in the topologies $\advModel{\topoInst}{\hShH}$ and $\advModel{\topoInst}{\mShH}$. 
 \end{lemma}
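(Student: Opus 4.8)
The plan is to prove the implication directly: fix an arbitrary trace $\tr \in \TR(\protInst, \advModel{\topoInst}{\hShH}) \cup \TR(\protInst, \advModel{\topoInst}{\mShH})$ in which \emph{both} auditor checks succeed, i.e.\ $\verifyValidSig(\auditor, \exists [\pkDused].\, [\pkDused] \subseteq_l [\pkD] \wedge [\ballotPrime] = \verlist(\bs,[\pkDused]), \true) \in \tr$ and $\verifyProof(\auditor, \verifyzkp(\zkproof,[\ballotPrime],[\vote],\pkS), \true) \in \tr$, and derive the conjuncts required by \TalliedAsRecorded. The key observation is that this is a \emph{universal} verifiability property, so the argument must not use the honesty of $\server$ (which is only partially trusted in $\advModel{\topoInst}{\mShH}$); instead it rests on (i) the fact that $\BB$ and $\auditor$ are trusted in every topology of our class, (ii) the absence of parallel sessions, which rules out message confusion, and (iii) the equational theories of $\verlist$, $\ver$, $\verifyzkp$, $\zkp$, and the encryption scheme from Section~\ref{subsubsec:newFunctionsAndEquations}.

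First I would unwind the two verification signals back to the bulletin board, as in the proofs of Lemmas~\ref{lemma:protocolInst_indivverif} and~\ref{lemma:protocolInst_proauth}. Since $\auditor$ is honest it only records $\verifyValidSig$ and $\verifyProof$ for terms it actually received, which by the protocol specification are the list of recorded ballots $\bs$, the signature-stripped list $[\ballotPrime]$, the tallied votes $[\vote]$, the proof $\zkproof$, the device keys $[\pkD]$, and $\pkS$, all obtained from $\BB$. Because $\BB$ is honest and the channel $(\BB,\auditor)$ is authentic, $\BB$ must have sent exactly these terms, and being honest it only sends what it previously published; with no parallel sessions there is no confusion with other lists. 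Hence $\BBrecorded(\bs)$, $\BBtallied([\vote])$, $\BBpkD([\pkD])$, $\BBballotWoSignature([\ballotPrime])$, and $\BBpkS(\pkS)$ all occur in $\tr$, which discharges the three bulletin-board conjuncts of the conclusion.

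Next I would extract the structural content of the two checks. From $\verifyValidSig(\ldots,\true)$ there is $[\pkDused] \subseteq_l [\pkD]$ with $[\ballotPrime] = \verlist(\bs,[\pkDused])$; since the subsequent $\verifyzkp$ check can only rewrite to $\true$ on a properly structured list (never on $\perp$), we have $[\ballotPrime] \neq \perp$, so the first case of the $\verlist$ definition applies and, for each $i$, $\ver(\bs_i, [\pkDused]_i) = [\ballotPrime]_i \neq \perp$; by the $\ver$/signature equations $\bs_i = \{[\ballotPrime]_i\}_{\skD_i}$ with $\pk(\skD_i) = [\pkDused]_i$. Since the setup installs a distinct signing key on each device, $[\pkD]$ is repetition-free, hence so is its sub-multiset $[\pkDused]$, giving $[\pk(\skD)] = [\pkDused] \subseteq_l [\pkD]$ with pairwise distinct keys. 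From $\verifyProof(\verifyzkp(\zkproof,[\ballotPrime],[\vote],\pkS),\true)$, the soundness of the proof — encoded by the fact that the only rewrite yielding $\true$ for $\verifyzkp$ is the stated $\zkp$ equation — forces $[\ballotPrime]$ to be, up to permutation, a list of encryptions $[\encp{\vote}{\pkS}{r}]$ whose matching plaintext list is exactly $[\vote]$. Composing with $\bs_i = \{[\ballotPrime]_i\}_{\skD_i}$ yields $\bs = \pi[\{\encp{\vote}{\pkS}{r}\}_{\skD}]$ for a suitable permutation $\pi$, which is the remaining conjunct of \TalliedAsRecorded.

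I expect the zero-knowledge soundness step to be the main obstacle: I must argue carefully that no agent — in particular the Dolev--Yao adversary or a partially trusted $\server$ — can assemble a term $\zkproof$ on which $\verifyzkp(\zkproof,[\ballotPrime],[\vote],\pkS)$ rewrites to $\true$ unless $[\ballotPrime]$, $[\vote]$, and $\pkS$ are genuinely related as in the $\verifyzkp$ equation (noting the third condition there, that only a holder of the decryption key can build a valid proof), and then to keep track of the two permutations involved — the one implicit in the recorded order versus the one from the shuffle — so that they collapse into the single $\pi$ demanded by the property. The bulletin-board unwinding and signature-stripping steps are routine and closely parallel the earlier case-study lemmas.
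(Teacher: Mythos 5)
Your proposal is correct and follows essentially the same route as the paper's proof: unwind the two auditor signals through the honest auditor, the authentic channel $(\BB,\auditor)$, and the honest bulletin board (using the absence of parallel sessions to rule out confusion) to obtain the $\BBrecorded$, $\BBtallied$, and $\BBpkD$ signals, and then read off the structural relations between $\bs$, $[\ballotPrime]$, $[\vote]$, and $[\pkDused]$ directly from the equational definitions of $\verlist$ and $\verifyzkp$. The zero-knowledge soundness step you flag as the main obstacle is, in this symbolic model, discharged exactly as you suggest — the only rewrite of $\verifyzkp$ to $\true$ is the stated $\zkp$ equation — which is precisely how the paper handles it.
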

\begin{proof}
Let $\tr$ be a trace in $\TR(\protInst,\advModel{\topoInst}{\hShH})\cup \TR(\protInst,\advModel{\topoInst}{\mShH})$
such that $q_1=\verifyValidSig(A,\exists [\pkDused]. [\pkDused]\subseteq_l [\pkD] \wedge 
[\ballotPrime]= \verlist(\bs, [\pkDused]),\true) \in \tr$ and 
$q_2= \verifyProof(A, \verifyzkp(\zkproof,[\ballotPrime],[\vote],\pkS),\true) \in \tr$.
\begin{alignat*}{1}
 &q_1 \wedge q_2 
 \\
 \xRightarrow[]{(1)}& 
 q_1 \wedge q_2 \wedge
 \exists [r],\pi \sd 
 [\ballotPrime] =\pi [\encp{\vote}{\pkS}{r}]
 \\
 \xRightarrow[]{(2)}& 
 q_1 \wedge q_2 \wedge
 \exists [\skD] \sd
 [\ballotPrime] =\pi [\encp{\vote}{\pkS}{r}]
 \wedge [\ballot] = [\{\ballotPrime\}_{\skD}] 
 \\\multicolumn{2}{r}{$
 \wedge [\pk(\skD)]=[\pkDused]
 \wedge [\pkDused] \subseteq_l [\pkD]
 $}
 \\
 \xRightarrow[]{(3)}& 
 q_1 \wedge q_2 \wedge
 [\ballot] = \pi [\{\encp{\vote}{\pkS}{r}\}_{\skD}] 
 \wedge [\pk(\skD)]=[\pkDused]
 \\\multicolumn{2}{r}{$
 \wedge [\pkDused] \subseteq_l [\pkD]
 $}
 \\
 \xRightarrow[]{(4)}& 
 \exists \BB \sd \receive(\BB,A,\langle [\ballot], [\ballotPrime], [\vote], p, [\pkD]\rangle) \in \tr
 \\
 \xRightarrow[]{(5)}& 
 \send(\BB,A,\langle [\ballot], [\ballotPrime], [\vote], p, [\pkD]\rangle) \in \tr
 \\
 \xRightarrow[]{(6)}& 
 \BBrecorded([\ballot]) \in \tr 
 \wedge \BBtallied([\vote]) \in \tr
 \wedge \BBpkD([\pkD]) \in \tr .
\end{alignat*}
Step (1) holds as the verification $\verifyProof$ succeeds, indicated by its third argument $\true$, and by the definition of the verification function for zero knowledge proofs $\verifyzkp$ (see Appendix~\ref{subsubsec:newFunctionsAndEquations}).
As the third argument of $\verifyValidSig$ is also $\true$ and by the definition of $\subseteq_l$ (see Appendix~\ref{subsubsec:newFunctionsAndEquations}), Step (2) holds.
Next, Step (3) combines the results of the first two steps.
In (4), we use that the auditor $A$ is honest by the topology $\topoInst$ and follows its role specification. 
Thus, $q_1$ and $q_2$ are only in the trace if $A$ has previously received the corresponding lists from $\BB$.
As the channel $(\BB,A)$ is authentic by the topology $\topoInst$, $\BB$ must thus have sent these messages for $A$ to receive them on this channel (Step (5)).
Finally, (6) holds as by the topology $\topoInst$, $\BB$ is honest and thus only sends out these lists if it has previously published them, i.e., the corresponding signals were recorded in the trace.
Thereby, as $\BB$ only sends one message of this form and as there are no parallel sessions, no message confusion is possible.
Thus, we have shown that all required signals are in the trace and that the lists have the required relations.
\end{proof}

\begin{lemma}\label{lemma:protocolInst_uniqueness}
 \protocolnameInst satisfies $\uniqueness(\server)$ when run in the topologies $\advModel{\topoInst}{\hShH}$ and $\advModel{\topoInst}{\mShH}$.
\end{lemma}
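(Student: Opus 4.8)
The plan is to take an arbitrary trace $\tr$ in $\TR(\protInst,\advModel{\topoInst}{\hShH}) \cup \TR(\protInst,\advModel{\topoInst}{\mShH})$ together with a $\ballot \neq \perp$, a list $\bs$ of length $n$, and indices $i,j \in \{1,\dots,n\}$ witnessing the antecedent of Definition~\ref{def:uniquenessMV} (so $\tr \notin \verdict(\server,\ballot)$ and $\BBrecorded(\bs) \in \tr$), and to construct the witnesses $\Hs$, $[H,\pk]$, $\pk_1$, $\pk_2$, $\Hs_{i'}$, $\Hs_{j'}$, $A_1$, $A_2$, $m_1$, $m_2$ demanded by the definition. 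First I would collect the bulletin-board facts. Since $\BB$ is honest and the channel $(\server,\BB)$ is authentic in both topologies, $\BBrecorded(\bs) \in \tr$ forces $\server$ to have sent the four-tuple $\langle \bs,[\ballotPrime],[\vote],\zkproof\rangle$, and since $\BB$ is honest and there are no parallel sessions this entails $\BBballotWoSignature([\ballotPrime]) \in \tr$, while $\BBpkD([\pkD])$, $\BBH(\Hs)$, and $\corresponds([H,\pk])$ are published by the honest $\BB$ during setup. As $\tr \notin \verdict(\server,\ballot)$ for this $\ballot \neq \perp$, both disjuncts of $\verdict(\server,\ballot)$ fail; in particular the second disjunct fails even though $\BBrecorded$, $\BBpkD$, and $\BBballotWoSignature$ all occur, so there is a list $[\pkDused]$ with $[\pkDused] \subseteq_l [\pkD]$ and $[\ballotPrime] = \verlist(\bs,[\pkDused])$.

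Next I would read off the key structure. By the defining equation of $\verlist$, $[\ballotPrime] = \verlist(\bs,[\pkDused]) \neq \perp$ forces $\ver(\bs_k,[\pkDused]_k) = [\ballotPrime]_k \neq \perp$ for every index $k$, so setting $\pk_1 := [\pkDused]_i$ and $\pk_2 := [\pkDused]_j$ yields $\ver(\bs_i,\pk_1) \neq \perp$ and $\ver(\bs_j,\pk_2) \neq \perp$. From $[\pkDused] \subseteq_l [\pkD]$ we get $\pk_1,\pk_2 \in [\pkD]$, and since the protocol setup equips every eligible voter with a unique device carrying a unique signing key, $[\pkD]$ is duplicate-free and the list $[H,\pk]$ recorded in $\corresponds([H,\pk])$ is a bijection between the eligible voters in $\Hs$ and the keys in $[\pkD]$. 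Hence there are $\Hs_{i'},\Hs_{j'} \in \Hs$ with $\langle \Hs_{i'},\pk_1\rangle,\langle \Hs_{j'},\pk_2\rangle \in [H,\pk]$. For the uniqueness conjunct I would observe that being a sub-multiset of the duplicate-free list $[\pkD]$ makes $[\pkDused]$ itself duplicate-free, so $i \neq j \Rightarrow [\pkDused]_i \neq [\pkDused]_j \Rightarrow \pk_1 \neq \pk_2$, and bijectivity of $\corresponds$ then gives $\Hs_{i'} \neq \Hs_{j'}$.

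It remains to produce send events $\send(\Hs_{i'},A_1,m_1) \in \tr$ and $\send(\Hs_{j'},A_2,m_2) \in \tr$ with $m_1 \subterm \bs_i$ and $m_2 \subterm \bs_j$; this is the step I expect to be the main obstacle, because the owner $\Hs_{i'}$ of the device whose key signed $\bs_i$ may itself be compromised. The argument I plan to run is: $\ver(\bs_i,\pk_1) \neq \perp$ with $\pk_1 = \pk(\skD)$ for the device $D_{i'}$ of $\Hs_{i'}$ means $\bs_i$ is a signature under $\skD$; since every device role is trusted in $\topoInst$ (hence in both $\advModel{\topoInst}{\hShH}$ and $\advModel{\topoInst}{\mShH}$), an honest device never transmits $\skD$, and signing keys cannot be extracted in our equational theory, so the adversary never learns $\skD$ and $\bs_i$ must have been produced by $D_{i'}$ itself. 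By $D_{i'}$'s role specification, and since there is a single election with no parallel sessions, the only signature $D_{i'}$ ever emits is the ballot $\{\encp{v}{\pkS}{r}\}_{\skD}$ it forms after receiving the vote $v$ from its voter, so $\bs_i = \{\encp{v}{\pkS}{r}\}_{\skD}$ and $D_{i'}$ received $v$ from $\Hs_{i'}$. If $\Hs_{i'}$ is honest, the channel $(\Hs_{i'},D_{i'})$ is secure, so this receipt forces $\send(\Hs_{i'},D_{i'},v) \in \tr$ (and indeed the honest voter even sends $\bs_i$ itself on $(\Hs_{i'},P)$); if $\Hs_{i'}$ is compromised, the adversary acting as $\Hs_{i'}$ must itself have delivered $v$ to $D_{i'}$ over the then-insecure channel $(\Hs_{i'},D_{i'})$, which is again recorded as $\send(\Hs_{i'},D_{i'},v)$ since a send performed on behalf of a compromised agent carries that agent's identity. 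In either case one takes $A_1 := D_{i'}$, $m_1 := v$ (and symmetrically $A_2 := D_{j'}$, $m_2 := v'$), with $v$ standing in the subterm relation to $\bs_i = \{\encp{v}{\pkS}{r}\}_{\skD}$ as required; discharging all conjuncts of Definition~\ref{def:uniquenessMV} then gives $\tr \in \uniqueness(\server)$. The delicate point throughout this last step is the modelling of channels incident to a compromised agent — i.e.\ that feeding a message to an honest device ``in the name of'' a dishonest voter still appears as that voter's $\send$, with the vote sitting inside the resulting ballot in the direction $\subterm$ requires — and this is the part I would write out most carefully; everything else reduces to tracking honest-agent role structure together with the definitions of $\verdict$, $\verlist$, $\subseteq_l$, and $\corresponds$.
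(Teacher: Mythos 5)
Your overall skeleton matches the paper's proof closely — extracting the bulletin-board facts from the honest $\BB$, using the failure of the second disjunct of $\verdict$ to obtain $[\pkDused]\subseteq_l[\pkD]$ and $[\ballotPrime]=\verlist(\bs,[\pkDused])$, reading off the signature structure from $\verlist$, and deriving uniqueness of the associated voters from the one-key-per-device setup and the fact that $\skD$ never leaves the (trusted) device. But the final step, which you yourself identify as the main obstacle, is carried out incorrectly: you choose $m_1 := v$, the vote sent from the voter to the device, and claim this satisfies $m_1 \subterm \bs_i$. The paper's convention is that $m \subterm m'$ means $m'$ is a subterm of $m$ (``the ballot can be sent as part of a larger composed message''), so $m_1 \subterm \bs_i$ requires the \emph{ballot} to be a subterm of the \emph{sent message}, not the vote to be a subterm of the ballot. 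Your witness has the containment reversed and does not discharge the conjunct; it would also weaken the property to ``some voter once sent the plaintext vote somewhere,'' which is not what $\uniqueness$ is meant to capture.

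The correct argument, which is the one the paper gives, identifies a different send event entirely: since only $D_{i'}$ knows $\skD$ and can therefore produce $\bs_i$, and since the honest $D_{i'}$ sends its single signed message only to its associated voter $\Hs_{i'}$ over the secure channel $(D_{i'},\Hs_{i'})$, the ballot $\bs_i$ can reach $\server$ and hence the bulletin board only if $\Hs_{i'}$ forwards it onward. That forwarding is the event $\send(\Hs_{i'},A_1,m_1)$ with $\bs_i$ a subterm of $m_1$ (the cast to the platform), and it exists whether $\Hs_{i'}$ is honest or compromised, because in either case a message containing the ballot must leave $\Hs_{i'}$ for it to appear in $\BBrecorded$. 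Your discussion of how sends by a compromised voter are attributed is the right concern, but it should be applied to the forwarding of the ballot, not to the delivery of the vote to the device.
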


\begin{proof}
 Let $\tr$ be a trace in $\TR(\protInst,\advModel{\topoInst}{\hShH})\cup \TR(\protInst,\advModel{\topoInst}{\mShH})$ such that $\tr \notin \verdict(\server,\ballot')$, $\ballot' \neq \perp $, and $\BBrecorded([\ballot])\in \tr$.
 \begin{alignat*}{1}
 & \ballot' \neq \perp 
 \wedge \tr \notin \verdict(\server,\ballot') 
 \wedge \BBrecorded([\ballot])\in \tr
 \\\xRightarrow[]{(1)}& 
 \exists [\ballotPrime],[\pkD],[H] \sd 
 \ballot' \neq \perp 
 \wedge \tr \notin \verdict(\server,\ballot') 
 \\\multicolumn{2}{r}{$
 \wedge \BBrecorded([\ballot])\in \tr
 \wedge \BBballotWoSignature([\ballotPrime]) \in \tr
 \wedge \BBpkD([\pkD]) \in \tr 
 $}
 \\\multicolumn{2}{r}{$
 \wedge \BBH([H]) \in \tr 
 $}
 \\\xRightarrow[]{(2)}& 
 \exists [\pkDused] \sd
 \BBrecorded([\ballot])\in \tr
 \wedge \BBballotWoSignature([\ballotPrime]) \in \tr
 \\\multicolumn{2}{r}{$
 \wedge \BBpkD([\pkD]) \in \tr 
 \wedge \BBH([H]) \in \tr 
 \wedge [\ballotPrime]= \verlist(\bs, [\pkDused])
 $}
 \\\multicolumn{2}{r}{$
 \wedge [\pkDused] \subseteq_l [\pkD]
 $}
 \\
 \xRightarrow[]{(3)}& 
 \exists [\skD] \sd
 \BBrecorded([\ballot])\in \tr
 \wedge \BBballotWoSignature([\ballotPrime]) \in \tr
 \\\multicolumn{2}{r}{$
 \wedge \BBpkD([\pkD]) \in \tr 
 \wedge \BBH([H]) \in \tr 
 \wedge [\ballot] = [\{\ballotPrime\}_{\skD}] 
 $}
 \\\multicolumn{2}{r}{$
 \wedge [\pk(\skD)]= [\pkDused] 
 \wedge [\pkDused] \subseteq_l [\pkD].
 $}
\end{alignat*}
Steps (1) holds as the bulletin board is honest by topology $\topoInst$, thus when the recorded ballots are published, then the lists $[\ballotPrime]$, $[\pkD]$, and $[H]$ have previously been published.
As $\tr \notin \verdict(\server,\ballot')$ for some ballot $\ballot' \neq \perp$, both disjunct specified in the definition of $\verdict(\server,\ballot')$ for the case $\ballot' \neq \perp$ must be false in $\tr$.
In particular, as the second disjunct is false, all recorded ballots must contain a unique valid signature, i.e., corresponding to a public key that is contained in $[\pkD]$, indicated by Step (2).
Step (3) holds by the definitions of $\verlist$. 

Let $\ballot$ be an arbitrary recorded ballot.
By the above formula, it has the form $\ballot= \{\ballotPrime\}_{\skD}$ for some signing key $\skD$ such that $\pk(\skD) \in [\pkD]$ (thus $\ver(\ballot, \pk(\skD))\neq \perp$) and such that no other recorded ballot is associated with the same key $\skD$.
As the bulletin board $\BB$ is honest by topology $\topoInst$ and as there are no parallel sessions, $\BB$ only publishes the list $[\pkD]$ from its initial knowledge in the signal $\BBpkD$.
By the shared initial knowledge and the setup of the protocol (there are no parallel sessions), for each verification key $\pkD$ in the list $[\pkD]$ from $\BB$'s initial knowledge, there is exactly one device which knows the corresponding signing key $\skD$ and one voter $H$ who is associated with this device. Moreover, the device and voter associated with each verification key are distinct.
Also, the verification key~$\pkD$ is together with the voter $H$ in the list in $\corresponds([\human,\pkD])$, and $H$ is in the list $[\human]$.
Furthermore, these list are correctly published on the bulletin board $\BB$ as $\BB$ is honest by topology $\topoInst$.
From this, all conjuncts in the property's definition (Definition~\ref{def:uniquenessMV}) except for the sends already follow.

We next establish that each $\skD$ is only known to one device $D$ at all times during the execution.
First, by the specified initial knowledge, the signing key is only known to this device at the protocol's start. 
Second, by the topology $\topoInst$ the agent instantiating the device is honest and follows its specification. 
Therefore, $D$ only sends one message, which contains the term $\skD$ as a signature.
By our equational theory, it is not possible to extract a signing key from a signed message. 
Therefore, no agent other than $D$ knows $\skD$ during the protocol execution.

We have established that $\ballot=\{\ballotPrime\}_{\skD}$ is recorded on the bulletin board and that no agent except $D$ knows the term $\skD$. Thus, $D$ must have computed $\ballot$.
As we have just argued that $D$ is honest, $D$ only sends one message to its associated voter $H$ on a secure channel.
Therefore, in all traces where $\ballot$ is recorded on the bulletin board, the unique voter $H$ must have forwarded $\ballot$, that is $H$ must have sent a message containing $\ballot$ to some agent $A$. 
Thus, $\send(H,A,m)$ is in the trace with $m \subterm \ballot$. 
The above reasoning holds for any ballot.
Furthermore, we argued that for each recorded ballot, there is a unique associated signing key and device, i.e., distinct from those associated with the other recorded ballots, and that each device has a unique associated voter.
Therefore, it follows that for any two distinct recorded ballots the required signal $\send$ is recorded with a distinct voter.
\end{proof}

\begin{lemma}\label{lemma:protocolInst_EligibilityVerif}
 \protocolnameInst satisfies \EligibilityVerif when run in the topologies $\advModel{\topoInst}{\hShH}$ and $\advModel{\topoInst}{\mShH}$.
 \end{lemma}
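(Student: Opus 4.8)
The plan is to obtain $\eligibilityVerif$ by fusing the reasoning already carried out for $\TalliedAsRecorded$ (Lemma~\ref{lemma:protocolInst_TalliedAsRecorded}) with that carried out for $\uniqueness(\server)$ (Lemma~\ref{lemma:protocolInst_uniqueness}). Fix one of the topologies $\advModel{\topoInst}{\hShH}$ or $\advModel{\topoInst}{\mShH}$ and a trace $\tr$ in which the auditor's two checks succeed, i.e.\ $\verifyValidSig(A,\exists [\pkDused].\,[\pkDused]\subseteq_l[\pkD]\wedge[\ballotPrime]=\verlist(\bs,[\pkDused]),\true)\in\tr$ and $\verifyProof(A,\verifyzkp(\zkproof,[\ballotPrime],[\vote],\pkS),\true)\in\tr$; also fix $i,j\in\{1,\dots,n\}$. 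First I would replay Steps~(1)--(6) of the proof of Lemma~\ref{lemma:protocolInst_TalliedAsRecorded}: the success of the two checks together with the equation for $\verifyzkp$ and the definition of $\subseteq_l$ yields lists $[\ballot]$, $[\vote]$, $[r]$, $[\skD]$ and a permutation $\pi$ with $[\ballot]=\pi[\{\encp{\vote}{\pkS}{r}\}_{\skD}]$, $[\pk(\skD)]=[\pkDused]$ and $[\pkDused]\subseteq_l[\pkD]$; and, because $\BB$ is honest under $\topoInst$ and there are no parallel sessions, $\BBrecorded([\ballot])\in\tr$, $\BBtallied([\vote])\in\tr$, $\BBpkD([\pkD])\in\tr$, $\BBH([H])\in\tr$ and $\corresponds([H,\pk])\in\tr$. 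Exactly as for Lemmas~\ref{lemma:protocolInst_TalliedAsRecorded} and~\ref{lemma:protocolInst_uniqueness}, none of this uses the honesty of $\server$, which is why the argument will go through for $\advModel{\topoInst}{\mShH}$ as well.

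Second, I would show that the signing keys underlying the recorded ballots are pairwise-distinct keys of eligible voters. Since $\BB$ is honest, $[\pkD]$ is the repetition-free list of all device verification keys fixed by the setup, so $[\pkDused]=[\pk(\skD)]\subseteq_l[\pkD]$ forces the entries of $[\pkDused]$ to be pairwise distinct. As in the proof of Lemma~\ref{lemma:protocolInst_uniqueness}, every $[\skD]_k$ is known throughout the run only to the unique trusted device $D_k$ on which it is installed (this is fixed by the initial knowledge, $D_k$ is honest, and a signing key cannot be extracted from a signature in our equational theory), and by the setup $D_k$ is the device of a unique eligible voter $H_k$, which therefore occurs in $[H]$ and in $\corresponds([H,\pk])$. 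Because the correspondence between eligible voters, their devices and their signing keys is injective, distinct keys determine distinct voters.

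Third, I would attribute each tallied vote to its voter. For each index $k$, the first step gives a recorded ballot of the form $[\ballot]_{\pi(k)}=\{\encp{[\vote]_k}{\pkS}{[r]_k}\}_{[\skD]_k}$ on $\BB$; since only $D_k$ knows $[\skD]_k$, this ballot was produced by $D_k$, and since $D_k$ is honest it only signs an encryption of a vote that it has received from its voter $H_k$ on the channel $(H_k,D_k)$, so $\receive(H_k,D_k,[\vote]_k)\in\tr$. Hence $\send(H_k,A_k,[\vote]_k)\in\tr$ for some $A_k$: when $H_k$ is honest this is forced by the secure channel $(H_k,D_k)$ of $\topoInst$, and when $H_k$ is dishonest it is the send event recorded for the adversary operating on behalf of the compromised $H_k$, just as in the proof of Lemma~\ref{lemma:protocolInst_uniqueness}. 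Instantiating this at the indices of $[\vote]_i$ and $[\vote]_j$, and using $\BBH([H])\in\tr$ and $\BBtallied([\vote])\in\tr$ from the first step together with the distinctness from the second step (if $i\neq j$ the corresponding recorded ballots, hence signing keys, hence devices, hence voters, differ), gives precisely the conjunction in the definition of $\eligibilityVerif$, so $\tr\in\eligibilityVerif$.

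I expect the main obstacle to be bookkeeping rather than any new idea: the permutation $\pi$ relating $[\ballot]$ and $[\vote]$, the index $[\pkDused]$ of signing keys, and the per-ballot device/voter associations must all be lined up so that ``distinct tallied votes'' genuinely propagates down to ``distinct eligible voters''. The honesty and channel arguments themselves --- auditor and $\BB$ honest, devices honest, no parallel sessions, soundness of the shuffle-and-decrypt proof, and the treatment of (possibly dishonest) voters --- are all instances of reasoning already performed in Lemmas~\ref{lemma:protocolInst_TalliedAsRecorded} and~\ref{lemma:protocolInst_uniqueness}.
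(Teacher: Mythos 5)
Your proposal is correct and follows essentially the same route as the paper's proof: the paper likewise imports Steps (1)--(6) of Lemma~\ref{lemma:protocolInst_TalliedAsRecorded} to extract the structural facts from the two auditor checks, invokes $\BB$'s honesty for the remaining published lists, reuses the unique-signing-key/device/voter argument of Lemma~\ref{lemma:protocolInst_uniqueness} to get distinct eligible voters for distinct indices, and closes by using the devices' honesty and the secure channel $(H,D)$ to witness $\send([H]_{i'},A,[\vote]_i)$ via the voter sending the plaintext vote to his device.
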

 
\begin{proof}
Let $\tr$ be a trace in $\TR(\protInst,\advModel{\topoInst}{\hShH})\cup \TR(\protInst,\advModel{\topoInst}{\mShH})$ such that 
$q_1=\verifyValidSig(A,\exists [\pkDused]. [\pkDused]\subseteq_l [\pkD] \wedge 
[\ballotPrime]=\verlist(\bs, [\pkDused]),\true) \in \tr$ and 
$q_2= \verifyProof(A, \verifyzkp(\zkproof,[\ballotPrime],[\vote],\pkS),\true) \in \tr$ and let $i$ and $j$ be two natural numbers in $\{1,\dots,n\}$, where $n$ is the length of the lists $\vs$ and~$\bs$.
\begin{alignat*}{1}
 &q_1 \wedge q_2 \wedge i,j \in\{1,\dots,n\}
 \\\xRightarrow[]{(1)}& 
 \exists [\skD],[r], \pi \sd
 q_1 \wedge q_2 \wedge i,j \in\{1,\dots,n\}
 \\\multicolumn{2}{r}{$
 \wedge [\ballot] = \pi [\{\encp{\vote}{\pkS}{r}\}_{\skD}] 
\wedge [\pk(\skD)]=[\pkDused]
 \wedge [\pkDused] \subseteq_l [\pkD]
 $}
 \\\multicolumn{2}{r}{$
 \wedge \BBrecorded([\ballot])\in \tr 
 \wedge \BBtallied([\vote])\in \tr 
 \wedge \BBpkD([\pkD])\in \tr
 $}
\\\xRightarrow{(2)}&
\exists [H] \sd
q_1 \wedge q_2 \wedge i,j \in\{1,\dots,n\}
 \wedge [\ballot] = \pi [\{\encp{\vote}{\pkS}{r}\}_{\skD}] 
 \\\multicolumn{2}{r}{$
\wedge [\pk(\skD)]=[\pkDused]
 \wedge [\pkDused] \subseteq_l [\pkD]
 \wedge \BBrecorded([\ballot])\in \tr
 $}
 \\\multicolumn{2}{r}{$
 \wedge \BBtallied([\vote])\in \tr 
 \wedge \BBpkD([\pkD])\in \tr
\wedge \BBballotWoSignature([\ballotPrime]) \in \tr 
 $}
 \\\multicolumn{2}{r}{$
\wedge \BBH([H]) \in \tr
 $}
\\\xRightarrow{(3)}&
\exists i', j',P_1,P_2,m_1,m_2 \sd
i, j \in \{1,\dots,n\}
\\\multicolumn{2}{r}{$
\wedge [\ballot] = \pi[\{\encp{\vote}{\pkS}{r}\}_{\skD}] 
\wedge [\pk(\skD)]=[\pkDused]
 \wedge [\pkDused] \subseteq_l [\pkD]
 $}
 \\\multicolumn{2}{r}{$
 \wedge \BBrecorded([\ballot])\in \tr 
 \wedge \BBtallied([\vote])\in \tr 
 \wedge \BBpkD([\pkD])\in \tr
 $}
 \\
 \multicolumn{2}{r}{$ 
\wedge \BBballotWoSignature([\ballotPrime]) \in \tr 
\wedge \BBH([H]) \in \tr
 $}
\\\multicolumn{2}{r}{$ 
 \wedge \send([H]_{i'},P_1,m_1)\in \tr 
 \wedge \send([H]_{j'},P_2,m_2)\in \tr 
 $}
\\\multicolumn{2}{r}{$ 
 \wedge m_1 \subterm [b]_i
 \wedge m_2 \subterm [b]_j 
 \wedge (i \neq j \implies [H]_{i'} \neq [H]_{j'}).
$}
\end{alignat*}
Step (1) holds by the same reasoning as in the proof of Lemma~\ref{lemma:protocolInst_TalliedAsRecorded}.
Next, Step (2) holds as $\BB$ is honest by the topology $\topoInst$, thus when the signals $\BBtallied$ and $\BBrecorded$ are recorded in the trace, then the signals $\BBballotWoSignature$ and $\BBH$ have also been recorded.
Step (3) holds by the reasoning in the proof of Lemma~\ref{lemma:protocolInst_uniqueness}, where we derived from Step (3) that $\uniqueness(\server)$ holds.
As for each vote $v$ in $[v]$ there is a ballot $b$ in $[b]$ such that $\ballot =\{\encp{\vote}{\pkS}{r}\}_{{\skD}}$, it follows that each such $\send$-signal in Step (3) contains a unique vote from the final tally.

As we have argued in the proof of Lemma~\ref{lemma:protocolInst_uniqueness} that only each voter's device can compute such a ballot with the valid verification key, for each voter $H$ to send a message containing the ballot, $H$ must have previously learned it from his device $D$. 
Moreover, as each device $D$ is honest by the topology $\topoInst$, $D$ only computes the ballot $\ballot =\{\encp{\vote}{\pkS}{r}\}_{{\skD}}$ when it has previously received $v$ from its associated voter $H$ on a secure channel.
Thus, for each voter $H$ to send a ballot of the form $\ballot =\{\encp{\vote}{\pkS}{r}\}_{{\skD}}$, $H$ must have previously sent the corresponding vote $\vote$ to his device. 
In particular, as there are no parallel sessions and $D$ only receives and sends one message, no message confusion is possible.
We conclude that each (distinct) vote in the final tally must have been sent by a (distinct) voter to his device.
\end{proof}

\begin{lemma}\label{lemma:protocolInst_provoterAbstain}
 \protocolnameInst satisfies $\provoterAbstain(\server)$ when run in the topologies $\advModel{\topoInst}{\hShH}$ and $\advModel{\topoInst}{\mShH}$.
\end{lemma}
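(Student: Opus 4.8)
The plan is to derive this lemma as a direct corollary of Theorem~\ref{theorem:uniquenessImpliesDR} together with Lemma~\ref{lemma:protocolInst_uniqueness}, rather than proving it from scratch. First I would check that \protocolnameInst meets the two structural hypotheses of Theorem~\ref{theorem:uniquenessImpliesDR}: (i) it is a protocol in our class \emph{without re-voting}, and (ii) a voter who abstains does not send any message. Both follow from the protocol specification: $\server$ records each verification key in $[\pkDused]$ and rejects a ballot whose key is already present, so each voter contributes at most one recorded ballot; and the abstaining-voter role, as described below Figure~\ref{fig:protAlethea}, consists solely of receiving the list of recorded ballots from $\BB$ and emitting $\verifyNoVote(H,\emptyset)$, with no send events. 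I would also note that \protocolnameInst lies in our voting protocol class and that $\topoInst$ (hence each of its three trust variants) is a topology in our class, so Theorem~\ref{theorem:uniquenessImpliesDR} is applicable.

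Next I would invoke Lemma~\ref{lemma:protocolInst_uniqueness}, which establishes that every trace in $\TR(\protInst,\advModel{\topoInst}{\hShH})\cup\TR(\protInst,\advModel{\topoInst}{\mShH})$ lies in $\uniqueness(\server)$. Fixing an arbitrary trace $\tr$ in one of these two trace sets, Lemma~\ref{lemma:protocolInst_uniqueness} gives $\tr \in \uniqueness(\server)$, and then Theorem~\ref{theorem:uniquenessImpliesDR}, applied trace-wise, yields $\tr \in \provoterAbstain(\server)$. Since $\tr$ was arbitrary, it follows that $\TR(\protInst,\advModel{\topoInst}{\hShH}) \subseteq \provoterAbstain(\server)$ and $\TR(\protInst,\advModel{\topoInst}{\mShH}) \subseteq \provoterAbstain(\server)$, which is exactly the claim (and matches how a dispute resolution property requires $\provoterAbstain(\server)$ to hold for the honest-voter adversary models, cf.\ Definition~\ref{def:useful}).

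The only step requiring any real care is the verification of hypothesis (ii) of Theorem~\ref{theorem:uniquenessImpliesDR} in its precise form: the theorem needs that an abstaining voter sends \emph{nothing at all}, not merely that he sends no ballot. I expect this to be immediate from the abstaining-voter role specification, but I would state it explicitly, since if that role included, say, some acknowledgement message the argument of Theorem~\ref{theorem:uniquenessImpliesDR} (which relies on $\neg\exists m,A\sd \send(H,A,m)\in\tr$ for the abstaining $H$) would break. I would also remark, for completeness, that $\provoterAbstain(\server)$ is vacuous on traces with no $\verifyNoVote$ signal and that in \protocolnameInst the signal $\verifyNoVote(H,b_H)$ is only recorded with $b_H=\emptyset$ for honest abstaining voters, so the reduction to $\uniqueness(\server)$ covers all relevant traces. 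No new calculation beyond these bookkeeping checks is needed.
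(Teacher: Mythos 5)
Your proposal is correct and follows exactly the paper's own argument: the paper likewise derives this lemma by combining Lemma~\ref{lemma:protocolInst_uniqueness} with Theorem~\ref{theorem:uniquenessImpliesDR}, after noting that \protocolnameInst has no re-voting and that an abstaining voter sends no messages. Your additional bookkeeping (checking the hypotheses in detail) is sound but does not change the route.
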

\begin{proof}
Let $\tr$ be a trace in $\TR(\protInst,\advModel{\topoInst}{\hShH})\cup \TR(\protInst,\advModel{\topoInst}{\mShH})$.
As all traces in $\TR(\protInst,\advModel{\topoInst}{\hShH})\cup \TR(\protInst,\advModel{\topoInst}{\mShH})$ satisfy $\uniqueness(\server)$ by Lemma~\ref{lemma:protocolInst_uniqueness} and as the preconditions of Theorem~\ref{theorem:uniquenessImpliesDR} hold (i.e., the protocol does not allow re-voting and a voter who abstains does not send any messages), we use Theorem~\ref{theorem:uniquenessImpliesDR} to infer that $\tr$ satisfies $\provoterAbstain(\server)$.
\end{proof}

\end{document}